\newcommand{\tabincell}[2]{\begin{tabular}{@{}#1@{}}#2\end{tabular}}
\newtheorem{theorem}{Theorem}
\newtheorem{corollary}{Corollary}
\newtheorem{lemma}{Lemma}
\newcommand{\tablerows}{}
	\xappto{\tablerows}{\abbr & \fullname \\}%
\begin{document}

	\title{Rydberg Atomic Quantum MIMO Receivers\\ for The Multi-User Uplink
	}

	\author{Tierui Gong,~\IEEEmembership{Member,~IEEE}, 
		Chau Yuen,~\IEEEmembership{Fellow,~IEEE}, 
		Chong Meng Samson See,~\IEEEmembership{Member,~IEEE},\\
		Mérouane Debbah,~\IEEEmembership{Fellow,~IEEE},
		Lajos Hanzo,~\IEEEmembership{Life Fellow,~IEEE}
		\vspace{-0.6cm}
		\thanks{Part of the work has been published in ICC 2025 \cite{Gong2025RAQ_MIMO_conf}.
			T. Gong and C. Yuen are with School of Electrical and Electronics Engineering, Nanyang Technological University, Singapore 639798 (e-mail: trgTerry1113@gmail.com, chau.yuen@ntu.edu.sg). C. M. S. See is with DSO National Laboratories, Singapore 118225 (e-mail: schongme@dso.org.sg). M. Debbah is with KU 6G Research Center, Khalifa University, Abu Dhabi 127788, UAE (e-mail: merouane.debbah@ku.ac.ae). L. Hanzo is with School of Electronics and Computer Science, University of Southampton, SO17 1BJ Southampton, U.K. (e-mail: lh@ecs.soton.ac.uk).
			}
		\vspace{-0.6cm}
	}


	\maketitle

	\begin{abstract}
		Rydberg atomic quantum receivers (RAQRs) have emerged as a promising solution for evolving wireless receivers from the classical to the quantum domain. To further unleash their great potential in wireless communications, we propose a flexible architecture for Rydberg atomic quantum multiple-input multiple-output (RAQ-MIMO) receivers in the multi-user uplink. Then the corresponding signal model of the RAQ-MIMO system is constructed by paving the way from quantum physics to classical wireless communications. Explicitly, we outline the associated operating principles and transmission flow. We also validate the linearity of our model and its feasible region. Based on our model, we derive closed-form asymptotic formulas for the ergodic achievable rate (EAR) of both the maximum-ratio combining (MRC) and zero-forcing (ZF) receivers operating in uncorrelated fading channels (UFC) and the correlated fading channels (CFC), as well as in the standard quantum limit (SQL) and photon shot limit (PSL) regimes, respectively. Furthermore, we unveil that the EAR scales logarithmically without bound with the product of effective number $N_{\text{atom}}$ and coherence time $T_2$ of the atomic ensemble in the SQL regime, but exhibits non-monotonic trade-off between the collective atomic enhancement and optical-depth-dependent attenuation in the PSL regime. More particularly, the transmit power of users can be scaled down quadratically with $N_{\text{atom}} \tau$, $\tau \in \{ T_2, \frac{ {\cal C} (\Omega_{\ell}) }{A_p} \}$, but the EAR per user retains fixed, by increasing $N_{\text{atom}}$ while retaining the sensor number $M \propto N_{\text{atom}} \tau$ in the SQL regime or $M \propto \exp \big( \frac{N_{\text{atom}} {\bar \chi}}{A_p} \big)$ in the PSL regime. 
		We also quantify the superiority of RAQ-MIMO receivers over the classical massive MIMO receivers, specifying an increase of $\log_{2} \Pi$ of the EAR per user, $\Pi$-fold reduction of the users' transmit power, and $\sqrt[\nu]{\Pi}$-fold increase of the transmission distance, respectively, where $\Pi$ $\triangleq$ $\textit{ReceiverGainRatio} / \textit{NoisePowerRatio}$ of the single-sensor receiver and $\nu$ is the path-loss exponent. Lastly, numerical simulations validate our theoretical results and reveal that the RAQ-MIMO scheme can either realize $12$ bits/s/Hz/user ($8$ bits/s/Hz/user) higher EAR, or $10000$-fold ($500$-fold) lower transmit power, or alternatively $100$-fold ($21$-fold) longer distance in free-space transmissions, in the SQL (PSL) regime, compared to classical massive MIMO receivers. 
	\end{abstract}
	

	\begin{IEEEkeywords}
	Rydberg atomic quantum MIMO (RAQ-MIMO) receiver, uncorrelated and correlated fading channels, ergodic achievable rate and power scaling law, maximum-ratio combining (MRC) and zero-forcing (ZF), standard quantum limit (SQL) and photon shot limit (PSL)
	\vspace{-0.2cm}
	\end{IEEEkeywords}

	\section{Introduction}
	
	The year 2025 has been proclaimed as the International Year of Quantum Science and Technology (IYQ) by the United Nations. The IYQ declaration recognizes the significant progress of quantum information science and technology (QIST) during the past century since the initial development of quantum mechanics. It is also expected to be a spring-board for QIST research. 
	Among the numerous hot topics, quantum sensing, quantum communications, and quantum computing constitute three main pillars \cite{hanzo2025quantum}. More particularly, quantum sensing relies on quantum phenomena to realize the measurements of a physical quantity at an unprecedented accuracy \cite{degen2017quantum}. By exploiting specific quantum sensing principles, a multitude of quantum sensors have been invented for various applications, such as the quantum electrometer, quantum magnetometer, quantum accelerometer, quantum gyroscope, quantum gravimeter, and quantum clock. 
	On the same note, the novel concept of Rydberg atomic quantum receivers (RAQRs) exhibits great potential in terms of detecting radio-frequency (RF) signals in wireless communication and sensing \cite{gong2024RAQRs, gong2024RAQRModel_Journal,gong2025rydberg}.

	By harnessing the compelling properties of Rydberg atoms, diverse choices of electron transitions, and beneficial atom-light interaction phenomena, RAQRs exhibit ultra-high sensitivity, broadband tunability, narrowband selectivity, International System of Units (SI)-traceability, enhanced instantaneous bandwidth, and ultra-wide input power range \cite{gong2024RAQRs}. They also allow a direct passband-to-baseband conversion without employing sophisticated integrated circuits. Additionally, the receiver sensor of RAQRs is in the form of an optical region consisting of Rydberg atoms, allowing them to achieve an ultra-high scalability. For example, it is convenient to implement a large-scale optical array by relying on parallel laser beams. 
	Furthermore, as a benefit of their totally different operating principle from those of classical antennas, RAQRs are less susceptible to mutual coupling, have a sensor size independent of the wavelength, and experience a wider angular reception range. 
	The above distinctive features of RAQRs have the promise of revolutionizing classical RF receivers, while supporting the data rate, connectivity, and latency specifications of next-generation wireless systems \cite{recommendation2023framework}.

	To date, the investigations of RAQRs have mainly emerged from the physics society through experimental verifications with respect to their enhanced sensitivity, various functionalities, and novel architectures. 
	Specifically, 
	the sensitivity has been experimentally demonstrated to be on the order of $\sim$ \textmu V/cm/$\sqrt{\text{Hz}}$ using a standard structure \cite{sedlacek2012microwave, fan2015atom}. It has then also been further improved to be on the order of $\sim$ nV/cm/$\sqrt{\text{Hz}}$ using a superheterodyne structure \cite{jing2020atomic,Tu2024Approaching}. By further extending the receiver from a single sensor to multiple sensors upon harnessing an optimal laser array, the authors of \cite{wu2024enhancing} demonstrated that it is capable of achieving a sensitivity of $< 1$ nV/cm/$\sqrt{\text{Hz}}$. 
	The sensitivity limit of RAQRs is governed by the standard quantum limit (SQL). Additionally, by exploiting the above-mentioned laser array scheme, accurate angular direction detection is realizable \cite{robinson2021determining,mao2023digital,richardson2024study}. Furthermore, RAQRs are also capable of detecting the amplitude, phase, polarization, modulation, and the continuous-band or multiband nature of signals \cite{schlossberger2024rydberg,zhang2024rydberg,yuan2023quantum,Fancher2021Rydberg}. Lastly, recently several novel schemes emerged by exploiting specific physical phenomena, such as the many-body interaction \cite{ding2022enhanced}, stochastic resonance \cite{wu2024nonlinearity}, as well as Schrodinger cat and squeezed states \cite{yuan2023quantum} for approaching or even surpassing the SQL.

	Based on the above advances predominantly attained by the physics society, communication-oriented applications started to arise. Specifically, \cite{gong2024RAQRs} presented a comprehensive overview of RAQRs conceived for classical wireless communications and sensing. Explicitly, both Rydberg atomic quantum single-input single-output (RAQ-SISO) and Rydberg atomic quantum multiple-input multiple-output (RAQ-MIMO) schemes were conceived, and their promising potential was unveiled. Additionally, reference \cite{gong2024RAQRModel_Journal} conceived a reception scheme and constructed its corresponding signal model for wireless communication and sensing. By obeying the realistic operating principle and transmission flow of RAQRs, this study closed the knowledge gap between the physics and communication communities, unveiling their superiority over classical RF receivers, and paved the way for the design of upper-layer applications. Furthermore, reference \cite{gong2025rydberg} proposed a Rydberg atomic quantum uniform linear array for multi-target direction-of-arrival estimation, where the sensor gain mismatch problem was solved and a significant increase in estimation accuracy was attained by proposing a novel RAQ-ESPRIT algorithm. There are also other emerging applications, such as those discussed in  \cite{otto2021data,zhang2023quantum,cui2025towards,yuan2024electromagnetic,chen2025harnessing}, aiming for harnessing RAQRs for both spatial displacement detection and wireless communications. 

	Focusing briefly on the family of massive MIMO (M-MIMO) schemes \cite{Marzetta2010Noncooperative,Larsson2014Massive,Yang2015Fifty,lim2015performance} which constitute a key enabler for fifth-generation (5G) wireless networks, they have evolved further to reconfigurable intelligent surfaces \cite{huang2019reconfigurable,di2020smart,xu2023reconfiguring}, holographic MIMO arrangements \cite{Gong2023Holographic,Gong2024HMIMO,Gong2024Near} and extremely large-scale MIMO (XL-MIMO) systems \cite{Lu2024Tutorial,wang2023extremely,Yajun2024Nearfield}. The latter ones constitute promising candidates for sixth-generation (6G) networks. All these technologies rely on the deployment of metallic antenna arrays to receive the desired signals, which determines the fundamental performance limits of these classical MIMO technologies. Specifically, the classical antenna arrays face limited sensitivity and inherent physical limitations, governed by the Chu–Harrington and Hannan limits \cite{wei2024electromagnetic}, where the antenna size, bandwidth, radiation efficiency, gain, and reception angle are intricately coupled to each other. But again, this restricts the performance of these conventional antenna arrays and exacerbates their design challenges, especially for high-frequency, large-scale antenna arrays.

	To circumvent the bottleneck faced by classical MIMO receivers and further unlock the potential of RAQRs in wireless communications, we conceive and investigate RAQ-MIMO receivers.  
	Specifically, our contributions are as follows

	\begin{itemize}
		\item 
		We propose a flexible architecture for the RAQ-MIMO receiver capable of simultaneously receiving multi-user signals in the uplink of communication systems. This flexible scheme has the capability to handle a range of different RF wavelengths spanning from low to high frequencies, as well as supporting practical implementations having either compact or large apertures. Furthermore, we construct the corresponding signal model for such RAQ-MIMO systems by building a bridge between quantum physics and wireless communications by relying on realistic operating principles and practical transmission flow. More particularly, the linearity and feasible region of our signal model are also investigated.

		\item 
		We derive closed-form asymptotic results for the ergodic achievable rate (EAR) of RAQ-MIMO systems relying on maximum-ratio combining (MRC) and zero-forcing (ZF) receivers operating in the SQL and photon shot limit (PSL) regimes, as well as in the face of uncorrelated fading channels (UFC) and correlated fading channels (CFC), respectively, leveraging random matrix theory. Our results reveal that the EAR scaling behaviour of the RAQ-MIMO receiver is cooperatively determined by both the macroscopic array dimension and the microscopic quantum/optical-related parameters. 

		\item 
		We study the scaling behavior of the RAQ-MIMO operating in the SQL and PSL regimes, respectively. 
		Specifically, the EAR scales logarithmically without bound with the product of effective number $N_{\text{atom}}$ and coherence time $T_2$ of the atomic ensemble in the SQL regime, namely $\propto$ $\log_2(N_{\text{atom}} T_2)$. By contrast, the EAR exhibits non-monotonic trade-off between the collective atomic enhancement $\big[ \frac{ N_{\text{atom}} {\cal C} (\Omega_{\ell}) }{A_p} \big]^{2}$ and optical-depth-dependent attenuation $\exp \big(- \frac{N_{\text{atom}} {\bar \chi}}{A_p} \big)$ in the PSL regime, which is maximally achieved at an optimal number of $N_{\text{atom}}^{\star} = \frac{2 A_p}{{\bar \chi}}$. Additionally, the transmit power of users scales quadratically with $\frac{1}{N_{\text{atom}} \tau}$, $\tau \in \{ T_2, \frac{ {\cal C} (\Omega_{\ell}) }{A_p} \}$, namely ${\cal P}_{s} =$ $\frac{{\cal E}}{ (N_{\text{atom}} \tau)^{2} }$, but the EAR per user retains fixed, by increasing $N_{\text{atom}}$ while retaining the sensor number $M \propto N_{\text{atom}} \tau$ in the SQL regime or $M \propto \exp \big( \frac{N_{\text{atom}} {\bar \chi}}{A_p} \big)$ in the PSL regime.

		\item 
		We prove that the MRC RAQ-MIMO receiver operating in the UFC scenario has an extra EAR of $\Delta R_{1}$ $=$ $\log_{2} \frac{ 2 \ln(M) }{\pi \varpi}$ over its CFC counterpart as $M \to \infty$, when the CFC is characterized by Jakes' model. By contrast, the EAR difference of the ZF RAQ-MIMO receiver becomes negligible in both scenarios. We also show analytically that as $M \to \infty$, the ZF RAQ-MIMO has an extra EAR of $\Delta R_{2} = \log_{2} \left( 1 + \frac{ \mathsf{SNR}_{1,k} }{\beta_{k}} \sum\nolimits_{i = 1, \ne k}^K {{\beta _i}} \right)$ over the MRC RAQ-MIMO in the UFC scenario, which becomes $\Delta R_{1} + \Delta R_{2}$ in the CFC scenario.
		

		\item 
		We compare the RAQ-MIMO to classical M-MIMO receivers in terms of their EAR, transmit power, and transmission distance. Specifically, compared to the classical counterparts, the MRC RAQ-MIMO receiver approaches an identical EAR as ${\cal P}_{s} \to \infty$, while the ZF RAQ-MIMO receiver has an extra EAR of $\log_{2} \Pi$ per user, where $\Pi$ $=$ $\textit{ReceiverGainRatio} / \textit{NoisePowerRatio}$ represents the SNR ratio of single-sensor receivers. Furthermore, the RAQ-MIMO receiver allows the transmit power of users to be $\Pi$-fold lower than those of M-MIMO systems at identical EARs. Viewed from yet another different perspective, the RAQ-MIMO receiver expands the transmission distance by a factor of $\sqrt[\nu]{ \Pi }$ compared to that of classical M-MIMO receivers, when having the same transmit power. 
		
	\end{itemize}

	\textit{Organization and Notations}: 
	The rest of the article is organized as follows: In Section \ref{sec:AtomicReceiverModel}, we propose a flexible RAQ-MIMO receiver and construct its signal model. In Section \ref{sec:EAR}, we provide asymptotic analysis of the EAR of the multi-user uplink RAQ-MIMO. Then we provide our simulation results in Section \ref{sec:Simulations}, next present a list of discussions in Section \ref{sec:Discussions}, and finally conclude in Section \ref{sec:Conclusions}. 
	We use the following notations: $\jmath^2 = 1$; $\frac{\mathrm{d} \bm{\rho}}{\mathrm{d} t}$ is the differential of $\bm{\rho}$ with respect to time; $\comm{\bm{H}}{\bm{\rho}} = \bm{H}\bm{\rho} - \bm{\rho}\bm{H}$ represents the commutator; $\left\{ \bm{\varGamma}, \bm{\rho} \right\} = \bm{\varGamma}\bm{\rho} + \bm{\rho}\bm{\varGamma}$ stands for the anticommutator; $\mathscr{R} \{ \cdot \}$ and $\mathscr{I} \{ \cdot \}$ take the real and imaginary parts of a complex number; ${\rm{diag}}\{ \cdot \}$ is the diagonal matrix; $\chi' (\Omega_{\ell,m})$ represents the derivative of $\chi (\Omega_{{\rm{RF}},m})$ when $\Omega_{{\rm{RF}},m} = \Omega_{\ell,m}$; $\hbar$ is the reduced Planck constant; $c$ and $\epsilon_0$ are the speed of light in free space and the vacuum permittivity; $\eta_0$ and $\eta$ are the antenna efficiency and the quantum efficiency; $q$ and $a_0$ are the elementary charge and Bohr radius; 
	$\mu_{i-1,i}$ is the dipole moment of the $\ket{i-1}$\textrightarrow $\ket{i}$ transition, $i=2, 3, 4$; 
	$\odot$ represents the Hadamard product; If $\bm{S}$ has a Wishart distribution of $M$ degrees-of-freedom and the scaling matrix obeys $\bm{V} \in \mathbb{C}^{K \times K}$, then we have $\bm{S} \sim \mathcal{W}_{K}(M, \bm{V})$.

	\vspace{-1em}
	\section{The Proposed RAQ-MIMO Scheme and Constructed Signal Model}
	\label{sec:AtomicReceiverModel}
	
	In this section, we propose a RAQ-MIMO aided multi-user uplink system by first detailing its system composition and configuration, and by then investigating its signal model from an equivalent baseband signal perspective.

	\vspace{-1em}
	\subsection{System Description of RAQ-MIMO}
	\label{SubSection:Description}
	
	\subsubsection{Array Structure}
	In our RAQ-MIMO system, we exploit Cesium (Cs) atoms for receiving $K$ simultaneous RF signals having the same carrier frequency of $f_c$ from $K$ users. We then construct a flexible RAQ-MIMO receiver architecture, as portrayed in Fig. \ref{fig:RAQ-MIMO}(a). Specifically, we employ $M_1 \ge 1$ vapor cells, where each vapor cell filled up with Cs atoms is spatially penetrated by $M_2 \ge 1$ pairs of counter-propagating laser (probe and coupling) beams to form a total of $M = M_1 M_2$ receiver sensors. More particularly, we assume that these $M$ sensors form a uniform linear array (ULA) having a spacing of $d \le \frac{\lambda}{2}$, \footnote{We note that this condition is an algorithmic requirement for arrays, so that different directional cosines can be physically distinguished \cite[Ch. 7]{Tse2005Fundamentals}.}, 
	where $\lambda$ is the wavelength of the $K$ RF signals.  
	Based on this spacing, we note that our RAQ-MIMO structure is flexible enough for realizing arrays operating both at low and high frequencies. Specifically, we have the following two special cases, as discussed in \cite{gong2024RAQRs}: (i) For low-frequency signals having longer wavelengths, our RAQ-MIMO receiver reduces to a vapor cell array, where each vapor cell encompasses a receiver sensor, namely we have $M_2 = 1$ and $M = M_1$. (ii) For high-frequency signals having shorter wavelengths, our RAQ-MIMO receiver becomes a beam array within a single vapor cell, where we have $M_1 = 1$ and $M = M_2$.

	\begin{figure}[t!]
		\centering
		\includegraphics[width=0.45\textwidth]{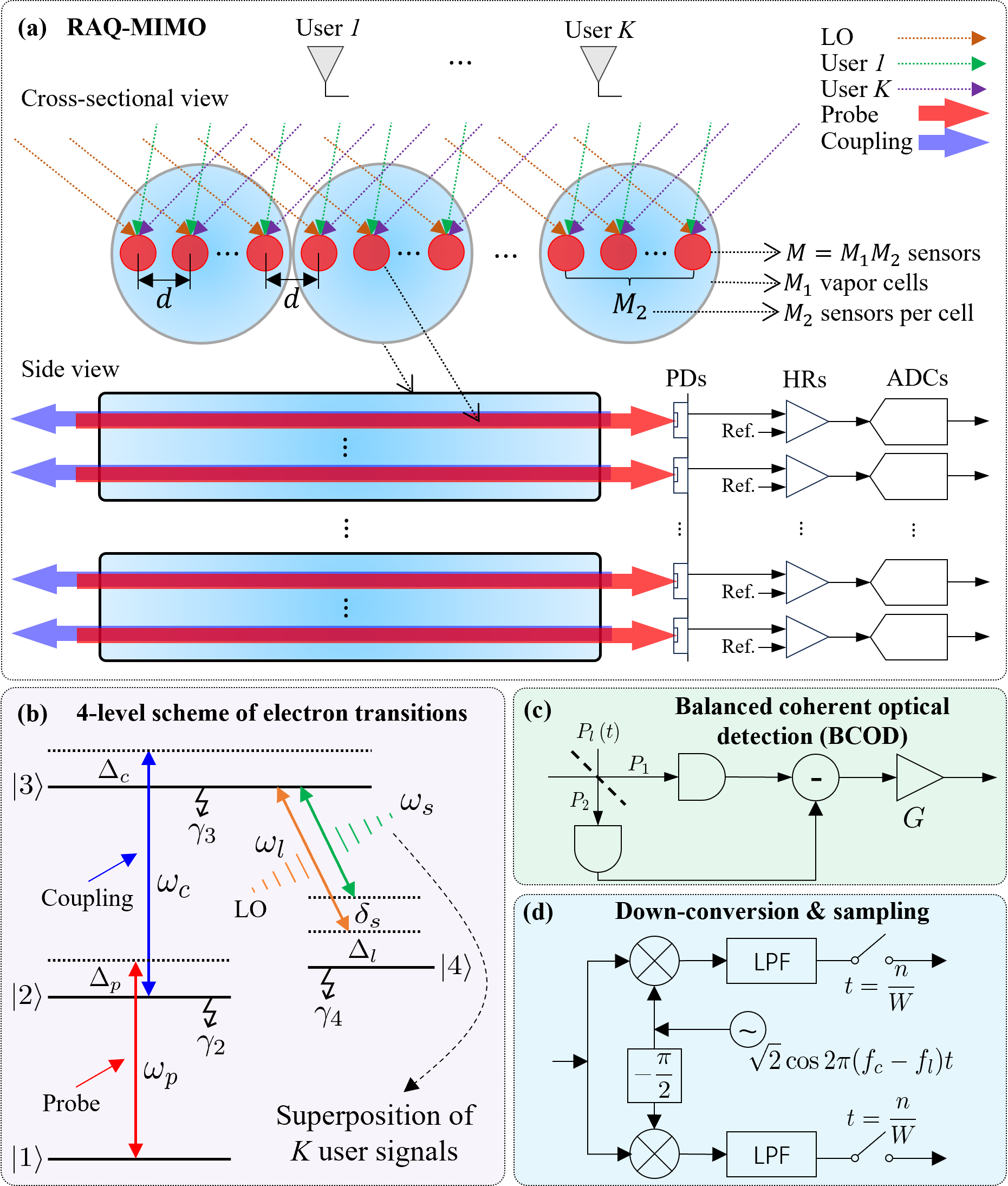}
		\caption{(a) The superheterodyne structure of RAQ-MIMO, (b) the four-level scheme of electron transitions, (c) the balanced coherent optical detection (BCOD) scheme, and (d) the down-conversion and sampling by homodyne receivers (HRs) and the analog-to-digital converters (ADCs).}
		\vspace{-1.6em}
		\label{fig:RAQ-MIMO}
	\end{figure}

	\subsubsection{Dedicated Chain}
	In each receiver sensor, Cs atoms are excited to a Rydberg state to absorb all $K$-user signals, thereby triggering another corresponding Rydberg state. These RF absorptions affect the laser beams obeying a specific \emph{RF-atomic-optical transformation} relationship, as detailed in Section \ref{SubSection:TransferFunction}. In order to capture both the amplitudes and phases of all $K$-user signals, we apply the superheterodyne philosophy for our receiver. 
	Specifically, a superheterodyne RAQ-MIMO receiver exploits the quantum response of Rydberg atoms to mix the locally applied RF field of the local oscillator (LO), with the incident multi-user signals, intrinsically generating an intermediate (beat) frequency within the atomic medium via the electromagnetically induced transparency (EIT) and Rydberg-level coherence. This beat frequency is mapped onto the probe transmission, then detected by a photodetector (PD), down-converted and sampled by a HR and ADC, respectively, allowing a precise amplitude and phase recovery \cite{jing2020atomic,gong2024RAQRModel_Journal}.

	\subsubsection{Electron Transition}
	\label{subsub:Transition}
	Each Rydberg atom undergoes a four-level ladder-type scheme, as seen in Fig. \ref{fig:RAQ-MIMO}(b). Briefly, the so-called ground state $\ket{1}$, immediate excited state $\ket{2}$, and Rydberg states $\ket{3}$, $\ket{4}$ are involved in this scheme, where $\omega_{i-1,i}$ is the resonant angular frequency between $\ket{i-1}$ and $\ket{i}$, and each energy level has its own decay rate $\gamma_{i}$, $i = 2, 3, 4$. These energy levels are coupled by the probe beam, coupling beam, and the superposition of the LO and $K$-user signals, resonant or near-resonant, respectively. When the laser beams and LO are resonant, their so-called detuning frequencies $\Delta_{p,c,\ell}$ are zero. By contrast, when the laser beams and LO are near-resonant, their detuning frequencies $\Delta_{p,c,\ell}$ do exist. Explicitly, the detuning frequency represents a small frequency difference between the coupling frequency required for the electron transition of two energy levels and the actual frequency of the external electromagnetic wave to enable this transition. Specifically for the $m$-th sensor, we have $\Delta_{p,m} = \omega_{12} - \omega_{p,m}$, $\Delta_{c,m} = \omega_{23} - \omega_{c,m}$, and $\Delta_{\ell,m} = \omega_{34} - \omega_{\ell,m}$, where $\omega_{p,m}$, $\omega_{c,m}$ and $\omega_{\ell,m}$ represent the angular frequencies of the probe beam, coupling beam, and LO, respectively.

	\vspace{-1em}
	\subsection{Multi-User Signals to be Detected}
	
	To proceed, we assume that the user signals are plane waves. We then express the passband signal of the $k$-th user at the $m$-th receiver sensor as 
	\begin{align}
		\nonumber
		X_{m, k} (t) 
		&= \sqrt{2\mathcal{P}_{x, m, k}} \cos ( 2 \pi f_c t + \theta_{x, m, k} ) \\
		\label{eq:RFsignal}
		&= \sqrt{2} \mathscr{R} \left\{ x_{m, k} (t) \exp( \jmath 2 \pi f_c t ) \right\}, 
	\end{align}
	where $\mathcal{P}_{x, m, k}$ and $\theta_{x, m, k}$ are the received signal power and phase of the $k$-th user at the $m$-th sensor, respectively. Specifically, we have $\mathcal{P}_{x, m, k} = \frac{c \epsilon_0 A_e}{2} \left| U_{x, m, k} \right|^{2}$, where $U_{x, m, k}$ is the amplitude of $X_{m, k} (t)$ and $A_e$ is the effective aperture of the $m$-th sensor.  
	Furthermore, $x_{m, k} (t) = \sqrt{\mathcal{P}_{x, m, k}} \exp( \jmath \theta_{x, m, k} )$ is the equivalent baseband signal. Note that $U_{x, m, k}$, $\theta_{x, m, k}$ and their related variables are time-invariant during a symbol period, we thus neglect their time index. 
	Likewise, we assume that the LO signal is plane-wave and formulate the corresponding channel output at the $m$-th sensor of the RAQ-MIMO by 
	\begin{align}
		\nonumber
		Y_{m} (t) 
		&= \sqrt{2\mathcal{P}_{\ell, m}} \cos ( 2 \pi f_{\ell} t + \theta_{\ell, m} ) \\
		\label{eq:LOsignal}
		&= \sqrt{2} \mathscr{R} \left\{ y_{m} (t) \exp( \jmath 2 \pi f_{\ell} t ) \right\}, 
	\end{align}
	where $\mathcal{P}_{\ell, m} = \frac{1}{2} c \epsilon_0 A_{e} \left| U_{\ell, m} \right|^{2}$ is the power of $Y_{m} (t)$ with $U_{\ell, m}$ being its amplitude, and $y_{m} (t) = \sqrt{\mathcal{P}_{\ell, m}} \exp( \jmath \theta_{\ell, m} )$ is the equivalent baseband signal. Since the LO is pre-designed, we can shape $U_{\ell, m}$ and $\theta_{\ell, m}$ into any configuration. 
	
	The RF signal to be detected is a superimposition of signals impinging from the $K$ users and the LO. We first define $f_{\delta} = f_c - f_{\ell}$ and $\theta_{\delta, m, k} = \theta_{x, m, k} - \theta_{\ell, m}$ as the frequency difference and phase difference between the $k$-th user and the LO at the $m$-th receiver sensor. Then we obtain the superimposed signal received at the $m$-th sensor of the RAQ-MIMO as follows 
	\begin{align}
		\nonumber
		&Z_{m} (t) 
		= Y_{m} (t) + \sum\nolimits_{k=1}^{K} X_{m, k} (t) \\
		\nonumber
		&= \sqrt{2} \mathscr{R} \left\{ \left[ y_{m} (t) + \sum_{k=1}^{K} x_{m, k} (t) \exp( \jmath 2 \pi f_{\delta} t ) \right] \exp( \jmath 2 \pi f_{\ell} t ) \right\} \\
		\label{eq:PassbandRFplusLO}
		&\qquad \;\; \approx \sqrt{2\mathcal{P}_{z, m}} \cos( 2 \pi f_{\ell} t + \theta_{\ell, m} ),
	\end{align}
	where $\mathcal{P}_{z,m} = \frac{1}{2} c \epsilon_0 A_{e} \left| U_{z,m} \right|^{2}$ denotes the power of $Z_{m} (t)$ with its amplitude $U_{z, m}$ given by \eqref{eq:Amp_PassbandRFplusLO} at the top of this page. 
	\begin{figure*}[t!]
		\begin{align}
			\label{eq:Amp_PassbandRFplusLO}
			U_{z, m} 
			= \sqrt{ U_{\ell, m}^{2} + \sum_{k=1}^{K} U_{x, m, k}^{2} + 2 U_{\ell, m} \sum_{k=1}^{K} U_{x, m, k} \cos( 2 \pi f_{\delta} t + \theta_{\delta, m, k} ) } 
			\approx U_{\ell, m} + \sum_{k=1}^{K} U_{x, m, k} \cos( 2 \pi f_{\delta} t + \theta_{\delta, m, k} ).
		\end{align}
		\hrulefill
		\vspace{-0.5cm}
	\end{figure*}
	The equivalent baseband signal of $Z_{m}(t)$ is thus $z_{m} (t) = \sqrt{\mathcal{P}_{z, m}} \exp( \jmath \theta_{\ell, m} )$. See Appendix \ref{Appendix:RFsuperpositionProof} for proofs of \eqref{eq:PassbandRFplusLO}, \eqref{eq:Amp_PassbandRFplusLO}.

	\vspace{-1em}
	\subsection{Quantum Evolution and Steady-State Solution}
	\label{SubSection:AtomicTransitionModel}

	The superimposed RF signal together with the laser beams jointly influence the quantum evolution of Rydberg atoms through their Rabi frequencies when coupling the four-level scheme aforementioned in Section \ref{subsub:Transition}. Furthermore, let us denote the Rabi frequency of the superimposed RF signal, probe beam, and coupling beam by $\Omega_{{\rm{RF}},m}$, $\Omega_{p,m}$, and $\Omega_{c,m}$, respectively, for the $m$-th receiver sensor. Upon exploiting the relationship between the Rabi frequency and signal amplitude via $\Omega = \frac{ \mu_{34} }{ \hbar } U =  \frac{ \mu_{34} }{ \hbar } \sqrt{ \frac{2\mathcal{P}}{A_e c \epsilon_0} }$, we can associate $U_{z,m}$, $U_{\ell,m}$, $U_{x,m,k}$ in \eqref{eq:Amp_PassbandRFplusLO} with their Rabi frequencies of $\Omega_{{\rm{RF}}, m}$, $\Omega_{\ell,m}$, $\Omega_{x,m,k}$, respectively, and obtain the following expression 
	\begin{align}
		\label{eq:RFRabiFrequency} 
		\Omega_{{\rm{RF}}, m} \approx \Omega_{\ell,m} + \sum_{k=1}^{K} \Omega_{x,m,k} \cos{(2 \pi f_{\delta} t + \theta_{\delta, k,m})},  
	\end{align}
	where we have $\Omega_{\ell,m} \gg \sum_{k=1}^{K} \Omega_{x,m,k}$. 
	Based on this, the quantum evolution of the four-level electron transition of Fig. \ref{fig:RAQ-MIMO}(b) is characterized by the Lindblad master equation \cite{auzinsh2010optically}
	\begin{align}
		\label{eq:Lindbladian}
		\frac{\mathrm{d} \bm{\rho}_{m}}{\mathrm{d} t} 
		= - \jmath \comm{\bm{\mathcal{H}}_{m}}{\bm{\rho}_{m}} - \frac{1}{2} \left\{ \bm{\varGamma}_{m}, \bm{\rho}_{m} \right\} + \bm{\varLambda}_{m},
	\end{align}
	where $\bm{\mathcal{H}}_{m}$, $\bm{\varGamma}_{m}$ and $\bm{\varLambda}_{m}$ are the Hamiltonian, relaxation matrix, and decay matrix for Rydberg atoms in the $m$-th receiver sensor, respectively. They are given by 
	\begin{align}
		\label{eq:Hamiltonian}
		\bm{\mathcal{H}}_{m} 
		= \left[ {\begin{array}{*{20}{c}}
				0 & {\frac{{{\Omega_{p, m}}}}{2}} & 0 & 0\\
				{\frac{{{\Omega_{p, m}}}}{2}} & {{\Delta_{p, m}}} & {\frac{{{\Omega_{c, m}}}}{2}} & 0\\
				0 & {\frac{{{\Omega_{c, m}}}}{2}} & \sum\limits_{u = p,c} \Delta_{u, m} & {\frac{{{\Omega_{{\rm{RF}},m}}}}{2}}\\
				0 & 0 & {\frac{{{\Omega_{{\rm{RF}},m}}}}{2}} & \sum\limits_{u = p,c,\ell} \Delta_{u, m}
		\end{array}} \right],
	\end{align}
	$\bm{\varGamma}_{m} = \rm{diag} \{ \gamma, \gamma + \gamma_2, \gamma + \gamma_3 + \gamma_c, \gamma + \gamma_4 \}$, and $\bm{\varLambda}_{m} = \rm{diag} \{ \gamma + \gamma_2 \rho_{22} + \gamma_4 \rho_{44}, \gamma_3 \rho_{33}, 0, 0 \}$, where $\gamma$ and $\gamma_c$ represent the relaxation rates related to the atomic transition effect and collision effect, respectively. 
	
	We can obtain an analytical expression of the steady-state solution of the density matrix $\bm{\rho}_{m}$ with certain assumptions, where we assume $\gamma = \gamma_c = 0$ and $\gamma_3 = \gamma_4 = 0$ as the decay rates of $\ket{3}$ and $\ket{4}$ are comparatively small and can be reasonably ignored.
	Specifically, the element $\rho_{21, m}$ of the matrix $\bm{\rho}_{m}$ is relevant to the probe beam that is directly linked to the measurements. 
	This relationship can be achieved by first connecting $\rho_{21, m}$ to the susceptibility of a single Rydberg atom via $\chi_{m} (\Omega_{{\rm{RF}},m}) = - \frac{ \varsigma }{ \Omega_{p,m} } \rho_{21,m}(\Omega_{{\rm{RF}},m})$, which is obtained in the steady state as follows
	\begin{align}
		\nonumber
		\chi_{m} (\Omega_{{\rm{RF}},m}) 
		&= - \varsigma \frac{{{A_{1,m}}\Omega_{{\rm{RF}},m}^4 + {A_{2,m}}\Omega_{{\rm{RF}},m}^2 + {A_{3,m}}}}{{{C_{1,m}}\Omega_{{\rm{RF}},m}^4 + {C_{2,m}}\Omega_{{\rm{RF}},m}^2 + {C_{3,m}}}} \\
		\label{eq:Susceptibility}
		&\;+ \jmath \varsigma \frac{{{B_{1,m}}\Omega_{{\rm{RF}},m}^4 + {B_{2,m}}\Omega_{{\rm{RF}},m}^2 + {B_{3,m}}}}{{{C_{1,m}}\Omega_{{\rm{RF}},m}^4 + {C_{2,m}}\Omega_{{\rm{RF}},m}^2 + {C_{3,m}}}}, 
	\end{align} 
	where $\varsigma \triangleq \frac{2 \mu_{12}^{2}}{\epsilon_0 \hbar }$; $A_{1,m}$, $A_{2,m}$, $A_{3,m}$, $B_{1,m}$, $B_{2,m}$, $B_{3,m}$, $C_{1,m}$, $C_{2,m}$, $C_{3,m}$ are given in \textcolor{red}{(50)-(58)} of Appendix A in \cite{gong2024RAQRModel_Journal}.

	\vspace{-1em}
	\subsection{RF-Atomic-Optical Transformation of RAQ-MIMO}
	\label{SubSection:TransferFunction}
	
	Let us denote the amplitude, frequency, and initial phase of the $m$-th probe beam at the access area of the vapor cell, respectively, by $U_{0,m}$, $f_{p}$, and $\phi_{0,m}$. After propagating through the vapor cell, the amplitude and phase of the probe beam are affected by the Rydberg atoms at the output area of the vapor cell. 
	Upon denoting the amplitude and the phase of the $m$-th output probe beam, respectively, by $U_{p, m} (\Omega_{{\rm{RF}},m})$ and $\phi_{p, m} (\Omega_{{\rm{RF}},m})$, they can be associated with their input counterparts obeying \cite{gong2024RAQRModel_Journal}
	\begin{align}
		\label{eq:AmplitudeRelation}
		U_{p, m} (\Omega_{{\rm{RF}},m}) 
		&= U_{0,m} \exp( - \tfrac{k_p L N_0}{2} \mathscr{I} \left\{ \chi_{m} (\Omega_{{\rm{RF}},m}) \right\} ), \\
		\label{eq:PhaseRelation}
		\phi_{p, m} (\Omega_{{\rm{RF}},m})  
		&= \phi_{0,m} + \tfrac{k_p L N_0}{2} \mathscr{R} \left\{ \chi_{m} (\Omega_{{\rm{RF}},m}) \right\},
	\end{align}
	where $k_p$ is the wavenumber of the probe beam, $L$ is the length of the vapor cell, and $N_0$ denotes the effective atomic density within the vapor cell. Based on \eqref{eq:AmplitudeRelation} and \eqref{eq:PhaseRelation}, the probe beam at the output of the vapor cell is obtained as 
	\begin{align}
		\nonumber
		\hspace{-0.8em} P_{m} (\Omega_{{\rm{RF}},m}, t) 
		&= \sqrt{ 2\mathcal{P}_{m} (\Omega_{{\rm{RF}},m}) } \cos \left[ 2 \pi f_{p} t + \phi_{p,m} (\Omega_{{\rm{RF}},m}) \right] \\
		\label{eq:OutputProbeSig}
		&\hspace{-0.2em} = \sqrt{2} \mathscr{R} \left\{ P_{b,m} (\Omega_{{\rm{RF}},m}, t) \exp( \jmath 2 \pi f_p t ) \right\}, 
	\end{align}
	where $P_{b,m} (\Omega_{\rm{RF}}, t) \triangleq \sqrt{\mathcal{P}_{m} (\Omega_{{\rm{RF}},m})} \exp( \jmath \phi_p (\Omega_{{\rm{RF}},m}) )$ is the equivalent baseband signal of this output (passband) probe beam, having a full width at half maximum (FWHM) of $F_{p}$ and a power of $\mathcal{P}_{m} (\Omega_{{\rm{RF}},m}) = \frac{\pi c \epsilon_0}{8 \ln {2}} F_p^2 \left| U_{p,m} (\Omega_{{\rm{RF}},m}) \right|^{2}$. 
	Upon denoting the power of the input probe beam by $\mathcal{P}_{0,m} = \frac{\pi c \epsilon_0}{8 \ln {2}} F_p^2 \left| U_{0,m} \right|^{2}$, we reformulate $\mathcal{P}_{m} (\Omega_{{\rm{RF}},m})$ as follows 
	\begin{align}
		\label{eq:OutputProbePower}
		\hspace{-0.5em} \mathcal{P}_{m} (\Omega_{{\rm{RF}},m}) 
		= \mathcal{P}_{0,m} \exp \big( - k_p L N_0 \mathscr{I} \left\{ \chi_{m} (\Omega_{{\rm{RF}},m}) \right\} \big). 
	\end{align} 
	\vspace{-2.0em}

	\subsection{Photodetection of RAQ-MIMO}
	\label{subsec:RAQS}
	
	Furthermore, the $m$-th probe beam will be detected by the $m$-th PD. For each PD, we employ the balanced coherent optical detection (BCOD) scheme as a benefit of its excellent performance (PSL achievable) and its popularity in physical experiments, as shown in Fig. \ref{fig:RAQ-MIMO}(c) and studied in \cite{gong2024RAQRModel_Journal}. In this scheme, $M$ local optical beam (LOB) sources exist, where each LOB source is subtracted and added to the $m$-th probe beam \eqref{eq:OutputProbeSig} to form a pair of mixed optical signals, respectively. These mixed laser beams are then detected by two independent PDs, followed by a substractor and a low noise amplifier (LNA) (gain is $G_{\text{pd}}$) in order to output the final signal.

	Let us assume that all $M$ LOB sources are Gaussian beams and denote the $m$-th LOB signal as 
	\begin{align}
		\nonumber
		P_{m}^{\text{(lob)}} (t) 
		&= \sqrt{2\mathcal{P}_{m}^{\text{(lob)}}} \cos ( 2 \pi f_p t + \phi_{m}^{\text{(lob)}} ) \\
		\label{eq:LOB} 
		&= \sqrt{2} \mathscr{R} \left\{ P_{b,m}^{\text{(lob)}} (t) \exp( \jmath 2 \pi f_p t ) \right\},  
	\end{align}
	where $P_{b,m}^{\text{(lob)}} (t) = \sqrt{\mathcal{P}_{m}^{\text{(lob)}}} \exp( \jmath \phi_{m}^{\text{(lob)}} )$ represents the equivalent baseband signal  with power $\mathcal{P}_{m}^{\text{(lob)}} = \frac{\pi c \epsilon_0}{8 \ln {2}}  F_p^2 \left| U_{m}^{\text{(lob)}} \right|^{2}$ and amplitude $U_{m}^{\text{(lob)}}$. Upon defining PD's responsivity $\alpha \triangleq \frac{\eta q}{\hbar \omega_{p}}$, we obtain the output of the $m$-th BCOD  via $V_{m}(t) = \alpha \sqrt{ G_{\text{pd}} }$ $[ {P_{b,m}^{\text{(lob)}}}\left( t \right) {P_{b,m}^*}\left( {{\Omega_{{\rm{RF}},m}},t} \right) + P_{b,m}^{\text{(lob)}*}(t) P_{b,m} \left( {{\Omega_{{\rm{RF}},m}},t} \right) ]$, which is furthermore obtained in \eqref{eq:PhotodetectorOutputBCOD}, as seen at the top of the this page, where the approximation is obtained through the first-order Taylor series expansion. 
	\begin{figure*}[t!]
		\begin{align}
			\nonumber
			\hspace{-0.6em} 
			V_{m}(t) &= 2\alpha \sqrt{G_{\text{pd}} \mathcal{P}_{m}^{\text{(lob)}} \mathcal{P}_{m} (\Omega_{{\rm{RF}},m})}
			\cos \left( {\phi_{m}^{\text{(lob)}}} - {\phi_{p,m}}({\Omega_{{\rm{RF}},m}}) \right) \\ 
			\label{eq:PhotodetectorOutputBCOD} 
			&\hspace{-0.8em} \approx 2\alpha \sqrt{G_{\text{pd}} \mathcal{P}_{m}^{\text{(lob)}} \mathcal{P}_{m} (\Omega_{\ell,m})} 
			\left[ \cos \left( {\phi_{m}^{\text{(lob)}}} - {\phi_{p,m}}({\Omega_{\ell,m}}) \right) -  \kappa_{m}({\Omega_{\ell,m}}) \cos {\varphi_{m}({\Omega_{\ell,m}})} \sum_{k=1}^{K} U_{x,m,k} \cos \left( {2\pi {f_\delta }t + {\theta_{\delta,m,k} }} \right) \right] 
		\end{align}
		\hrulefill
		\vspace{-0.5cm}
	\end{figure*}
	Furthermore, $\kappa_{m} (\Omega_{\ell,m})$ and $\varphi_{m}({\Omega_{\ell,m}})$ in \eqref{eq:PhotodetectorOutputBCOD} are expressed as follows 
	\begin{align}
		\nonumber
		&\kappa_{m} (\Omega_{\ell,m}) = L {N_0} \mathcal{C}_m (\Omega_{\ell,m}) = N_{\text{atom}} \tfrac{ {\cal C} (\Omega_{\ell,m}) }{A_p}, \\
		\nonumber
		&{\varphi_{m}({\Omega_{\ell,m}})} = {\phi_{m}^{\text{(lob)}}} - {\phi_{p,m}}({\Omega_{\ell,m}}) + \arccos \left[ \tfrac{ \varsigma_1 {\mathscr I}\{ \chi_{m}^{\prime} ({\Omega_{\ell,m}})\} }{ \mathcal{C}_m (\Omega_{\ell,m}) } \right], \\
		\nonumber
		&\mathcal{C}_m (\Omega_{\ell,m}) = \varsigma_1 \sqrt{ \big| {\mathscr I} \left\{ \chi_{m}^{\prime} ({\Omega_{\ell,m}}) \right\} \big|^{2} + \big| {\mathscr R} \left\{ \chi_{m}^{\prime} ({\Omega_{\ell,m}}) \right\} \big|^{2} }, 
	\end{align}
	where $\varsigma_1 \triangleq \frac{ \mu_{34} }{ 2 \hbar } k_p$, $N_{\text{atom}}$ is the effective number of Rydberg atoms in one sensor, and $A_p$ is the cross-sectional area of the sensor. Notably, we have $N_{\text{atom}} = N_0 V$, where $V$ is the volume of a sensor containing Rydberg atoms. Upon assuming that each sensor is a cylinder with radius $r_0 = F_p/\sqrt{2 \ln{2}}$, we have $A_p = \pi r_0^2 = \frac{\pi}{2 \ln{2}} F_p^2$ and $V = A_p L$.

	We note that $\kappa_{m} (\Omega_{\ell,m})$ is the \emph{collective atomic quantum responsivity} of the $m$-th sensor, and $\arccos \left[ \frac{ \varsigma_1 {\mathscr I}\{ \chi_{m}^{\prime} ({\Omega_{\ell,m}})\} }{ \mathcal{C}_m (\Omega_{\ell,m}) } \right]$ is an extra phase (introduced by the BCOD scheme) to the probe beam on the basis of ${\phi_{p,m}}({\Omega_{\ell,m}})$ characterized in \eqref{eq:PhaseRelation}. Notably, the larger the value of $\kappa_{m} (\Omega_{\ell,m})$, the higher the sensitivity. 
	In expressions of ${\varphi_{m}({\Omega_{\ell,m}})}$, $\mathcal{C}_m (\Omega_{\ell,m})$, we have 
	\begin{align}
		\nonumber
		&\mathscr{R} \left\{ \chi_{m}^{\prime} \left( \Omega_{\ell,m} \right) \right\} 
		= - 2 \varsigma {\Omega_{\ell,m}} \left[ \frac{ {2{A_{1,m}}\Omega_{\ell,m}^2 + {A_{2,m}}} }{{{C_{1,m}}\Omega_{\ell,m}^4 + {C_{2,m}}\Omega_{\ell,m}^2 + {C_{3,m}}}} \right. \\
		\nonumber
		&\quad \left. - \frac{ \big( {{A_{1,m}}\Omega_{\ell,m}^4 + {A_{2,m}}\Omega_{\ell,m}^2 + {A_{3,m}}} \big) \big( {2{C_{1,m}}\Omega_{\ell,m}^2 + {C_{2,m}}} \big) }{ \big( {{C_{1,m}}\Omega_{\ell,m}^4 + {C_{2,m}}\Omega_{\ell,m}^2 + {C_{3,m}}} \big)^2 } \right], \\
		\nonumber
		&\mathscr{I} \left\{ \chi_{m}^{\prime} \left( \Omega_{\ell,m} \right) \right\}
		= 2 \varsigma {\Omega_{\ell,m}} \left[ \frac{ {2{B_{1,m}}\Omega_{\ell,m}^2 + {B_{2,m}}} }{{{C_{1,m}}\Omega_{\ell,m}^4 + {C_{2,m}}\Omega_{\ell,m}^2 + {C_{3,m}}}} \right. \\
		\nonumber
		&\quad \left. - \frac{ \big( {{B_{1,m}}\Omega_{\ell,m}^4 + {B_{2,m}}\Omega_{\ell,m}^2 + {B_{3,m}}} \big) \big( {2{C_{1,m}}\Omega_{\ell,m}^2 + {C_{2,m}}} \big) }{ \big( {{C_{1,m}}\Omega_{\ell,m}^4 + {C_{2,m}}\Omega_{\ell,m}^2 + {C_{3,m}}} \big)^2 } \right].
	\end{align}

	\vspace{-1.0em}
	\subsection{Down-Conversion and Sampling}
	We now employ $M$ HRs for down-converting \eqref{eq:PhotodetectorOutputBCOD} in a one-to-one manner, as seen in Fig. \ref{fig:RAQ-MIMO}(d). Specifically, the $m$-th HR will remove the first (constant) term of $V_{m}(t)$, while retaining the second (varying) term, yielding 
	\begin{align}
		\nonumber
		\tilde{V}_{m}(t) 
		&= 2\alpha \sqrt{G_{\text{pd}} \mathcal{P}_{m}^{\text{(lob)}} \mathcal{P}_{m} (\Omega_{\ell,m})} 
		\cos {\varphi_{m}({\Omega_{\ell,m}})} \\
		\nonumber
		&\times \kappa_{m}({\Omega_{\ell,m}}) \sum_{k=1}^{K} U_{x,m,k} \cos \left( {2\pi {f_\delta }t + {\theta_{\delta,m,k} }} \right) \\
		\label{eq:PhotodetectorOutput_Varying}
		&= \sqrt{2} \mathscr{R} \left\{ v_{m}(t) \exp( \jmath 2\pi f_{\delta} t ) \right\}.   
	\end{align}
	By exploiting the relationship $U_{x,m,k} = \sqrt{ \frac{ 2 \mathcal{P}_{x,m,k} }{A_e c \epsilon_0} }$, we obtain the equivalent baseband signal as 
	\begin{align}
		\nonumber
		v_{m}(t) 
		&= 2\alpha \sqrt{G_{\text{pd}} \mathcal{P}_{m}^{\text{(lob)}} \mathcal{P}_{m} (\Omega_{\ell,m})} 
		\cos {\varphi_{m}({\Omega_{\ell,m}})} \\
		\nonumber
		&\times \kappa_{m}({\Omega_{\ell,m}}) \sum\nolimits_{k=1}^{K} U_{x,m,k} \exp( \jmath \theta_{x, m, k} - \jmath \theta_{\ell, m} ) \\
		 \label{eq:Baseband_Varying}
		&\triangleq \sqrt{ \frac{\varrho_{m}}{A_e} } \varPhi_{m} \sum_{k=1}^{K} x_{m,k}(t), 
	\end{align}
	where $\varrho_{m}$ and $\varPhi_{m}$ denote the gain and phase shift corresponding to the $m$-th receiver sensor, respectively, given by
	\begin{align}
		\label{eq:Gain}
		\varrho_{m} &= \frac{ 4 \alpha^2 }{ c \epsilon_0 } G_{\text{pd}} \mathcal{P}_{m}^{\text{(lob)}} \mathcal{P}_{m} (\Omega_{\ell,m}) \kappa_{m}^{2}({\Omega_{\ell,m}}), \\
		\nonumber
		\varPhi_{m} &= \frac{1}{2} \exp( - \jmath \left[ {\theta_{\ell,m}} - \varphi_{m} ({\Omega_{\ell,m}}) \right] ) \\
		\label{eq:Phase}
		&\qquad + \frac{1}{2} \exp( - \jmath \left[ {\theta_{\ell,m}} + \varphi_{m} ({\Omega_{\ell,m}}) \right] ). 
	\end{align}
	
	%

	Sampling both sides of \eqref{eq:Baseband_Varying} at multiples of the sampling rate, we arrive at the sampled output as follows
	\begin{align}
		\label{eq:SampledOutput}
		v_{m} (n) 
		&= \sqrt{ \frac{\varrho_{m}}{A_e} } \varPhi_{m} \sum_{k=1}^{K} x_{m,k}(n). 
	\end{align}

	\vspace{-1.0em}
	\subsection{Signal Model of Multi-User RAQ-MIMO}
	
	To further construct an end-to-end signal model from multi-user signals to $v_{m} (n)$, we consider a narrowband transmission/sensing action and express $x_{m,k}(n)$ in the form of $x_{m,k}(n) = \sqrt{A_e} h_{m,k}(n) s_{k}(n)$ \cite{gong2024RAQRModel_Journal}, where $h_{m,k}(n)$ represents the wireless channel between the $k$-th user and the $m$-th receiver sensor, and $s_{k}(n)$ is the signal transmitted by the $k$-th user. By incorporating the noise, we reformulate \eqref{eq:SampledOutput} as 
	\begin{align}
		\label{eq:SampledOutput_E2E}
		v_{m} (n) 
		&= \sqrt{ \varrho_{m} } \varPhi_{m} \sum_{k=1}^{K} h_{m,k}(n) s_{k}(n) + w_{m}(n), 
	\end{align}
	where we have $w_{m} \sim \mathcal{CN} (0, \sigma^2)$ and $\sigma^2 = \frac{ N_{\rm{QPN}} + N_{\rm{PSN}} + N_{\rm{ITN}} }{2}$ is the sum of the quantum projection noise (QPN) power, of the photon shot noise (PSN) power, and of the intrinsic thermal noise (ITN) power \cite{gong2024RAQRModel_Journal}. 
	Upon expressing \eqref{eq:SampledOutput_E2E} in matrix form by collecting all $M$ measurements, we arrive at  
	\begin{align}
		\label{eq:SignalModel_MatrixForm}
		\bm{v} = \bm{\varTheta} \bm{H} \bm{s} + \bm{w},
	\end{align}
	where $\bm{\varTheta} = {\rm{diag}} \{ \sqrt{\varrho_1} \varPhi_1, \sqrt{\varrho_2} \varPhi_2, \cdots, \sqrt{\varrho_M} \varPhi_M \} \in \mathbb{C}^{M \times M}$, $\bm{H} \in \mathbb{C}^{M \times K}$ denotes the channel matrix, $\bm{s} \in \mathbb{C}^{K \times 1}$ is the transmit signal vector, and $\bm{w} \in \mathbb{C}^{M \times 1}$ represents the complex additive white Gaussian noise (AWGN) vector. 
	
	We emphasize that all $M$ gains $\varrho_{m}$, $m=1, \cdots, M$ may be different in realistic implementations due to having different practical impairments that influence ${\cal P}_{m}^{\text{(lob)}}$, $\mathcal{P}_{0,m}$, ${\cal P}_{m} (\Omega_{\ell,m})$, ${\kappa_{m}}({\Omega_{\ell,m}})$ and $\varphi_{m} (\Omega_{\ell,m})$ for all $M$ sensors. By contrast, these gains could be theoretically identical, if we employ ${\cal P}_{m}^{\text{(lob)}} \triangleq {\cal P}_{\text{lob}}$, $\phi_{m}^{\text{(lob)}} \triangleq \phi_{\text{lob}}$, $\mathcal{P}_{0,m} \triangleq \mathcal{P}_{0}$ for all $M$ sensors, as well as assume same $L$ and $N_0$ for all vapor cells. Furthermore, we theoretically have ${\cal P}_{m} (\Omega_{\ell,m}) \triangleq {\cal P} (\Omega_{\ell})$, ${\kappa_{m}}({\Omega_{\ell,m}}) \triangleq {\kappa}({\Omega_{\ell}})$, $\varphi_{m} (\Omega_{\ell,m}) \triangleq \varphi (\Omega_{\ell})$, and $\mathcal{C}_m (\Omega_{\ell,m}) \triangleq \mathcal{C} (\Omega_{\ell})$ for all $M$ sensors. 
	More particularly, $\Omega_{{\ell,m}} \triangleq \Omega_{\ell}$ requires that the LO imposes identical signal strength on all $M$ sensors. This can be realized by implementing the LO via the parallel-plate based method \cite{simons2019embedding, arumugam2024remote}, where the LO signal is a plane wave (as seen in Fig. \ref{fig:RAQ-MIMO}(a)) and the signal strength is identical everywhere within the parallel-plate. Based on all the above configurations, \eqref{eq:Gain} becomes  
	\begin{align}
		\label{eq:GainBCOD_refoumulate}
		\varrho_{m} = \frac{ 4 \alpha^2 }{ c \epsilon_0 } G_{\text{pd}} \mathcal{P}_{\text{lob}} \mathcal{P} (\Omega_{\ell}) \kappa^{2}({\Omega_{\ell}}) \triangleq \varrho. 
	\end{align}

	Furthermore, let us denote the angle of arrival of the LO signal by  $\vartheta$. Upon defining the phase of the LO signal at the first (reference) receiver sensor as $\theta_{\ell,1}$, we obtain the phase of the LO signal at the $m$-th sensor as $\theta_{\ell,m} = \theta_{\ell,1} + \frac{2\pi}{\lambda} (m-1) d \sin \vartheta$. 
	Therefore, \eqref{eq:Phase} becomes 
	\begin{align}
		\label{eq:PhaseBCOD_refoumulate}
		\varPhi_{m} = \varPhi \exp( - \jmath \frac{2\pi}{\lambda} (m-1) d \sin \vartheta ), 
	\end{align}
	where $\varPhi \triangleq \frac{\exp( - \jmath \left[ {\theta_{\ell,1}} - \varphi ({\Omega_{\ell}}) \right] )}{2} + \frac{\exp( - \jmath \left[ {\theta_{\ell,1}} + \varphi ({\Omega_{\ell}}) \right] )}{2}$ represents the phase shift of the first (reference) receiver sensor. 
	Based on \eqref{eq:GainBCOD_refoumulate} and \eqref{eq:PhaseBCOD_refoumulate}, we reformulate $\bm{\varTheta} = \sqrt{\varrho} \varPhi \bm{Q}$, where $\bm{Q} = {\rm{diag}} \{ 1, \exp( - \jmath \frac{2\pi}{\lambda} d \sin \vartheta ), \cdots, \exp( - \jmath \frac{2\pi}{\lambda} (M-1) d \sin \vartheta ) \}$.  Therefore, we can reformulate \eqref{eq:SignalModel_MatrixForm} as 
	\begin{align}
		\label{eq:SignalModel_MatrixForm_SC}
		\bm{v} = \sqrt{ \varrho } \varPhi \bm{Q} \bm{H} \bm{s} + \bm{w}. 
	\end{align}

	\textit{\textbf{Remark 1}: The baseband expressions \eqref{eq:SignalModel_MatrixForm} and \eqref{eq:SignalModel_MatrixForm_SC} constitute a quantum-to-classical transformation framework, encapsulating quantum-optical parameters, e.g., collective atomic quantum responsivity $\kappa_{m}({\Omega_{l,m}})$ and atom-related phase $\varphi_{m} ({\Omega_{l,m}})$, while offering a classical interface that is convenient for adapting classical signal processing approaches to RAQ-MIMO systems. Although this might resemble a classical form, the physical process, noise statistics, and scaling laws of the RAQ-MIMO receiver differ fundamentally from classical antenna arrays, as discussed in Section \ref{subsec:SQLandPSL} and \ref{subsubsec:Insights}.}
	
	\textit{\textbf{Remark 2}: We consider narrowband RAQ-MIMO receivers, where Rydberg atoms can reach their steady-state quantum response within each symbol period, allowing RAQ-MIMO receivers to behave as an approximately channel-independent linear transformation, as modeled in \eqref{eq:SampledOutput_E2E}, \eqref{eq:SignalModel_MatrixForm}, and \eqref{eq:SignalModel_MatrixForm_SC}. By contrast, the wideband operations require to model non-steady-state quantum dynamics and may introduce coupled channel–atom effects that are left for future work.}

	\section{Ergodic Achievable Rate of RAQ-MIMO}
	\label{sec:EAR}
	
	We assume that the RAQ-MIMO receiver has perfect channel state information (CSI). The linear receiver combining matrix used for recovering the transmit signal is denoted by $\bm{C}$.
	Then the $k$-th user's signal estimate is given by 
	\begin{align}
		\label{eq:ReceivedSignal_k}
		r_{k} 
		&= \bm{c}_{k}^{*} \bm{\varTheta} \bm{h}_{k} s_{k} + \sum_{i=1, \ne k}^{K} \bm{c}_{k}^{*} \bm{\varTheta} \bm{h}_{i} s_{i} + \bm{c}_{k}^{*} \bm{w}. 
	\end{align}
	Thus we can formulate the EAR and SINR of the $k$-th user as 
	\begin{align}
		\label{eq:AchievableRate_MRC}
		R_{k} 
		&= \mathbb{E} \left\{ \log_{2} \left( 1 + \mathsf{SINR}_{k} \right) \right\}, \\
		\label{eq:Gamma_MRC}
		\mathsf{SINR}_{k}
		&= \frac{ {\cal P}_{s} \left| \bm{c}_{k}^{*} \bm{\varTheta} \bm{h}_{k} \right|^{2} }{ {\cal P}_{s} \sum_{i=1, \ne k}^{K} \left| \bm{c}_{k}^{*} \bm{\varTheta} \bm{h}_{i} \right|^{2} + \sigma^{2} \left\| \bm{c}_{k} \right\|^{2} }. 
	\end{align}

	With the aid of perfect CSI, we can employ a pair of typical linear combining receivers, namely the MRC and ZF schemes 
	\begin{align}
		\label{eq:CombiningMatrix}
		\bm{C}
		&= \begin{cases}
			\bm{\varTheta} \bm{H} & {\rm{MRC}}, \\
			\bm{\varTheta} \bm{H} \left( \bm{H}^{*} \bm{\varTheta}^{*} \bm{\varTheta} \bm{H} \right)^{-1} & {\rm{ZF}}. 
		\end{cases}
	\end{align}
	
	Furthermore, we assume that the multi-user RAQ-MIMO channel is in the form of $\bm{H} = \bm{R}^{\frac{1}{2}} \bm{G} \bm{T}^{\frac{1}{2}} \bm{D}^{\frac{1}{2}}$, where $\bm{R} \in \mathbb{C}^{M \times M}$ and $\bm{T} \in \mathbb{C}^{K \times K}$ represent the receiver-side and transmitter-side correlation matrices, $\bm{G} \in \mathbb{C}^{M \times K}$ has entries obeying $g_{mk} \sim \mathcal{CN}(0, 1)$, and $\bm{D} \in \mathbb{C}^{K \times K}$ is a diagonal matrix having the large-scale fading coefficient $\beta_{k}$ as its diagonal elements. Since the users are spatially-separated by a long distance, we can ignore their correlations, namely we have $\bm{T} = \bm{I}_{K}$.
	In the following, we investigate the UFC and the CFC characterized by the Weichselberger model \cite{weichselberger2006stochastic}, 
	\begin{align}
		\label{eq:ChannelModel}
		\bm{H} 
		&= \begin{cases}
			\bm{G} \bm{D}^{\frac{1}{2}} & \rm{UFC}, \\
			\bm{U} \left( \bm{\varSigma} \odot \bm{G} \right) \bm{D}^{\frac{1}{2}} & \rm{CFC}. 
		\end{cases}
	\end{align}
	In \eqref{eq:ChannelModel}, $\bm{U} \in \mathbb{C}^{M \times M}$ is the unitary matrix of the eigenvalue decomposition of the receiver-side correlation matrix, hence we have $\bm{R} = \bm{U} \bm{\varLambda} \bm{U}^{*}$. Upon denoting the corresponding eigenvalue vector by $\bm{\lambda}^{\frac{1}{2}} = {\rm{diag}} \{ \bm{\varLambda}^{\frac{1}{2}} \} \in \mathbb{C}^{M \times 1}$ and defining $\bm{1}_{K} = \{ 1, 1, \cdots, 1 \}^{T} \in \mathbb{C}^{K \times 1}$, we have $\bm{\varSigma} = \bm{\lambda}^{\frac{1}{2}} \bm{1}_{K}^{T}$. 
	The entries of $\bm{R}$ can be constructed by employing Jakes' model, namely $[\bm{R}]_{m_1,m_2} = J_0 (\varpi |m_1-m_2|)$, where $J_0 (\cdot)$ is the zero-order Bessel function of the first kind \cite{jakes1994microwave}.


	\vspace{-1.0em}
	\subsection{Asymptotic Analysis of the EAR}
	\label{subsec:LB}

	\subsubsection{MRC}
	First, we apply the MRC scheme for RAQ-MIMO in the form of \eqref{eq:SignalModel_MatrixForm_SC}. Hence the expectation of the SINR \eqref{eq:Gamma_MRC} is expressed in \eqref{eq:Exp_gamma_MRC}, as seen at the top of the next page, where we have employed $\varrho \varPhi^{*} \varPhi \bm{Q}^{*} \bm{Q} = \varrho {{\cos }^2}\varphi ({\Omega_{\ell}}) \bm{I}_{M}$ and defined $\mathsf{SNR}_{1,k} \triangleq {\varrho} \frac{ {\cal P}_{s}}{\sigma^2} {\beta_k}$. More particularly, $\mathsf{SNR}_{1,k} {{\cos }^2}\varphi ({\Omega_{\ell}})$ in \eqref{eq:Exp_gamma_MRC} represents the received SNR corresponding to each RAQ-MIMO receiver sensor and the $k$-th user, without experiencing any small-scale fading and any inter-user interference (IUI). It is obtained as $\mathsf{SNR}_{1,k} {{\cos }^2}\varphi ({\Omega_{\ell}}) = \frac{{\varrho} {{\cos }^2}\varphi ({\Omega_{\ell}}) {\cal P}_{s}}{\sigma^2} {\beta_k} = \frac{{\varrho} \varPhi^{*} \varPhi {\cal P}_{s}}{\sigma^2} {\beta_k}$.
	For both the UFC and CFC scenarios, we have the following Lemmas. 
	\begin{figure*}
		\begin{align}
			\label{eq:Exp_gamma_MRC}
			&\hspace{-0.6em} \mathbb{E} \left\{ \mathsf{SINR}_{k} \right\} 
			= \frac{ \mathsf{SNR}_{1,k} \cos^{2} \varphi (\Omega_{\ell}) }{\beta_{k}}  
			\left[ 1 + \frac{ \mathsf{SNR}_{1,k} \cos^{2} \varphi (\Omega_{\ell}) }{\beta_{k}} \sum\limits_{i = 1, \ne k\hfill}^K \mathbb{E} \left\{ {{\left( \frac{{{\bm{h}}_k^*} }{ \left\| {\bm{h}}_k \right\| } {{\bm{h}}_i} \right)}^*} \left( \frac{ {{\bm{h}}_k^*} }{ \left\| {\bm{h}}_k \right\| } {{\bm{h}}_i} \right) \right\} \right]^{-1} 
			\left[ \mathbb{E} \left\{ {\frac{1}{ \left\| {\bm{h}}_k \right\|^{2} }} \right\} \right]^{-1}
		\end{align}
		\hrulefill
		\vspace{-0.5cm}
	\end{figure*}

		\begin{lemma}
			\label{lemma1}
			Considering the UFC of \eqref{eq:ChannelModel} and assuming a perfect CSI, we derive the EAR of the MRC RAQ-MIMO for the $k$-th user as follows 
			\begin{equation}
				\label{eq:AchievableRate_MRC_Asymptotic_UFC}
				\hspace{-1.5em} R_{k}^{\rm{(mrc,ufc)}} \ge {\log _2} \left[ 
				\begin{split}
					1 + \frac{ (M-1) \mathsf{SNR}_{1,k} \cos^{2} \varphi (\Omega_{\ell}) }{ 1 + \frac{ \mathsf{SNR}_{1,k} \cos^{2} \varphi (\Omega_{\ell}) }{\beta_{k}} \sum_{i = 1, \ne k}^K {\beta_i} } 
				\end{split}
				\right]. 
			\end{equation}
		\end{lemma}
		\begin{IEEEproof}
			In this case, we have ${{\bm{h}}_k} = \sqrt {{\beta_k}} {{\bm{g}}_k}$. Exploiting the fact that ${{\bm{g}}_k^*}{{\bm{g}}_k} \sim \mathcal{W}_{1}(M, 1)$ and following \cite{ngo2013energy}, we can obtain the pair of expectation terms in \eqref{eq:Exp_gamma_MRC} as follows
			\begin{align}
				\label{eq:E1}
				&{\beta _i} \mathbb{E} \left\{ \left( \frac{{{\bm{g}}_k^*}}{{\sqrt {\left| {{\bm{g}}_k^*{{\bm{g}}_k}} \right|} }} {{\bm{g}}_i} \right) \left( \frac{{{\bm{g}}_k^*}}{{\sqrt {\left| {{\bm{g}}_k^*{{\bm{g}}_k}} \right|} }} {{\bm{g}}_i} \right)^{*} \right\} 
				= {\beta_i}, \\
				\label{eq:E2}
				&\frac{1}{\beta_{k}} \mathbb{E} \left\{ \frac{1}{  \text{Tr} ( {{\bm{g}}_k} {{{\bm{g}}_k}^*} ) } \right\} 
				= \frac{1}{\beta_{k} (M-1)}.
			\end{align}
			Substituting \eqref{eq:E1} and \eqref{eq:E2} into $\mathbb{E} \left\{ \mathsf{SINR}_{k} \right\}$ and exploiting $R_{k}^{\rm{(mrc, ufc)}} 
			\ge \log_{2} \left( 1 + \mathbb{E} \left\{ \mathsf{SINR}_{k} \right\} \right)$, we arrive at \eqref{eq:AchievableRate_MRC_Asymptotic_UFC}. 
		\end{IEEEproof}

		\begin{lemma}
			\label{lemma2}
			Considering the CFC of \eqref{eq:ChannelModel} and assuming a perfect CSI, we have the EAR of the MRC RAQ-MIMO for the $k$-th user as follows 
			\begin{equation}
				\label{eq:AchievableRate_MRC_Asymptotic_CFC}
				\hspace{-2em} R_{k}^{\rm{(mrc,cfc)}} \ge {\log _2} \left[
				\begin{split}
				 	1 + \frac{ (M-2) \mathsf{SNR}_{1,k} \cos^{2} \varphi (\Omega_{\ell}) }{ 1 + \zeta \frac{ \mathsf{SNR}_{1,k} \cos^{2} \varphi (\Omega_{\ell}) }{\beta_{k}} \sum_{i = 1, \ne k}^K {{\beta _i}} } 
				\end{split}
				 \right], 
			\end{equation}
			where $\zeta$ is derived for odd $M$ and even $M$ as follows 
			\begin{align}
				\label{eq:zetaOdd}
				&\zeta_{\rm{odd}} 
				= 1 + \frac{2}{\pi \varpi}\left[ \epsilon + \ln \left( \left\lceil \frac{M}{2} \right\rceil - 1 \right) \right] + \left( \frac{1}{\varpi} - \frac{2}{\pi} \right), \\
				\nonumber
				&\zeta_{\rm{even}}  
				= 1 + \frac{2}{\pi \varpi} \left[ \epsilon + \ln \left( \frac{M}{2} - 1 \right) \right] + \left( \frac{1}{\varpi} - \frac{2}{\pi} \right) \\
				\label{eq:zetaEven}
				&\qquad \quad \;\; + \frac{4}{\pi \varpi} \left[ \frac{{1 + \sin \left( {\varpi M} \right)}}{M} \right].  
			\end{align}
		\end{lemma}
		\begin{IEEEproof}
			In this case, we have ${{\bm{h}}_k} = \sqrt {{\beta _k}} {\bm{U}} ( {\bm{\varLambda }}^{\frac{1}{2}} \odot {{\bm{g}}_k} ) = \sqrt {{\beta _k}} {\bm{U}}{{\bm{\varLambda }}^{\frac{1}{2}}}{{\bm{g}}_k}$ and ${\bm{g}}_k^*{\bm{\varLambda }}{{\bm{g}}_k} \sim \mathcal{W}_{1}(M, \frac{1}{M} {\rm{Tr}}(\bm{\varLambda}) )$. Upon exploiting the mathematical formulation of the Wishart distribution in \cite{gupta2018matrix}, we express the pair of expectation terms in \eqref{eq:Exp_gamma_MRC} as  
			\begin{align}
				\label{eq:E11}
				&\hspace{-0.2em} {\beta _i} \mathbb{E} \left\{ \left( \frac{{{\bm{g}}_k^*{{\bm{\varLambda }} }} {{\bm{g}}_i} }{{\sqrt {\left| {{\bm{g}}_k^*{\bm{\varLambda }}{{\bm{g}}_k}} \right|} }}  \right) \left( \frac{{{\bm{g}}_k^*{{\bm{\varLambda }} }} {{\bm{g}}_i} }{{\sqrt {\left| {{\bm{g}}_k^*{\bm{\varLambda }}{{\bm{g}}_k}} \right|} }} \right)^{*} \right\} 
				= {\beta _i} \frac{ {\rm{Tr}} \left( \bm{\varLambda}^{2} \right) }{ {\rm{Tr}} \left( \bm{\varLambda} \right) }, \\
				\label{eq:E21}
				&\frac{1}{\beta_{k}} \mathbb{E} \left\{ \frac{1}{  {\bm{g}}_k^*{\bm{\varLambda }}{{\bm{g}}_k} } \right\} 
				= \frac{{{\rm{Tr}}\left( {\bm{\varLambda }} \right)}}{{M\left( {M - 2} \right)}}. 
			\end{align}
			Exploiting ${\rm{Tr}} \left( \bm{\varLambda} \right) = M$ for a  Toeplitz matrix $\bm{R}$, and substituting \eqref{eq:E11} and \eqref{eq:E21} into $\mathbb{E} \left\{ \mathsf{SINR}_{k} \right\}$, we arrive at 
			\begin{align}
				\nonumber
				\mathbb{E} \left\{ \mathsf{SINR}_{k} \right\} 
				&= \frac{ (M-2) \mathsf{SNR}_{1,k} \cos^{2} \varphi (\Omega_{\ell}) }{ 1 + \frac{ {\rm{Tr}} \left( \bm{\varLambda}^{2} \right) }{M} \frac{ \mathsf{SNR}_{1,k} \cos^{2} \varphi (\Omega_{\ell}) }{\beta_{k}} \sum_{i = 1, \ne k}^K {{\beta _i}} } \\
				\label{eq:SINR_MRC_Asymptotic_CFC}
				&\overset{(a)}{\ge} 
				\frac{ (M-2) \mathsf{SNR}_{1,k} \cos^{2} \varphi (\Omega_{\ell}) }{ 1 + \zeta \frac{ \mathsf{SNR}_{1,k} \cos^{2} \varphi (\Omega_{\ell}) }{\beta_{k}} \sum_{i = 1, \ne k}^K {{\beta _i}} }, 
			\end{align}
			where $(a)$ of \eqref{eq:SINR_MRC_Asymptotic_CFC} is proved in Appendix \ref{Appendix:SINRratio_CFC}. Upon exploiting $R_{k}^{\rm{(mrc,cfc)}} 
			\ge \log_{2} \left( 1 + \mathbb{E} \left\{ \mathsf{SINR}_{k} \right\} \right)$, we arrive at \eqref{eq:AchievableRate_MRC_Asymptotic_CFC}. 
		\end{IEEEproof}

		\subsubsection{ZF}
		Upon applying the ZF scheme to the RAQ-MIMO of \eqref{eq:SignalModel_MatrixForm_SC}, we reformulate the expectation of the SINR \eqref{eq:Gamma_MRC} as  
		\begin{align}
			\label{eq:Exp_gamma_ZF}
			&\mathbb{E} \left\{ \mathsf{SINR}_{k} \right\} 
			= \frac{ \mathsf{SNR}_{1,k} \cos^{2} \varphi (\Omega_{\ell}) }{ \beta_{k} \mathbb{E} \left\{ {{{\left[ \left( {{\bm{H}}^*} \bm{H} \right)^{-1} \right]}_{kk}}} \right\} }, 
		\end{align}
		where we  have employed $\varrho \varPhi^{*} \varPhi \bm{Q}^{*} \bm{Q} = \varrho {{\cos }^2}\varphi ({\Omega_{\ell}}) \bm{I}_{M}$. 
		Furthermore, we have the following Lemmas. 

		\begin{lemma}
			\label{lemma3}
			Considering the UFC of \eqref{eq:ChannelModel} and assuming a perfect CSI, we have the EAR of the ZF RAQ-MIMO for the $k$-th user formulated as follows 
			\begin{align}
				\label{eq:AchievableRate_ZF_Asymptotic_UFC}
				&R_{k}^{\rm{(zf,ufc)}} \ge {\log _2}\left[ 1 + \left( M - K \right) \mathsf{SNR}_{1,k} \cos^{2} \varphi (\Omega_{\ell}) \right]. 
			\end{align}
		\end{lemma}
		\begin{IEEEproof}
			In this case, we exploit the form of $\bm{H} = \bm{G} \bm{D}^{\frac{1}{2}}$. Therefore, we reformulate \eqref{eq:Exp_gamma_ZF} as 
			\begin{align}
				\nonumber
				\mathbb{E} \left\{ \mathsf{SINR}_{k} \right\} 
				&= \frac{1}{ \mathbb{E} \left\{ {{{\left[ {{{\left( {{{\bm{G}}^*}{\bm{G}}} \right)}^{-1}}} \right]}_{kk}}} \right\} } \mathsf{SNR}_{1,k} \cos^{2} \varphi (\Omega_{\ell}) \\
				\label{eq:Exp_gamma_ZF1}
				&\overset{(b)}{=} \left( M - K \right) \mathsf{SNR}_{1,k} \cos^{2} \varphi (\Omega_{\ell}),
			\end{align}
			where the equality $(b)$ of \eqref{eq:Exp_gamma_ZF1} is obtained by exploiting the properties of the Wishart distribution, namely we have $\mathbb{E} \left\{ {{{\left[ {{{\left( {{{\bm{G}}^*}{\bm{G}}} \right)}^{-1}}} \right]}_{kk}}} \right\} = \frac{1}{K} \left\{ {\rm{Tr}} \left[ \left( {{{\bm{G}}^*}{\bm{G}}} \right)^{-1} \right] \right\} = \frac{1}{M-K}$ for ${{\bm{G}}^*}{{\bm{G}}} \sim \mathcal{W}_{K}(M, \bm{I}_{K})$ \cite{ngo2013energy}. 
			Based on \eqref{eq:Exp_gamma_ZF1} and $R_{k}^{\rm{(zf, ufc)}} 
			\ge \log_{2} \left( 1 + \mathbb{E} \left\{ \mathsf{SINR}_{k} \right\} \right)$, we arrive at \eqref{eq:AchievableRate_ZF_Asymptotic_UFC}. 
		\end{IEEEproof}

		\begin{lemma}
			\label{lemma4}
			Considering the CFC of \eqref{eq:ChannelModel} and assuming a perfect CSI, we have the EAR of the ZF RAQ-MIMO for the $k$-th user as follows 
			\begin{align}
				\label{eq:AchievableRate_ZF_Asymptotic_CFC}
				&\hspace{-1.1em} R_{k}^{\rm{(zf, cfc)}} \ge {\log _2} \left[ 1 + \left( M - 2K - 2 \right) \mathsf{SNR}_{1,k} \cos^{2} \varphi (\Omega_{\ell}) \right]. 
			\end{align}
		\end{lemma}
		\begin{IEEEproof}
			Upon exploiting the form of $\bm{H} = \bm{U} \left( \bm{\Sigma} \odot \bm{G} \right)$ $\bm{D}^{\frac{1}{2}}$, we reformulate \eqref{eq:Exp_gamma_ZF} as 
			\begin{align}
				\nonumber
				\hspace{-0.4em} \mathbb{E} \left\{ \mathsf{SINR}_{k} \right\} 
				&= \frac{ \frac{1}{\beta_{k}} \mathsf{SNR}_{1,k} \cos^{2} \varphi (\Omega_{\ell}) }{ \mathbb{E} \left\{ {{{\left[ {{{\left( {{{\bm{D}}^{\frac{1}{2}}}{{\left( {{\bm{\Sigma }} \odot {\bm{G}}} \right)}^*}{{\bm{U}}^*}{\bm{U}}\left( {{\bm{\Sigma }} \odot {\bm{G}}} \right){{\bm{D}}^{\frac{1}{2}}}} \right)}^{-1}}} \right]}_{kk}}} \right\} } \\
				\nonumber
				&= \frac{1}{ \mathbb{E} \left\{ {{{\left[ {{{\left( {{{\bm{G}}^{*}}{\bm{\varLambda G}}} \right)}^{ - 1}}} \right]}_{kk}}} \right\} } \mathsf{SNR}_{1,k} \cos^{2} \varphi (\Omega_{\ell}) \\
				\label{eq:Exp_gamma_ZF2}
				&\overset{(c)}{=} \frac{ M - 2K - 2 }{ {\bm{P}_{kk}} } \mathsf{SNR}_{1,k} \cos^{2} \varphi (\Omega_{\ell}),
			\end{align}
			where the equality $(c)$ of \eqref{eq:Exp_gamma_ZF2} is achieved by exploiting $\mathbb{E} \left\{ \left[ \left( {{{\bm{G}}^*}{\bm{\varLambda}}{\bm{G}}} \right)^{-1} \right]_{kk} \right\} = \frac{ \bm{P}_{kk} }{M - 2K - 2}$ for ${{\bm{G}}^*}{\bm{\varLambda}}{{\bm{G}}} \sim \mathcal{W}_{K}(M, \bm{P})$ \cite{gupta2018matrix}. Additionally, we have $\bm{P} = \frac{1}{M} \mathbb{E} \left\{ {{{\bm{G}}^*}{\bm{\varLambda G}}} \right\} = \frac{{\rm{Tr}}\left( \bm{\varLambda} \right)}{M} \bm{I}_{K}$ by exploiting the asymptotic properties $\lim_{M \rightarrow \infty} \frac{ {\bm{g}}_{i}^{*}{\bm{\varLambda }}{{\bm{g}}_{i}} }{M} \xrightarrow[]{p.} \frac{1}{M} \sum_{m=1}^{M} \lambda_{m}$ and $\lim_{M \rightarrow \infty} \frac{{\bm{g}}_{i}^{*}{\bm{\varLambda }}{{\bm{g}}_{j}}}{M} \xrightarrow[]{p.} 0$ \cite{kamga2016spectral}. 
			Based on \eqref{eq:Exp_gamma_ZF2} and the above discussions, we arrive at \eqref{eq:AchievableRate_ZF_Asymptotic_CFC} by applying $R_{k}^{\rm{(zf, cfc)}} \ge \log_{2} \left( 1 + \mathbb{E} \left\{ \mathsf{SINR}_{k} \right\} \right)$. 
		\end{IEEEproof}

		\vspace{-1.0em}
		\subsection{EAR in SQL and PSL Regimes}
		\label{subsec:SQLandPSL}
		
		The results derived in \textbf{Lemma} \ref{lemma1} - \textbf{Lemma} \ref{lemma4} are functions of $\mathsf{SNR}_{1,k} {{\cos }^2}\varphi ({\Omega_{\ell}})$. Let us furthermore define $\overline{\mathsf{SNR}}_{1,k} \triangleq \mathsf{SNR}_{1,k} {{\cos }^2}\varphi ({\Omega_{\ell}})$. It can be specified in different regimes for the RAQ-MIMO system, namely the SQL and PSL regimes, as firstly discussed in \cite{gong2024RAQRModel_Journal} for RAQ-SISO systems. Specifically, in the SQL regime, the QPN dominates, while other noise sources diminish. This reveals the ultimate fundamental limit of RAQ-MIMO systems with respect to their quantum property. By contrast, the PSL regime, namely where the PSN dominates, reflects the fundamental limit of the optical readout in the photodetection. Furthermore, the SQL is widely employed as a theoretical limit, even though it is challenging to approach it by employing a standard optical readout scheme. By contrast, the PSL is achievable for the BCOD scheme employed in this article.

		To characterize the SQL, we denote the parameter set by $\mathscr{P}_{\rm{sensor}}^{(\rm{sql})} = \{ N_{\text{atom}}, T_{2} \}$, where $T_{2}$ is the coherence time of a single Rydberg atom. Likewise, $\mathscr{P}_{\rm{sensor}}^{(\rm{psl})} = \{ \mathcal{P}_{0}, N_{\text{atom}}, \chi (\Omega_{\ell}),$ $\varphi (\Omega_{\ell}), \kappa (\Omega_{\ell}) \}$ reflects the parameter set for characterizing the PSL. 
		Next, we present the following Theorem. 
		\begin{theorem}
			\label{theorem1}
			Considering both the UFC and CFC scenarios of \eqref{eq:ChannelModel} and assuming a perfect CSI, we have the EAR of the RAQ-MIMO for the $k$-th user formulated as follows 
			\begin{align}
				\label{eq:AchievableRate_MRC_Asymptotic3}
				\hspace{-1.18em}
				R_{k} &\ge {\log _2} \left[ 
				1 + \mathscr{F}^{\rm{(mrc/zf)}} ( \underbrace{\mathscr{P}_{\rm{sensor}}^{(\rm{sql/psl})}, M}_{\rm{Rx}}, \underbrace{\beta_k}_{\rm{Ch}}, \underbrace{ {\cal P}_{{s}}, K }_{\rm{Tx}} ) 
				\right], 
			\end{align}
			where $\mathscr{F}^{\rm{(mrc)}}$ and $\mathscr{F}^{\rm{(zf)}}$ are functions of the parameter set $\{ \mathscr{P}_{\rm{sensor}}^{(\rm{sql/psl})}, M, \beta_k, {\cal P}_{{s}}, K \}$. Specifically, we have 
			\begin{align}
				\label{eq:mrcfunc}
				&\mathscr{F}^{\rm{(mrc)}} ( \mathscr{P}_{\rm{sensor}}^{(\rm{sql/psl})}, M, \beta_k, {\cal P}_{{s}}, K ) \\
				\nonumber
				&= \left\{ 
				\begin{aligned}
					&\frac{ \mathsf{f}(M) }{ \left( \frac{B}{ {\cal P}_{s} \beta_{k} } \right) \left( \frac{ 1 }{ C_1 N_{\text{atom}} T_2 } \right) + \frac{\varepsilon}{ \beta_k } \sum_{i = 1, \ne k}^K {{\beta _i}} }, & \rm{SQL}, \\
					&\frac{ \mathsf{f}(M) }{ \left( \frac{B}{ {\cal P}_{s} \beta_{k} } \right) \frac{1}{ C_2 {\overline {\cal P}_0 } {\kappa^2}({\Omega_{\ell} }) } \exp( \frac{N_{\text{atom}} {\bar \chi}}{A_p} ) + \frac{\varepsilon}{ \beta_k } \sum\limits_{i = 1 \atop i \ne k}^K {\beta _i} }, & \rm{PSL},
				\end{aligned} \right. \\ 
				\label{eq:zffunc}
				&\mathscr{F}^{\rm{(zf)}} ( \mathscr{P}_{\rm{sensor}}^{(\rm{sql/psl})}, M, \beta_k, {\cal P}_{{s}}, K ) \\
				\nonumber
				&= \left\{ 
				\begin{aligned}
					&C_1 N_{\text{atom}} T_2 \left( \tfrac{ {\cal P}_{s} \beta_{k} }{B} \right) \mathsf{f}(M,K), & \rm{SQL}, \\
					&{C_2} {\overline {\cal P}_0 } {\kappa^2}({\Omega_{\ell} }) \exp (- \tfrac{N_{\text{atom}} {\bar \chi}}{A_p} ) \left( \tfrac{ {\cal P}_{s} \beta_{k} }{B} \right) \mathsf{f}(M,K), & \rm{PSL}, 
				\end{aligned} \right. 
			\end{align}
			where $\mathsf{f}(M) \in \{ M-1, M-2 \}$, $\varepsilon \in \{ 1, \zeta \}$, and $\mathsf{f}(M,K) \in \{ M-K, M-2K-2 \}$ for the UFC and CFC scenarios, respectively, and we have defined $C_1 \triangleq \frac{ 2 Z_0 \mu_{34}^{2} }{ \hbar^{2} }$, $C_2 \triangleq \frac{4 \alpha Z_0}{ q }$, ${\bar \chi} \triangleq k_p \mathscr{I} \left\{ \chi (\Omega_{\ell}) \right\}$, and ${\overline {\cal P}_0 } \triangleq {\cal P}_0 {{\cos }^2}\varphi ({\Omega_{\ell}})$. 
		\end{theorem}
		\begin{IEEEproof}
			In the SQL regime, we have $w_{m} \sim \mathcal{CN} (0, \sigma^2)$, where $\sigma^2$ becomes QPN-dominated, namely we have $\sigma^2 \approx \frac{ N_{\rm{QPN}} }{2} = \frac{ \varrho c \epsilon_0 \cos^{2} \varphi (\Omega_{\ell}) }{2} \big( \frac{U_{\text{SQL}}}{\sqrt{\text{Hz}}} \big)^{2} B$ and $\frac{U_{\text{SQL}}}{\sqrt{\text{Hz}}} = \frac{\hbar}{\mu_{34} \sqrt{N_{\text{atom}} T_2}}$  \cite{gong2024RAQRModel_Journal}. Based on this result, we obtain $\overline{\mathsf{SNR}}_{1,k} = \mathsf{SNR}_{1,k} {{\cos }^2}\varphi ({\Omega_{\ell}}) = \frac{ {\varrho} {{\cos }^2}\varphi ({\Omega_{\ell}}) {\cal P}_{s} \beta_{k} }{\sigma^2}$ as follows 
			\begin{align}
				\label{eq:SNR1_SQL}
				\overline{\mathsf{SNR}}_{1,k}^{(\rm{sql})} 
				= \tfrac{ 2 {\cal P}_{s} {\beta_k} }{ c \epsilon_0 B} \Big/ \left(  \tfrac{U_{\text{SQL}}}{\sqrt{\text{Hz}}} \right)^{2} 
				= C_1 N_{\text{atom}} T_2 \left( \tfrac{ {\cal P}_{s} \beta_{k} }{B} \right). 
			\end{align}
			In the PSL regime, the PSN noise dominates, where $\sigma^2$ reduces and becomes  $\sigma^2 \approx \frac{ N_{\rm{PSN}} }{2} = qB \alpha [ \mathcal{P}_{\text{lob}} + \mathcal{P} (\Omega_{\ell}) ] G_{\text{pd}}$ \cite{gong2024RAQRModel_Journal}. 
			As a consequence, we can obtain $\overline{\mathsf{SNR}}_{1,k}$ as follows  
			\begin{align}
				\nonumber
				\overline{\mathsf{SNR}}_{1,k}^{(\rm{psl})} 
				\nonumber
				&= C_2 \frac{ \mathcal{P}_{\text{lob}} \mathcal{P} (\Omega_{\ell}) \kappa^{2}({\Omega_{\ell}}) \cos^2 {\varphi ({\Omega_{\ell}})} }{ \mathcal{P}_{\text{lob}} + \mathcal{P} (\Omega_{\ell}) } \left( \tfrac{ {\cal P}_{s} \beta_{k} }{B} \right) \\
				\nonumber
				&\overset{(d)}{\approx} C_2 \mathcal{P} (\Omega_{\ell}) \kappa^{2}({\Omega_{\ell}}) \cos^2 {\varphi ({\Omega_{\ell}})} \left( \tfrac{ {\cal P}_{s} \beta_{k} }{B} \right) \\
				\label{eq:SNR1_PSL} 
				&\overset{(e)}{=} {C_2} {\overline {\cal P}_0 } \kappa^{2}({\Omega_{\ell}}) \exp (- \tfrac{N_{\text{atom}} {\bar \chi}}{A_p} ) \left( \tfrac{ {\cal P}_{s} \beta_{k} }{B} \right),
			\end{align}
			where the approximation $(d)$ is obtained by exploiting $\mathcal{P}_{\text{lob}} \gg \mathcal{P} (\Omega_{\ell})$; the equality $(e)$ is derived by exploiting the relationship of ${\overline {\cal P}_0 } \triangleq {\cal P}_0 {{\cos }^2}\varphi ({\Omega_{\ell}})$ and $\mathcal{P} (\Omega_{\ell}) = {\cal P}_0 \exp (- \frac{N_{\text{atom}} {\bar \chi}}{A_p} )$ based on \eqref{eq:OutputProbePower}. 
			Furthermore, upon substituting \eqref{eq:SNR1_SQL} and \eqref{eq:SNR1_PSL} into the above results of \textbf{Lemma} \ref{lemma1} - \textbf{Lemma} \ref{lemma4}, we can obtain \eqref{eq:AchievableRate_MRC_Asymptotic3}
			which completes the proof. 
		\end{IEEEproof}

		\textit{\textbf{Remark 3}: As seen in \textbf{Theorem} \ref{theorem1}, the scaling behaviour of the RAQ-MIMO receiver is determined by the macroscopic array dimensions $\mathsf{f}(M)$ and $\mathsf{f}(M,K)$, as well as the microscopic quantum-related and optical-related parameters $\mathscr{P}_{\mathrm{sensor}}^{(\mathrm{sql})}$ and $\mathscr{P}_{\mathrm{sensor}}^{(\mathrm{psl})}$. These behaviors fundamentally differ from those of classical MIMO systems, which are primarily governed by coherently combining the different antennas' signals.}
		
		\textit{\textbf{Remark 4}: The EAR of both the MRC and ZF RAQ-MIMO receivers in their PSL regime can be maximized by optimally configuring the $m$-th LOB source via ${\phi_{m}^{\text{(lob)}}} = {\phi_{p,m}}({\Omega_{\ell,m}}) - \arccos \left[ \frac{ \varsigma_1 {\mathscr I}\{ \chi_{m}^{\prime} ({\Omega_{\ell,m}})\} }{ \mathcal{C}_m (\Omega_{\ell,m}) } \right]$. Consequently, we can obtain $\cos^2 {\varphi_{m}({\Omega_{\ell}})} = 1$ and ${\overline {\cal P}_0 } = {\cal P}_0$, as well as the maximal EARs by replacing ${\overline {\cal P}_0 }$ with ${\cal P}_0$ in \eqref{eq:mrcfunc} and \eqref{eq:zffunc}, respectively.}

		\begin{table*}[t]
			\centering
			\caption{\textsc{Comparisons of the RAQ-MIMO in the SQL and PSL regimes.}}
			\renewcommand{\arraystretch}{0.6}
			\label{tab:comparison_sql_psl}
			\resizebox{\linewidth}{!}{
			\begin{tabular}{@{} 
					p{0.9cm}
					p{7.2cm}
					p{8.2cm}
					@{}}
				\toprule
				\textbf{Aspect} 
				& \textbf{\makecell[c]{SQL Regime}} 
				& \textbf{\makecell[c]{PSL Regime}} \\
				\midrule
				
				\vspace{0.1em} \textbf{Key factor} 
				& QPN-limited performance jointly governed by the effective number of Rydberg atoms within a single sensor and the quantum coherence time, namely $N_{\text{atom}}$ and $T_{2}$. 
				& PSN-limited performance jointly governed by the collective atomic enhancement and optical-depth-dependent attenuation in the atomic quantum–optical transduction, i.e., $\big[ \frac{ N_{\text{atom}} {\cal C} (\Omega_{\ell}) }{A_p} \big]^{2}$ and $\exp \big(- \frac{N_{\text{atom}} {\bar \chi}}{A_p} \big)$. \\
				\midrule
				
				\vspace{0.2em} \textbf{EAR scaling law} 
				& \textbf{ZF}: EAR grows logarithmically without bound as the ensemble number $N_{\text{atom}}$ or the coherence time $T_2$ of Rydberg atoms increase, namely $R_k^{\text{(zf,ufc/cfc,sql)}} \propto \log_2 (N_{\text{atom}} T_2)$. 
				
				\vspace{0.4em}
				\textbf{MRC}: EAR saturates as $N_{\text{atom}} \rightarrow \infty$ or $T_2 \rightarrow \infty$. 
				& \textbf{ZF} \& \textbf{MRC}: EAR exhibits non-monotonic trade-off between the collective atomic enhancement and optical-depth-dependent attenuation. It is maximally determined by the optimal atom number $N_{\text{atom}}^{\star} = \frac{2 A_p}{{\bar \chi}}$ or at an optimal cell's length of $L^{\star} = \frac{2}{{N_0} {\bar \chi}}$. \\ 
				\midrule
				
				\vspace{0.2em} \textbf{Power scaling law} 
				& \textbf{ZF} \& \textbf{MRC}: If $M = C_{\text{sql}} N_{\text{atom}} T_2$, the users' transmit power ${\cal P}_{s}$ can be scaled down quadratically with $N_{\text{atom}} T_2$ as $N_{\text{atom}} \rightarrow \infty$, namely ${\cal P}_{s} = {\cal E}/(N_{\text{atom}} T_2)^{2}$ for a certain fixed energy ${\cal E}$, while attaining $R_k^{\text{(mrc/zf)}} = {\log_2} \big[ 1 + C_{\text{sql}} {C_1} \big( \frac{{\cal E} \beta_k}{B} \big) \big]$.
				& \textbf{ZF} \& \textbf{MRC}: If $M = C_{\text{psl}} \exp \big( \frac{N_{\text{atom}} {\bar \chi}}{A_p} \big)$, the users' transmit power ${\cal P}_{s}$ can be scaled down quadratically with $N_{\text{atom}} {\cal C} (\Omega_{\ell}) / A_p$ as $N_{\text{atom}} \rightarrow \infty$, namely ${\cal P}_{s} = {\cal E}/$ $[ N_{\text{atom}} {\cal C} (\Omega_{\ell}) / A_p ]^{2}$ for a certain fixed energy ${\cal E}$, while attaining $R_k^{\text{(mrc/zf)}} = \log_2 \big[ 1 + C_{\text{psl}} {C_2} \overline{{\cal P}_0} \big( \frac{{\cal E} \beta_k}{B} \big) \big]$.  \\
				\bottomrule
			\end{tabular}}
			\vspace{-1.6em}
		\end{table*}

		\subsection{Scaling Behavior of RAQ-MIMO}
		\label{subsubsec:Insights}
		Based on \eqref{eq:AchievableRate_MRC_Asymptotic3}--\eqref{eq:zffunc} obtained in \textbf{Theorem} \ref{theorem1}, we present the following deeper insights, as highlighted in TABLE \ref{tab:comparison_sql_psl}. 
		
		\textbf{EAR Scaling Law (SQL regime)}: 
		Let us define $\eta_{\rm coh} \triangleq C_1 N_{\text{atom}} T_{2}$ as the \emph{collective atomic quantum coherence factor}, which measures the number of atoms $N_{\text{atom}}$ effectively participating in the coherent quantum transduction weighted by their coherence time $T_2$. This factor characterizes the genuinely quantum-limited regime. 
		\begin{theorem}
			\label{theorem2}
			For the RAQ-MIMO having independent Rydberg atomic ensemble, as the effective number or coherence time of the atomic ensemble increase, the EAR of the MRC RAQ-MIMO receiver in the SQL regime becomes saturated, formulated as 
			\begin{equation}
				\label{eq:EAR_MRC_ScalingLaw_SQL}
				\hspace{-0.2em} R_{k}^{\rm{(mrc, ufc/cfc, sql)}}
				\xrightarrow[\text{or}\; T_2 \rightarrow \infty]{N_{\text{atom}} \rightarrow \infty} 
				{\log _2} \left[ 
				\begin{split}
					1 + \frac{ \mathsf{f}(M) }{ \frac{\varepsilon}{ \beta_{k} } \sum_{i = 1, \ne k}^K {{\beta _i}} }
				\end{split} 
				\right]. 
			\end{equation} 
			By contrast, the EAR of the ZF RAQ-MIMO receiver in the SQL regime scales logarithmically without bound as the effective number of or the coherence time of the atomic ensemble increase. Explicitly, we have 
			\begin{align}
				\label{eq:EAR_ZF_ScalingLaw_SQL}
				\hspace{-0.2em} 
				R_{k}^{\rm{(zf, ufc/cfc, sql)}} \propto {\log _2} \big( N_{\text{atom}} T_2 \big) \xrightarrow[]{N_{\text{atom}} \rightarrow \infty \;\text{or}\; T_2 \rightarrow \infty} \infty. 
			\end{align} 
		\end{theorem}
		\begin{IEEEproof}
			Based on the result of the SQL regime in \eqref{eq:mrcfunc}, we can directly obtain \eqref{eq:EAR_MRC_ScalingLaw_SQL} as $N_{\text{atom}} \rightarrow \infty$ or $T_2 \rightarrow \infty$. By exploiting furthermore that $C_1 N_{\text{atom}} T_2 \left( \frac{ {\cal P}_{s} \beta_{k} }{B} \right) \mathsf{f}(M,K) \gg 1$ in \eqref{eq:zffunc} and the high-SINR approximation $\log_{2} (1+x) \approx \log_{2} x$, as $N_{\text{atom}} \rightarrow \infty$ or $T_2 \rightarrow \infty$, we arrive at \eqref{eq:EAR_ZF_ScalingLaw_SQL}. 
		\end{IEEEproof}

		\textbf{EAR Scaling Law (PSL regime)}: 
		Let us  define $\eta_{\rm opt} \triangleq$ $C_2 {\kappa^2}(\Omega_{\ell}) \exp (- \frac{N_{\text{atom}} {\bar \chi}}{A_p} )$ as the \emph{collective atomic quantum-optical transduction factor}, which characterizes the efficiency of converting user signals into detectable variations of the probe beam in the PSL regime. This factor reflects the interplay between \textbf{(i)}~the enhancement term ${\kappa^2}({\Omega_{\ell} }) = \left[ \frac{ N_{\text{atom}} {\cal C} (\Omega_{\ell}) }{A_p} \right]^{2}$, reflecting the \emph{collective atomic quantum responsivity} induced by a single sensor, and \textbf{(ii)}~the exponential attenuation term $\exp (- \frac{N_{\text{atom}} {\bar \chi}}{A_p} )$ arising from the optical-depth-dependent attenuation during the probe beam's propagation within a vapor cell. Therefore, the EAR scaling law in the PSL regime encapsulates a nonlinear trade-off between the collective enhancement and the propagation dissipative regimes, making the dependence on $N_{\text{atom}}$ non-monotonic, as formulated in the following Theorem. 
		\begin{theorem}
			\label{theorem3}
			For the RAQ-MIMO having an independent Rydberg atomic ensemble, the EAR of the MRC and ZF schemes in the PSL regime can be maximized at an atomic number of $N_{\text{atom}}^{\star} = \frac{2 A_p}{{\bar \chi}}$ or at a cell's length of $L^{\star} = \frac{2}{{N_0} {\bar \chi}}$, yielding 
			\begin{equation}
				\label{eq:EAR_MRC_ZF_ScalingLaw_PSL} 
				\begin{aligned}
					&R_{k}^{\rm{(mrc/zf,ufc/cfc,psl,maximum)}} 
					\xrightarrow[]{N_{\text{atom}} \rightarrow N_{\text{atom}}^{\star} \;\text{or}\; L \rightarrow L^{\star}} \\ 
					&\left\{ 
					\begin{aligned}
						&{\log_2} \left[ 
						\begin{split} 
							1 + \frac{ \mathsf{f}(M) }{ \frac{1}{ C_3 {\overline {\cal P}_0 } } \left[ \frac{ {\bar \chi} }{ {\cal C}({\Omega_{\ell}}) } \right]^{2} \left( \frac{B}{ {\cal P}_{s} \beta_{k} } \right)  + \frac{\varepsilon}{ \beta_k } \sum\limits_{i = 1 \atop i \ne k}^K {\beta _i} } 
						\end{split} 
						\right], & \rm{MRC}, \\
						&\log_2 \left[ 1 + C_3 {\overline {\cal P}_0 } \left[ \frac{{\cal C}({\Omega_{\ell}})}{{\bar \chi}} \right]^{2} \left( \frac{ {\cal P}_{s} \beta_{k} }{B} \right) \mathsf{f}(M,K) \right], & \rm{ZF},\;\;\;  
					\end{aligned} \right. 
				\end{aligned}
			\end{equation}
			where we have $C_3 \triangleq \frac{{4{C_2}}}{{\exp (2)}} \approx 0.54 {C_2}$. 
		\end{theorem}
		\begin{IEEEproof}
			By deriving the solution of $\frac{ d \eta_{\rm opt} }{ d N_{\text{atom}} } = 0$, we obtain the optimal configuration of $N_{\text{atom}}^{\star} = \frac{2 A_p}{{\bar \chi}}$. By exploiting furthermore the relationship of $L = \frac{ N_{\text{atom}} }{ N_0 A_p }$, we obtain $L^{\star} = \frac{2}{{N_0} {\bar \chi}}$. Upon substituting $N_{\text{atom}} = N_{\text{atom}}^{\star}$ or $L = L^{\star}$ into the PSL result of \eqref{eq:mrcfunc} and \eqref{eq:zffunc}, respectively, we arrive at \eqref{eq:EAR_MRC_ZF_ScalingLaw_PSL}. 
		\end{IEEEproof}

		\textbf{Power Scaling Law}: 
		Due to the distinctive characteristics of the RAQ-MIMO receiver, the power scaling law is different from that of classical M-MIMO systems. Specifically, the power scaling law is unveiled in the following Theorem. 
		\begin{theorem}
			\label{theorem4}
			For non-negative constants $C_{\text{sql}}$, $C_{\text{psl}}$, 
			we assume $M = C_{\text{sql}} N_{\text{atom}} T_2$ and $M = C_{\text{psl}} \exp \left( \frac{N_{\text{atom}} {\bar \chi}}{A_p} \right)$ for SQL and PSL, respectively. As $N_{\text{atom}}$ becomes large, the transmit power ${\cal P}_{s}$ can be scaled down quadratically with $N_{\text{atom}} \tau$, where $\tau \in \{ T_2, \frac{ {\cal C} (\Omega_{\ell}) }{A_p} \}$ for SQL and PSL, respectively. Explicitly, we have ${\cal P}_{s} = \frac{{\cal E}}{ (N_{\text{atom}} \tau)^{2} }$ for a certain fixed energy ${\cal E}$, yet the sum-rate grows linearly with $K$, where the EAR of the $k$-th user for both the MRC and ZF receivers is formulated as 
			\begin{equation}
				\label{eq:PowerScalingLaw} 
				\begin{aligned}
					&R_{k}^{\rm{(mrc/zf, ufc/cfc, sql/psl)}} 
					\xrightarrow[ ]{ {\cal P}_{s} = \frac{{\cal E}}{ (N_{\text{atom}} \tau)^{2} } \;\text{and}\; N_{\text{atom}} \rightarrow \infty }  \\ 
					&\qquad \left\{ 
					\begin{aligned}
						&{\log_2} \left[ 1 + C_{\text{sql}} {C_1} \left( \frac{{\cal E} \beta_k}{B} \right) \right], & \rm{SQL}, \\
						&\log_2 \left[ 1 + C_{\text{psl}} {C_2} \overline{{\cal P}_0} \left( \frac{{\cal E} \beta_k}{B} \right) \right], & \rm{PSL}. 
					\end{aligned} \right. 
				\end{aligned}
			\end{equation}
		\end{theorem}
		\begin{IEEEproof}
			Upon substituting ${\cal P}_{s} = \frac{{\cal E}}{(N_{\text{atom}} \tau)^{2}}$ into both the SQL and PSL results of \eqref{eq:mrcfunc} and \eqref{eq:zffunc}, respectively, we obtain 
			\begin{align}
				\nonumber
				&\mathscr{F}^{\rm{(mrc)}} ( \mathscr{P}_{\rm{sensor}}^{(\rm{sql/psl})}, M, \beta_k, {\cal P}_{{s}}, K ) \\
				\nonumber
				&\hspace{-0.6em}= \left\{ 
				\begin{aligned}
					&\frac{ 1 }{ \left( \frac{B}{ {\cal E} \beta_{k} } \right) \left( \frac{ 1 }{ C_1 } \right) + \frac{\varepsilon}{ \beta_k N_{\text{atom}} T_2 } \sum_{i = 1, \ne k}^K {{\beta _i}} } 
					\times \frac{ \mathsf{f}(M) }{ N_{\text{atom}} T_2 }, & \rm{SQL}, \\
					&\frac{ \mathsf{f}(M) \exp \left( - \frac{N_{\text{atom}} {\bar \chi}}{A_p} \right) }{ \left( \frac{B}{ {\cal E} \beta_{k} } \right) \left( \frac{1}{ C_2 {\overline {\cal P}_0 } } \right) + \frac{\varepsilon}{ \beta_k } \exp(- \frac{N_{\text{atom}} {\bar \chi}}{A_p} ) \sum_{i = 1, \ne k}^K {\beta _i} }, & \rm{PSL},
				\end{aligned} \right. \\ 
				\nonumber
				&\mathscr{F}^{\rm{(zf)}} ( \mathscr{P}_{\rm{sensor}}^{(\rm{sql/psl})}, M, \beta_k, {\cal P}_{{s}}, K ) \\
				\nonumber
				&= \left\{ 
				\begin{aligned}
					&C_1 \left( \frac{ {\cal E} \beta_{k} }{B} \right) 
					\times \frac{ \mathsf{f}(M,K) }{ N_{\text{atom}} T_2 }, & \rm{SQL}, \\
					&{C_2} {\overline {\cal P}_0 } \left( \frac{ {\cal E} \beta_{k} }{B} \right) 
					\mathsf{f}(M,K) \exp ( - \frac{N_{\text{atom}} {\bar \chi}}{A_p} ), & \rm{PSL}. 
				\end{aligned} \right. 
			\end{align}
			By further exploiting the relationship of $M = C_{\text{sql}} N_{\text{atom}} T_2$ for both $\mathscr{F}^{\text{(mrc)}}$ and $\mathscr{F}^{\text{(zf)}}$ in the SQL regime, we obtain $\frac{ \mathsf{f}(M) }{ N_{\text{atom}} T_2} \xrightarrow[ ]{ N_{\text{atom}} \rightarrow \infty } C_{\text{sql}}$ and $\frac{ \mathsf{f}(M,K) }{ N_{\text{atom}} T_2} \xrightarrow[ ]{ N_{\text{atom}} \rightarrow \infty } C_{\text{sql}}$, which produces the SQL result of \eqref{eq:PowerScalingLaw} for both the MRC and ZF schemes with the aid of $\frac{\varepsilon}{ \beta_k N_{\text{atom}} T_2 } \sum_{i = 1, \ne k}^K {{\beta _i}} \xrightarrow[ ]{ N_{\text{atom}} \rightarrow \infty } 0$. Likewise, by using the relationship of $M = C_{\text{psl}} \exp \big( \frac{N_{\text{atom}} {\bar \chi}}{A_p} \big)$ for $\mathscr{F}^{\text{(mrc)}}$ and $\mathscr{F}^{\text{(zf)}}$ in the PSL regime, we arrive at $\mathsf{f}(M)$ $\exp \big( - \frac{N_{\text{atom}} {\bar \chi}}{A_p} \big) \xrightarrow[ ]{ N_{\text{atom}} \rightarrow \infty } C_{\text{psl}}$ and $\mathsf{f}(M,K) \exp \big( - \frac{N_{\text{atom}} {\bar \chi}}{A_p} \big)$ $\xrightarrow[ ]{ N_{\text{atom}} \rightarrow \infty } C_{\text{psl}}$, hence yielding the PSL result of \eqref{eq:PowerScalingLaw} for both the MRC and ZF schemes with the help of $\frac{\varepsilon}{ \beta_k } \exp \big(- \frac{N_{\text{atom}} {\bar \chi}}{A_p} \big)$ $\sum_{i = 1, \ne k}^K {\beta _i} \xrightarrow[ ]{ N_{\text{atom}} \rightarrow \infty } 0$. 
		\end{IEEEproof}
		
		\textit{\textbf{Remark 5}: In \textbf{Theorem} \ref{theorem4}, $N_{\text{atom}} \tau \in \{ N_{\text{atom}} T_2, N_{\text{atom}} \frac{ {\cal C} (\Omega_{\ell}) }{A_p} \}$ reflects a collective effect of the participating atoms weighted by the coherence time $T_2$ in the SQL regime or by the atomic quantum responsivity per atom $\frac{ {\cal C} (\Omega_{\ell}) }{A_p}$ in the PSL regime. The combined effect of $N_{\text{atom}}$ and $\tau$ quantifies the enhancement aspect of the RAQ-MIMO. To enhance the intuitiveness, the value of $N_{\text{atom}} \tau$ is approximately $22$ and $8$ in the SQL and PSL regimes, respectively, where $N_{\text{atom}} \approx 1.11 \times 10^8$, $T_2 = 2 \times 10^{-7}$, and $\frac{ {\cal C} (\Omega_{\ell}) }{A_p} \approx 1.265 \times 10^{-7}$ based on parameters configured in TABLE \ref{tab:parameters}.} 
		
		\textit{\textbf{Remark 6}: From \textbf{Theorem} \ref{theorem4}, the quadratic power scaling law of ${\cal P}_{s} = \frac{{\cal E}}{ (N_{\text{atom}} \tau)^{2} }$ is encountered by increasing $N_{\text{atom}}$ and $M$ simultaneously, where $M$ scales linearly with $N_{\text{atom}} \tau$ in the SQL regime, while scales exponentially with $\frac{N_{\text{atom}} {\bar \chi}}{A_p}$ in the PSL regime ($\frac{N_{\text{atom}} {\bar \chi}}{A_p} \approx 0.46$ based on the parameters configured in TABLE \ref{tab:parameters}). 
		The exponential escalation of $M$ in the latter case is exploited for canceling out the exponential attenuation $\exp \big(- \frac{N_{\text{atom}} {\bar \chi}}{A_p} \big)$ arising from the optical-depth-dependent attenuation during the probe beam's propagation within a vapor cell.}

		\vspace{-1em}
		\subsection{Compare UFC to CFC, and MRC to ZF}
		
		\subsubsection{Comparison Between UFC and CFC}
		For MRC and ZF RAQ-MIMO receivers, we have the following Corollary. 
		
		\begin{corollary}
			In the high-SINR regime, the EAR of the MRC RAQ-MIMO receiver experiencing UFC is higher than that experiencing CFC, where their difference approaches  
			\begin{align}
				\label{eq:extraSE_MRC_Asmp}
				\Delta R_{0,k}^{\rm{(mrc)}} 
				&\xrightarrow[]{{M} \to \infty} 
				\log_{2} \left[ \frac{ 2 \ln(M) }{\pi \varpi} \right]. 
			\end{align}
			By contrast, for the ZF RAQ-MIMO receiver, its EAR in the UFC and CFC tends to be identical, namely 
			\begin{align}
				\label{eq:extraSE_ZF_Asmp}
				\Delta R_{0,k}^{\rm{(zf)}} 
				&\xrightarrow[]{{M} \to \infty} 0. 
			\end{align}
		\end{corollary}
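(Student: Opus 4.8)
The plan is to subtract the two relevant closed-form EARs and extract the dominant term as $M\to\infty$. For the MRC receiver I would start from the high-SINR limits already recorded in Remark~1, i.e. \eqref{eq:AchievableRate_MRC_Asymptotic_UFC_C1} and \eqref{eq:AchievableRate_MRC_Asymptotic_CFC_C1}, which hold as $\mathcal{P}_s\to\infty$ and eliminate the dependence on $\mathsf{SNR}_{1,k}$ and $\cos^2\varphi(\Omega_l)$. Letting $M\to\infty$, the additive unity inside each logarithm becomes negligible against the $\mathcal{O}(M)$ interference-limited fractions, so that $R_k^{\rm{(mrc,ufc)}}\approx\log_2((M-1)\beta_k/\sum_{i\ne k}\beta_i)$ and $R_k^{\rm{(mrc,cfc)}}\approx\log_2((M-2)\beta_k/(\zeta\sum_{i\ne k}\beta_i))$. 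Forming the difference, the common factor $\beta_k/\sum_{i\ne k}\beta_i$ cancels and I obtain the compact expression $\Delta R_{0,k}^{\rm{(mrc)}}=\log_2((M-1)\zeta/(M-2))$, free of all large-scale fading coefficients.

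Next I would evaluate this ratio as $M\to\infty$. The prefactor obeys $(M-1)/(M-2)\to1$, so the limit is dictated entirely by $\zeta$. Reading off \eqref{eq:zetaOdd}--\eqref{eq:zetaEven}, for odd $M$ one has $\lceil M/2\rceil-1=(M-1)/2$ and for even $M$ one has $M/2-1=(M-2)/2$; in both cases the argument of the logarithm is $(M+\mathcal{O}(1))/2$, so that this term contributes $\frac{2}{\pi\varpi}\ln M - \frac{2\ln 2}{\pi\varpi}+o(1)$, while the oscillatory residual $\frac{4}{\pi\varpi}\frac{1+\sin(\varpi M)}{M}$ in the even case vanishes. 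Hence $\zeta=\frac{2}{\pi\varpi}\ln M+\mathcal{O}(1)$, i.e. $\zeta\to\infty$ with leading coefficient $\frac{2}{\pi\varpi}$. Writing $\log_2\zeta=\log_2(\frac{2\ln M}{\pi\varpi})+\log_2(1+\frac{\mathcal{O}(1)}{(2/(\pi\varpi))\ln M})$, the second term tends to zero because the constant correction is $o(\ln M)$; this establishes \eqref{eq:extraSE_MRC_Asmp}.

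For the ZF receiver the computation is shorter. Writing $c\triangleq\mathsf{SNR}_{1,k}\cos^2\varphi(\Omega_l)$ and subtracting \eqref{eq:AchievableRate_ZF_Asymptotic_CFC} from \eqref{eq:AchievableRate_ZF_Asymptotic_UFC}, I get $\Delta R_{0,k}^{\rm{(zf)}}=\log_2((1+(M-K)c)/(1+(M-2K-2)c))$. As $M\to\infty$ both the numerator and the denominator scale linearly in $M$, so the ratio converges to $(M-K)/(M-2K-2)\to1$; equivalently, the fixed integer offsets $K$ and $2K+2$ become negligible against $M$. Therefore $\Delta R_{0,k}^{\rm{(zf)}}\to\log_2 1=0$, which is \eqref{eq:extraSE_ZF_Asmp}.

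I expect the main obstacle to be the asymptotic evaluation of $\zeta$ in the MRC case: one must verify that the ceiling/parity bookkeeping in \eqref{eq:zetaOdd}--\eqref{eq:zetaEven} yields the same leading $\frac{2}{\pi\varpi}\ln M$ term for both odd and even $M$, and confirm that every remaining contribution (the Euler--Mascheroni constant $\epsilon$, the $\frac{1}{\varpi}-\frac{2}{\pi}$ offset, the $-\ln 2$ shift, and the $\mathcal{O}(1/M)$ oscillatory term) is $o(\ln M)$ and hence disappears inside the outer logarithm. By contrast, the ZF cancellation and the prefactor limits are routine.
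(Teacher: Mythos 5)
Your proposal is correct and takes essentially the same route as the paper's proof: both reduce the MRC difference to $\log_2\left(\tfrac{M-1}{M-2}\,\zeta\right)$ with $\zeta \to \tfrac{2\ln M}{\pi\varpi}$ from \eqref{eq:zetaOdd}--\eqref{eq:zetaEven}, and the ZF difference to $\log_2\tfrac{M-K}{M-2K-2}\to 0$, your entry via the ${\cal P}_{s}\to\infty$ expressions of Remark~1 being just the paper's high-SINR approximation $\Delta R_{0,k}\approx\log_2\mathsf{RT}_{0,k}$ applied to \eqref{eq:SINRratio_MRC_Asmp} and \eqref{eq:SINRratio_ZF_Asmp} in explicit form. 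Your odd/even bookkeeping for $\zeta$, verifying that the Euler constant, the $-\ln 2$ shift, the $\tfrac{1}{\varpi}-\tfrac{2}{\pi}$ offset, and the $\mathcal{O}(1/M)$ oscillatory term are all $o(\ln M)$, merely makes explicit the limit the paper asserts.
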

		\begin{IEEEproof}
			For the MRC RAQ-MIMO receiver, we focus on the SINR in the log expressions of \eqref{eq:AchievableRate_MRC_Asymptotic_UFC}, \eqref{eq:AchievableRate_MRC_Asymptotic_CFC}, and obtain their SINR ratio $\mathsf{RT}_{0,k}^{\rm{(mrc)}} = \mathsf{SINR}_{k}^{\rm{(mrc,ufc)}} / \mathsf{SINR}_{k}^{\rm{(mrc,cfc)}}$ as 
			\begin{align}
				\nonumber
				\mathsf{RT}_{0,k}^{\rm{(mrc)}} 
				&= \frac{ M - 1 }{ M-2 } \left( \frac{ 1 + \zeta \frac{ \overline{\mathsf{SNR}}_{1,k} }{\beta_{k}} \sum_{i = 1, \ne k}^K {{\beta _i}} }{ 1 + \frac{ \overline{\mathsf{SNR}}_{1,k} }{\beta_{k}} \sum_{i = 1, \ne k}^K {{\beta _i}} } \right) \\ 
				\label{eq:SINRratio_MRC_Asmp}
				&\approx \zeta \xrightarrow[]{{M} \to \infty} 
				\frac{ 2 \ln(M) }{\pi \varpi}. 
			\end{align}
			In the high-SINR regime, \eqref{eq:extraSE_MRC_Asmp} can be obtained accordingly through $\Delta R_{0,k}^{\rm{(mrc)}}$ $\approx \log_{2} \mathsf{RT}_{0,k}^{\rm{(mrc)}}$.  
			
			Additionally, for the ZF RAQ-MIMO receiver, we focus on the SINR in the log expressions of \eqref{eq:AchievableRate_ZF_Asymptotic_UFC} and \eqref{eq:AchievableRate_ZF_Asymptotic_CFC}. 
			Their SINR ratio $\mathsf{RT}_{0,k}^{\rm{(zf)}}$ $= \mathsf{SINR}_{k}^{\rm{(zf,ufc)}} / \mathsf{SINR}_{k}^{\rm{(zf,cfc)}}$ is derived as 
			\begin{align}
				\label{eq:SINRratio_ZF_Asmp}
				\mathsf{RT}_{0,k}^{\rm{(zf)}} 
				&= \frac{ M - K }{ M-2K-2 } \xrightarrow[]{{M} \to \infty} 1. 
			\end{align}
			Therefore, \eqref{eq:extraSE_ZF_Asmp} can be obtained in the high-SINR regime by using $\Delta R_{0,k}^{\rm{(zf)}}$ $\approx \log_{2} \mathsf{RT}_{0,k}^{\rm{(zf)}}$. 
		\end{IEEEproof}

		\subsubsection{Comparison Between MRC and ZF}
		In both UFC and CFC scenarios, we have the following Corollary. 
		
		\begin{corollary}
			In the high-SINR regime and the UFC scenario, the EAR of the ZF RAQ-MIMO receiver is higher than that of the MRC RAQ-MIMO receiver, where their EAR difference approaches the following result 
			\begin{align}
				\label{eq:extraSE_UFC_Asmp}
				&\Delta R_{0,k}^{\rm{(ufc)}} 
				\xrightarrow[]{{M} \to \infty} 
				\log_{2} \left( 1 + \frac{ \overline{\mathsf{SNR}}_{1,k} }{\beta_{k}} \sum\nolimits_{i = 1, \ne k}^K {{\beta _i}} \right) = \\
				\nonumber 
				&\hspace{-0.3em} \begin{cases}
					{\log _2} \left[ 1 + C_1 N_{\text{atom}} T_2 \left( \frac{ {{\cal P}_{{s}}} }{B} \right) \sum_{i = 1, \ne k}^K {{\beta _i}} \right], & \rm{SQL}, \\
					{\log _2} \left[ 1 + {C_2} {\overline {\cal P}_0 }  {\kappa^2}({\Omega_{\ell} }) \exp (- \frac{N_{\text{atom}} {\bar \chi}}{A_p} ) \left( \frac{ {{\cal P}_{{s}}} }{B} \right) \sum\limits_{i = 1 \atop i \ne k}^K {{\beta _i}} \right], & \rm{PSL}. 
				\end{cases} 
			\end{align} 
			Additionally, in the high-SINR regime and the CFC scenario, the EAR of the ZF RAQ-MIMO receiver is higher than that of the MRC RAQ-MIMO receiver, where their difference approaches the following result 
			\begin{align}
				\label{eq:extraSE_CFC_Asmp}
				\Delta R_{0,k}^{\rm{(cfc)}} 
				&\xrightarrow[]{{M} \to \infty} 
				\Delta R_{0,k}^{\rm{(ufc)}} + \log_{2} \left[ \frac{ 2 \ln(M) }{\pi \varpi} \right]. 
			\end{align}
		\end{corollary}
		\begin{IEEEproof}
			Specifically, in the UFC scenario, we focus on the SINR in expressions of \eqref{eq:AchievableRate_ZF_Asymptotic_UFC} and \eqref{eq:AchievableRate_MRC_Asymptotic_UFC}. 
			Their SINR ratio $\mathsf{RT}_{0,k}^{\rm{(ufc)}}$ $= \mathsf{SINR}_{k}^{\rm{(zf,ufc)}} / \mathsf{SINR}_{k}^{\rm{(mrc,ufc)}}$ is given by 
			\begin{align}
				\nonumber
				\mathsf{RT}_{0,k}^{\rm{(ufc)}} 
				&= \left( 1 - \frac{ K - 1 }{ M-1 } \right) \left( 1 + \frac{ \overline{\mathsf{SNR}}_{1,k} }{\beta_{k}} \sum\nolimits_{i = 1, \ne k}^K {{\beta _i}} \right) \\ 
				\label{eq:SINRratio_UFC_Asmp}
				&\xrightarrow[]{{M} \to \infty} 
				1 + \frac{ \overline{\mathsf{SNR}}_{1,k} }{\beta_{k}} \sum\nolimits_{i = 1, \ne k}^K {{\beta _i}}.
			\end{align}
			Upon furthermore substituting \eqref{eq:SNR1_SQL} and \eqref{eq:SNR1_PSL} into \eqref{eq:SINRratio_UFC_Asmp}, we can directly derived \eqref{eq:extraSE_UFC_Asmp} by following $\Delta R_{0,k}^{\rm{(ufc)}}$ $\approx \log_{2} \mathsf{RT}_{0,k}^{\rm{(ufc)}}$ in the high-SINR regime. 
			
			In the CFC scenario, we take the SINR in \eqref{eq:AchievableRate_ZF_Asymptotic_CFC} and \eqref{eq:AchievableRate_MRC_Asymptotic_CFC}, and obtain their SINR ratio as 
			\begin{align}
				\nonumber
				\mathsf{RT}_{0,k}^{\rm{(cfc)}} 
				&= \left( 1 - \frac{ 2K }{ M-2 } \right) \left( 1 + \zeta \frac{ \overline{\mathsf{SNR}}_{1,k} }{\beta_{k}} \sum\nolimits_{i = 1, \ne k}^K {{\beta _i}} \right) \\ 
				\label{eq:SINRratio_CFC_Asmp}
				&\xrightarrow[]{{M} \to \infty} 
				\left[ \frac{ 2 \ln(M) }{\pi \varpi} \right]
				\left( \frac{ \overline{\mathsf{SNR}}_{1,k} }{\beta_{k}} \sum\nolimits_{i = 1, \ne k}^K {{\beta _i}} \right). 
			\end{align}
			Clearly, \eqref{eq:extraSE_CFC_Asmp} can be directly derived in the high-SINR regime by exploiting $\Delta R_{0,k}^{\rm{(cfc)}}$ $\approx \log_{2} \mathsf{RT}_{0,k}^{\rm{(cfc)}}$.  
		\end{IEEEproof}

	\vspace{-1em}
	\subsection{Comparison Between RAQ-MIMO and Classical M-MIMO}

	Following a similar derivation process of \eqref{eq:AchievableRate_MRC_Asymptotic_UFC}, \eqref{eq:AchievableRate_MRC_Asymptotic_CFC}, \eqref{eq:AchievableRate_ZF_Asymptotic_UFC}, \eqref{eq:AchievableRate_ZF_Asymptotic_CFC}, we present the asymptotic EARs for classical M-MIMO systems as follows 
	\begin{equation}
		\label{eq:MIMO_AchievableRate_MRC_Asymptotic1_UFC/CFC}
		\tilde{R}_{k}^{\rm{(mrc,ufc/cfc)}} = {\log _2} \left[ 
		\begin{split}
			1 + \frac{ \mathsf{f}(M) \mathsf{SNR}_{0,k} }{ 1 + \varepsilon \frac{ \mathsf{SNR}_{0,k} }{ \beta_{k} } \sum_{i = 1, \ne k}^K {{\beta _i}} } 
		\end{split} 
		\right], 
	\end{equation}
	\vspace{-0.2em}
	\begin{equation}
		\label{eq:MIMO_AchievableRate_ZF_Asymptotic1_UFC/CFC}
		\hspace{-5em}
		\; \tilde{R}_{k}^{\rm{(zf,ufc/cfc)}} = {\log _2} \Big[ 1 + \mathsf{f}(M, K) \mathsf{SNR}_{0,k} \Big], 
	\end{equation}
	where $\varepsilon$, $\mathsf{f}(M)$, and $\mathsf{f}(M,K)$ are related to the UFC/CFC scenarios and are provided in \textbf{Theorem} \ref{theorem1}; 
	$\mathsf{SNR}_{0,k} \triangleq \frac{ \varrho_0 {A_{\rm{iso}}} {\cal P}_{s} }{ \sigma_0^2 } {\beta_k}$ represents the received SNR of each antenna corresponding to the $k$-th user, without experiencing the small-scale fading and IUIs; $\varrho_{0}$ is the gain of a single RF chain of M-MIMO and ${A_{\rm{iso}}}$ is the effective aperture of an isotropic antenna. More particularly, $\varrho_{0} \triangleq \eta_0 G_{\rm{Ant}} G_{\rm{LNA}}$ is determined by the antenna efficiency $\eta_0$, antenna gain $G_{\rm{Ant}}$, and the LNA gain $G_{\rm{LNA}}$ employed for the RF chain. The effective aperture of the isotropic antenna is given by ${A_{\rm{iso}}} = \lambda^2 / (4 \pi)$. The noise power of M-MIMO, namely $\sigma_{0}^{2}$, is different from that of the RAQ-MIMO. It is given by $\sigma_{0}^{2} = 10\log(k_B T_0) + 10\log B + NF + G_{\rm{LNA}}$ in dB, where $T_0 = 290$ K is the room temperature, $NF = 10 \log(F)$ represents the noise figure of the receiver, and $F$ is the corresponding noise factor. For example, $NF$ is $6$ dB and $9$ dB for the base station (BS) and the user equipment (UE), respectively, at the band of 5G FR1 n104 \cite{3GPP_IMT}.

	To proceed, let us define $\mathsf{RSG} \triangleq \frac{ \overline{\mathsf{SNR}}_{1,k} }{ \mathsf{SNR}_{0,k} }$ as the receiver SNR gain (RSG) of the RAQ-MIMO over the classical M-MIMO. Based on the definitions of $\overline{\mathsf{SNR}}_{1,k}$ and $\mathsf{SNR}_{0,k}$, the RSG can be interpreted as “$\textit{\textbf{ReceiverGainRatio}} / \textit{\textbf{NoisePowerRatio}}$” in the condition of a single sensor through the relationship of 
	\begin{align}
		\label{eq:RSG}
		\mathsf{RSG} = \underbrace{\left( \frac{ \varrho {{\cos }^2}\varphi ({\Omega_{\ell}}) }{ A_{{\rm{iso}}} {\varrho _0} } \right)}_{{\text{Receiver gain ratio}}} \bigg/ \hspace{-1.2em} \underbrace{\left( {\frac{{{\sigma ^2}}}{{\sigma _0^2}}} \right)}_{ \text{Noise power ratio} } \hspace{-1em} \triangleq \Pi. 
	\end{align}
	Furthermore, based on the results of $\overline{\mathsf{SNR}}_{1,k}$ in \eqref{eq:SNR1_SQL}, \eqref{eq:SNR1_PSL}, as well as the result of $\mathsf{SNR}_{0,k} = \frac{ \varrho_0 {A_{\rm{iso}}} }{ k_{\rm{B}} T_0 F } \left( \frac{ {\cal P}_{s} }{B} \right) {\beta_k}$, we obtain 
	\begin{align}
		\nonumber
		\Pi = \left\{ 
		\begin{aligned}
			&C_1 N_{\text{atom}} T_2 \left( \frac{ k_{\rm{B}} T_{0} F }{ {\eta_0} G_{\rm{Ant}} A_{\rm{iso}} } \right), & \rm{SQL}, \\
			&C_2 {\overline {\cal P}_0 } {\kappa^2}({\Omega_{\ell} }) \exp (- \tfrac{N_{\text{atom}} {\bar \chi}}{A_p} ) 
			\left( \frac{ k_{\rm{B}} T_{0} F }{ {\eta_0} G_{\rm{Ant}} A_{\rm{iso}} } \right), & \rm{PSL}.
		\end{aligned} \right. 
	\end{align}
	We next compare the proposed RAQ-MIMO receiver to the classical M-MIMO receivers in terms of the EAR, transmit power, and of the transmission distance.

	\subsubsection{Increase of the EAR}
	\label{subsubsect:IEAR}
	We proceed by setting the transmit power ${\cal P}_{s}$ and the number of sensors $M$ for the RAQ-MIMO to be identical to those of the classical M-MIMO, respectively. 
	\begin{corollary}
		\label{corollary:3}
		In the high-SINR regime, the difference between the EARs of MRC RAQ-MIMO and MRC M-MIMO receivers in both UFC and CFC scenarios approaches
		\begin{align}
			\label{eq:SINRratio1Asmp}
			\Delta \tilde{R}_{1,k}^{\rm{(mrc)}} 
			\rightarrow 
			\begin{cases}
				\;\;\;\; 0, & \textit{\textbf{C1}}: \left( {\cal P}_{s} \rightarrow \infty \right), \;\;\;\;\;\;\;\;\; \\
				\log_{2} \Pi, & \textit{\textbf{C2}}: \left( {{\cal P}_{s} = \tfrac{{\cal E}}{M} , M \rightarrow \infty \hfill} \right).   
			\end{cases} 
		\end{align}
		By contrast, in the high-SINR regime, the difference between the EARs of ZF RAQ-MIMO and ZF M-MIMO receivers in both UFC and CFC scenarios is always equal to
		\begin{align}
			\label{eq:SINRratio2}
			&\Delta \tilde{R}_{1,k}^{\rm{(zf)}}
			= \log_{2} \Pi. 
		\end{align}
	\end{corollary}
	\begin{IEEEproof}
		For the MRC receivers of RAQ-MIMO and M-MIMO, upon dividing the SINR terms of \eqref{eq:AchievableRate_MRC_Asymptotic_UFC} and \eqref{eq:AchievableRate_MRC_Asymptotic_CFC} by the SINR term in \eqref{eq:MIMO_AchievableRate_MRC_Asymptotic1_UFC/CFC}, respectively, we obtain the following result for both the UFC and CFC scenarios 
		\begin{align}
			\nonumber
			\mathsf{RT}_{1,k}^{\rm{(mrc,ufc/cfc)}} 
			&= \frac{ \frac{ \beta_k }{ \mathsf{SNR}_{0,k} } + \varepsilon \sum_{i = 1, \ne k}^K {\beta _i} }{ \frac{ \beta_k }{ \overline{\mathsf{SNR}}_{1,k} } + \varepsilon \sum_{i = 1, \ne k}^K {\beta _i} } \\
			\label{eq:ratio1Asmp}
			&\hspace{-0.8em} \rightarrow 
			\begin{cases}
				1, & \textit{\textbf{C1}}: \left( {\cal P}_{s} \rightarrow \infty \right), \\
				\Pi, & \textit{\textbf{C2}}: \left( {\cal P}_{s} = \tfrac{{\cal E}}{M}, M \rightarrow \infty \right). 
			\end{cases} 
		\end{align}
		For the ZF receivers of RAQ-MIMO and M-MIMO, based on the SINR terms of \eqref{eq:AchievableRate_ZF_Asymptotic_UFC} and \eqref{eq:AchievableRate_ZF_Asymptotic_CFC}, as well as the SINR term of \eqref{eq:MIMO_AchievableRate_ZF_Asymptotic1_UFC/CFC}, we obtain a unified SINR ratio formulated as  
		\begin{align}
			\label{eq:ratio2}
			\mathsf{RT}_{1,k}^{\rm{(zf,ufc/cfc)}}  
			&= \Pi. 
		\end{align}
		Upon using $\Delta \tilde{R}_{1,k}^{\rm{(mrc/zf)}} \approx \log_{2} \mathsf{RT}_{1,k}^{\rm{(mrc/zf)}}$ in the high-SINR regime, we arrive at \eqref{eq:SINRratio1Asmp} and \eqref{eq:SINRratio2}. 
	\end{IEEEproof}

	\textbf{Corollary} \ref{corollary:3} reveals that the comparison between RAQ-MIMO and M-MIMO receivers finally boils down to the  comparison of a single-sensor receiver chain. As verified in \cite{gong2024RAQRModel_Journal}, $\overline{\mathsf{SNR}}_{1,k} > \mathsf{SNR}_{0,k}$ is achievable, so that $\Delta \tilde{R}_{1,k}^{\rm{(mrc/zf)}} \ge 0$ can be realized, implying that the MRC/ZF RAQ-MIMO receivers can outperform their M-MIMO counterparts.

	The superiority of the MRC RAQ-MIMO receiver over the MRC M-MIMO receiver gradually reduces for higher ${\cal P}_{s}$, given other parameters, as shown by the \textit{\textbf{C1}} case in \eqref{eq:SINRratio1Asmp}. 
	We also observe from \eqref{eq:SINRratio1Asmp} that the MRC RAQ-MIMO receiver outperforms the MRC M-MIMO receiver with an extra EAR contribution in the \textit{\textbf{C2}} case, when ${\cal P}_{s}$ obeys the classical power scaling law of ${\cal P}_{s} = \tfrac{{\cal E}}{M}$ for a fixed energy of ${\cal E}$. 
	By contrast, the ZF RAQ-MIMO receiver outperforms the ZF M-MIMO receiver in any conditions.

	\subsubsection{Reduction of Transmit Power}
	In this assessment, we assume different transmit powers, while setting all other parameters to be identical for the RAQ-MIMO and M-MIMO. 
	
	\begin{corollary}
		\label{corollary:4}
		The transmit power of users in RAQ-MIMO systems can be reduced by a factor of $\Pi$ compared to that of classical M-MIMO, when realizing the same EAR. 
		
		For example, the transmit power of users in RAQ-MIMO systems can be reduced by a factor of $\sim 446$ (e.g., $\sim 26.5$ dB) and of $\sim 10000$ (e.g., $\sim 40$ dB) in the PSL and SQL, respectively, compared to the 5G-BS \cite{3GPP_IMT}, when obeying the simulation configuration of Section \ref{sec:Simulations}. 
	\end{corollary}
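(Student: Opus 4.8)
The plan is to reduce the claim to the structural observation already made in \textbf{Corollary}~\ref{corollary:3}: the four RAQ-MIMO EAR expressions and their M-MIMO counterparts are \emph{the same function} of the respective per-sensor SNRs. Comparing \eqref{eq:AchievableRate_MRC_Asymptotic1_UFC}--\eqref{eq:AchievableRate_ZF_Asymptotic1_CFC} with \eqref{eq:MIMO_AchievableRate_MRC_Asymptotic1_UFC}--\eqref{eq:MIMO_AchievableRate_ZF_Asymptotic1_CFC}, each M-MIMO formula is obtained from its RAQ-MIMO analogue purely by the substitution $\mathsf{SNR}_{1,k} \mapsto \mathsf{SNR}_{0,k}$, with every remaining quantity ($M$, $K$, $\beta_k$, $\zeta$, $\cos^{2}\varphi(\Omega_l)$) held common, as assumed in the statement. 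Hence equality of the two EARs is governed entirely by equality of the two SNR arguments.

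First I would establish strict monotonicity of each EAR in its SNR argument. For the ZF cases this is immediate, since \eqref{eq:AchievableRate_ZF_Asymptotic1_UFC} and \eqref{eq:AchievableRate_ZF_Asymptotic1_CFC} are $\log_{2}(1 + c\,\mathsf{SNR})$ with $c>0$. For the MRC cases, writing the SINR as $\frac{a S}{1 + b S}$ with $a = M-1$ (resp.\ $M-2$) and $b = \frac{1}{\beta_k}\sum_{i\ne k}\beta_i$ (resp.\ $\zeta$ times that), its derivative in $S$ equals $\frac{a}{(1+bS)^{2}} > 0$, so the EAR is strictly increasing in $\mathsf{SNR}_{1,k}$ as well. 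Consequently, for any target EAR reachable by both systems (the reachable ranges coincide, since the saturation levels depend only on the shared $M$, $\beta_k$, $\zeta$), the RAQ-MIMO and M-MIMO rates coincide \emph{if and only if} $\mathsf{SNR}_{1,k} = \mathsf{SNR}_{0,k}$, uniformly across all four scheme/channel combinations.

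Next I would substitute the definitions $\mathsf{SNR}_{1,k} = \frac{\varrho\,{\cal P}_{s}^{\rm RAQ}}{\sigma^{2}}\beta_k$ and $\mathsf{SNR}_{0,k} = \frac{\varrho_0 A_{\rm{iso}}\,{\cal P}_{s}^{\rm M}}{\sigma_0^{2}}\beta_k$, allowing distinct transmit powers ${\cal P}_{s}^{\rm RAQ}$ and ${\cal P}_{s}^{\rm M}$ while fixing all other parameters. Imposing $\mathsf{SNR}_{1,k} = \mathsf{SNR}_{0,k}$ cancels $\beta_k$ and gives the linear relation $\frac{\varrho}{\sigma^{2}}{\cal P}_{s}^{\rm RAQ} = \frac{\varrho_0 A_{\rm{iso}}}{\sigma_0^{2}}{\cal P}_{s}^{\rm M}$, whence
\begin{align*}
\frac{{\cal P}_{s}^{\rm RAQ}}{{\cal P}_{s}^{\rm M}} = \left( \frac{A_{\rm{iso}}\,\varrho_0}{\varrho} \right)\left( \frac{\sigma^{2}}{\sigma_0^{2}} \right) = \left[ \left( \frac{\varrho}{A_{\rm{iso}}\,\varrho_0} \right)\left( \frac{\sigma_0^{2}}{\sigma^{2}} \right) \right]^{-1}.
\end{align*}
This identifies the stated power-reduction factor $\Pi = \left( \frac{\varrho}{A_{\rm{iso}}\,\varrho_0} \right)\left( \frac{\sigma_0^{2}}{\sigma^{2}} \right)$ and shows that it is independent of the target EAR, of the user index $k$, and of whether MRC or ZF is used, since the cancellation never invokes any of these. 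The interpretation as a genuine \emph{reduction} presumes $\Pi > 1$, which holds in the SQL and PSL regimes by \eqref{eq:SNRratio_per_sensor} and the discussion of Remark~4; the numerical instances ($\sim 446$ in PSL, $\sim 10000$ in SQL) then follow by inserting \eqref{eq:SNRratio_per_sensor} with the parameters of Section~\ref{sec:Simulations}.

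The argument is essentially algebraic, so there is no hard analytic obstacle; the only point requiring care is the \emph{uniformity} claim---that a single factor $\Pi$ serves simultaneously for MRC and ZF and for both channel models. This is exactly what the monotonicity step secures: because every EAR is a strictly increasing function of its lone SNR argument, and the interference/correlation constants ($\zeta$ and $\sum_{i\ne k}\beta_i$) enter the RAQ-MIMO and M-MIMO expressions symmetrically, equal rate forces equal SNR in every case, and equal SNR always collapses to the same power ratio $\Pi^{-1}$.
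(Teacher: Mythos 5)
Your proposal is correct and follows essentially the same route as the paper: the paper likewise equates the MRC and ZF EAR expressions of RAQ-MIMO with their M-MIMO counterparts, deduces the unified condition $\mathsf{SNR}_{1,k} = \mathsf{SNR}_{0,k}$, and reads off the power-reduction factor $\left( \frac{\varrho}{A_{\rm{iso}}\varrho_0} \right)\left( \frac{\sigma_0^2}{\sigma^2} \right)$ from the SNR definitions. Your only addition is to make explicit the strict monotonicity of each EAR in its SNR argument (which the paper leaves implicit when passing from equal EARs to equal SNRs), a worthwhile but minor tightening rather than a different argument.
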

	\begin{IEEEproof}
		Upon equalling the EAR of the ZF RAQ-MIMO receiver in \eqref{eq:AchievableRate_ZF_Asymptotic_UFC}, \eqref{eq:AchievableRate_ZF_Asymptotic_CFC} to that of the ZF M-MIMO receiver in \eqref{eq:MIMO_AchievableRate_ZF_Asymptotic1_UFC/CFC}, respectively, we obtain the unified result of $\overline{\mathsf{SNR}}_{1,k} = \frac{ {\varrho} {{\cos }^2}\varphi ({\Omega_{\ell}}) {\cal P}_{s,1} }{\sigma^2} {\beta_k} = \mathsf{SNR}_{0,k} = \frac{ \varrho_0 {A_{\rm{iso}}} {\cal P}_{s,0} }{ \sigma_0^2 } {\beta_k}$ for both UFC and CFC, where ${\cal P}_{s,1}$ and ${\cal P}_{s,0}$ represent the transmit power of users in the RAQ-MIMO and M-MIMO systems, respectively. \textbf{Corollary} \ref{corollary:4} can be obtained accordingly. 
	\end{IEEEproof}

	\begin{table}[t]
		\centering
		\caption{\textsc{RAQ-MIMO parameters in simulations.}}
		\renewcommand{\arraystretch}{0.6}
		\label{tab:parameters}
			\begin{tabular}{@{} m{1.2cm} m{6.8cm} @{}}
				\toprule
				\textbf{Category} & \textbf{Parameter Value and Unit} \\
				\midrule
				
				\multirow{8}{*}{\textbf{\makecell[l]{Electron\\ transitions}}} 
				&\makecell[l]{ 
					Vapor cell length: $L=10$ cm \\
					Effective atomic density: $N_{0}=4.89 \times 10^{8}$ $\text{cm}^{-3}$ \\
					Dipole moment of $\ket{1}$ \textrightarrow $\ket{2}$: $\mu_{12}=2.2327 q a_{0}$ C/m \\
					Dipole moment of $\ket{2}$ \textrightarrow $\ket{3}$: $\mu_{23}=0.0226 q a_{0}$ C/m \\
					Dipole moment of $\ket{3}$ \textrightarrow $\ket{4}$: $\mu_{34}=1443.45 q a_{\mathrm{0}}$ C/m \\
					Decay rate of $\ket{2}$: $\gamma_{2}=5.2$ MHz \\
					Decay rate of $\ket{3}$: $\gamma_{3}=3.9$ kHz \\
					Decay rate of $\ket{4}$: $\gamma_{4}=1.7$ kHz \\
					Total dephasing rate: $\varGamma_2=5$ MHz \\
					Coherence time: $T_2=0.2$ \textmu s
				} \\
				\midrule
				
				\multirow{8}{*}{\textbf{\makecell[l]{Laser\\ beams\\ and RF\\ signals}}} 
				&\makecell[l]{ 
					Probe beam wavelength: $\lambda_{p}=852$ nm \\
					Coupling beam wavelength: $\lambda_{c}=510$ nm \\
					Probe beam power: $\mathcal{P}_{0}=20.7$ \textmu W \\
					Coupling beam power: $\mathcal{P}_{c}=17$ mW \\
					Local optical beam power: $\mathcal{P}_{\text{lob}}=30$ mW \\
					Probe/coupling beam radius: $r_{0}=1.7$ mm \\
					LO signal amplitude: $U_{\ell}=0.0661$ V/m \\
					Carrier frequency: $f_{c}=6.9458$ GHz \\
					Frequency difference between LO and RF: $f_{\delta}=150$ kHz \\
					RF bandwidth: $B=100$ kHz
				} \\
				\midrule
				
				\multirow{9}{*}{\textbf{\makecell[l]{Antenna,\\ LNA,\\ and\\ others}}} 
				&\makecell[l]{ 
					Antenna efficiency: $\eta_0 = 0.7$ \\
					BS antenna element gain (5G FR1 n104): $G_{\text{ant}}=5.5$ dB \\
					UE antenna element gain (5G FR1 n104): $G_{\text{ant}}=0$ dB \\
					System noise figure of the BS and UE: $NF=6, 9$ dB \\
					LNA gain of the classical RF receiver: $G_{\text{LNA}}=60$ dB \\
					Noise temperature of the LNA: $T_{\text{LNA}}=100$ K \\
					Room temperature: $T_{\text{room}}=300$ K \\
					Quantum efficiency of the PD: $\eta_1=0.8$ \\
					LNA gain of the PD: $G_{\text{pd}}=30$ dB \\
					LNA noise temperature of the PD: $T=100$ K \\
					Load resistance of the PD: $R=1$ Ohm
				} \\
				\bottomrule
			\end{tabular}
		\vspace{-0.8em}
	\end{table}

	\subsubsection{Extension of Transmission Distance}
	We assume different transmission distances, while setting all other parameters to be identical for the RAQ-MIMO and M-MIMO systems. 
	
	\begin{corollary}
		\label{corollary:5}
		The transmission distance of RAQ-MIMO systems can be $\sqrt[\nu]{ \Pi }$-fold longer than that of classical M-MIMO systems, when achieving the same EAR, where $\nu$ represents the path-loss exponent. 
		
		For instance, the free-space transmission distance of RAQ-MIMO systems can be $\sim 21$-fold and $\sim 100$-fold higher in the PSL and SQL, respectively, compared to the 5G-BS \cite{3GPP_IMT} and obeying the simulation configuration of Section \ref{sec:Simulations}. 
	\end{corollary}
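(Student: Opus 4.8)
The plan is to reduce this distance comparison to the per-sensor SNR equality already established in \textbf{Corollary} \ref{corollary:4}, and then to inject a distance-dependent path-loss model into the large-scale fading coefficient $\beta_k$. Since the RAQ-MIMO EARs in \eqref{eq:AchievableRate_MRC_Asymptotic1_UFC}--\eqref{eq:AchievableRate_ZF_Asymptotic1_CFC} and their M-MIMO counterparts in \eqref{eq:MIMO_AchievableRate_MRC_Asymptotic1_UFC}--\eqref{eq:MIMO_AchievableRate_ZF_Asymptotic1_CFC} depend on the transmit power and link distance only through $\mathsf{SNR}_{1,k}$ and $\mathsf{SNR}_{0,k}$, respectively, equating the two EARs for either the MRC or ZF scheme, in either the UFC or CFC scenario, again forces $\mathsf{SNR}_{1,k} = \mathsf{SNR}_{0,k}$, exactly as in \textbf{Corollary} \ref{corollary:4}.

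First I would recall the two per-sensor SNR definitions $\mathsf{SNR}_{1,k} = \frac{ \varrho {\cal P}_{s} }{ \sigma^2 } \beta_k$ and $\mathsf{SNR}_{0,k} = \frac{ \varrho_0 A_{\rm{iso}} {\cal P}_{s} }{ \sigma_0^2 } \beta_k$, noting that with $\varrho$, $\varrho_0$, $A_{\rm{iso}}$, $\sigma^2$, $\sigma_0^2$ and ${\cal P}_{s}$ all held fixed, the sole distance-dependent factor is $\beta_k$. I would then adopt the standard large-scale fading model $\beta_k = \beta_{\rm{ref}}\, r_{k}^{-\nu}$, where $r_k$ is the link distance of the $k$-th user, $\beta_{\rm{ref}}$ a reference gain, and $\nu$ the path-loss exponent, and let $r_{k,1}$ and $r_{k,0}$ denote the distances attainable by the RAQ-MIMO and M-MIMO systems, respectively, at a common target EAR.

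Substituting this path-loss form into the equality $\mathsf{SNR}_{1,k} = \mathsf{SNR}_{0,k}$ yields $\frac{ \varrho }{ \sigma^2 } r_{k,1}^{-\nu} = \frac{ \varrho_0 A_{\rm{iso}} }{ \sigma_0^2 } r_{k,0}^{-\nu}$, in which ${\cal P}_{s}$, $\beta_{\rm{ref}}$ and the user index all cancel. Rearranging gives $\left( \frac{ r_{k,1} }{ r_{k,0} } \right)^{\nu} = \left( \frac{ \varrho }{ A_{\rm{iso}} \varrho_0 } \right) \left( \frac{ \sigma_0^2 }{ \sigma^2 } \right)$, and taking the $\nu$-th root delivers the claimed distance-expansion factor $\sqrt[\nu]{ \left( \frac{ \varrho }{ A_{\rm{iso}} \varrho_0 } \right) \left( \frac{ \sigma_0^2 }{ \sigma^2 } \right) }$; it is scheme- and scenario-independent precisely because the governing ratio $\mathsf{SNR}_{1,k}/\mathsf{SNR}_{0,k}$ is.

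The only genuinely substantive step is the reduction of ``equal EAR'' to ``equal per-sensor SNR,'' and this is essentially inherited from \textbf{Corollary} \ref{corollary:4}: the bracketed SINR arguments of the log-expressions are strictly monotone in $\mathsf{SNR}_{1,k}$ (or $\mathsf{SNR}_{0,k}$) once $M$, $K$ and the $\beta_i$ are matched, so equality of the two logarithms forces equality of those SNRs. After that the derivation is purely algebraic. The main modeling subtlety I would flag is that the comparison is meaningful only under a common path-loss exponent $\nu$ and a common reference gain, so that $\beta_k$ scales with distance identically for both receivers; with that assumption made explicit, the $\nu$-th-root scaling is immediate.
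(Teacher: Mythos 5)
Your proposal is correct and follows essentially the same route as the paper: equating the RAQ-MIMO and M-MIMO EARs to force $\mathsf{SNR}_{1,k} = \mathsf{SNR}_{0,k}$ (with distance-dependent $\beta_{k,1}$, $\beta_{k,0}$), substituting the path-loss model, and extracting the $\nu$-th root. The only cosmetic difference is that the paper writes the large-scale fading in dB form with a shadow-fading term $F_k$, whereas you use the equivalent linear-scale model $\beta_k = \beta_{\rm{ref}} r_k^{-\nu}$; your added monotonicity remark simply makes explicit a step the paper leaves implicit.
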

	\begin{IEEEproof}
		Upon equating \eqref{eq:AchievableRate_ZF_Asymptotic_UFC}, \eqref{eq:AchievableRate_ZF_Asymptotic_CFC} to  \eqref{eq:MIMO_AchievableRate_ZF_Asymptotic1_UFC/CFC}, respectively, we obtain 
		$\overline{\mathsf{SNR}}_{1,k} = \frac{ {\varrho} {{\cos }^2}\varphi ({\Omega_{\ell}}) {\cal P}_{s}}{\sigma^2} {\beta_{k,1}} = \mathsf{SNR}_{0,k}$ $= \frac{ \varrho_0 {A_{\rm{iso}}} {\cal P}_{s} }{ \sigma_0^2 } {\beta_{k,0}}$, where ${\beta_{k,1}}$ and ${\beta_{k,0}}$ denote the large-scale fading corresponding to the different transmission distances of RAQ-MIMO and M-MIMO, respectively. By exploiting the large-scale fading model $\beta_k = \beta_{\rm{ref}} + 10 \log_{10} (1/D_k)^{\nu} + F_{k}$, where $\beta_{\rm{ref}}$ is the large-scale fading at the reference distance of $1$ meter, $D_k$ is the distance between the receiver and the $k$-th user, and $F_{k}$ is the shadow fading, 
		we obtain \textbf{Corollary} \ref{corollary:5}. 
	\end{IEEEproof}

	\section{Simulation Results}
	\label{sec:Simulations}
	To characterize the performance of RAQ-MIMO and verify its potential, in this section we present simulations quantifying its average EAR versus (vs.) diverse parameters.
	We consider the four-level electron transition scheme of  6S\textsubscript{\scalebox{0.8}{1/2}} \textrightarrow 6P\textsubscript{\scalebox{0.8}{3/2}} \textrightarrow 47D\textsubscript{\scalebox{0.8}{5/2}} \textrightarrow 48P\textsubscript{\scalebox{0.8}{3/2}}. The corresponding parameters of the electron transitions, laser beams, RF signals, and electronic components are presented in TABLE \ref{tab:parameters}. We note that these parameter configurations are consistent with the physics experiments of \cite{jing2020atomic} and 3GPP specifications \cite{3GPP_IMT}. 
	Furthermore, for multiple users, each one is randomly positioned within a circular area that has a radius of $300$ meters. Its center is $400$ meters away from the RAQ-MIMO. The large-scale fading between the RAQ-MIMO and users is given by $-30 + 38 \log_{10} \left( 1/D_k \right) + F_{k}$ in dB, where $F_{k} \sim \mathcal{CN}(0, \sigma_{sf}^2)$ with $\sigma_{sf} = 10$. 
	The small-scale fading obeys \eqref{eq:ChannelModel}. The signal bandwidth considered is $100$ kHz. 
	Our simulation results are averaged over $5000$ realizations.

	\begin{figure}[t!]
		\centering
		\subfloat[]{
			\includegraphics[width=0.34\textwidth]{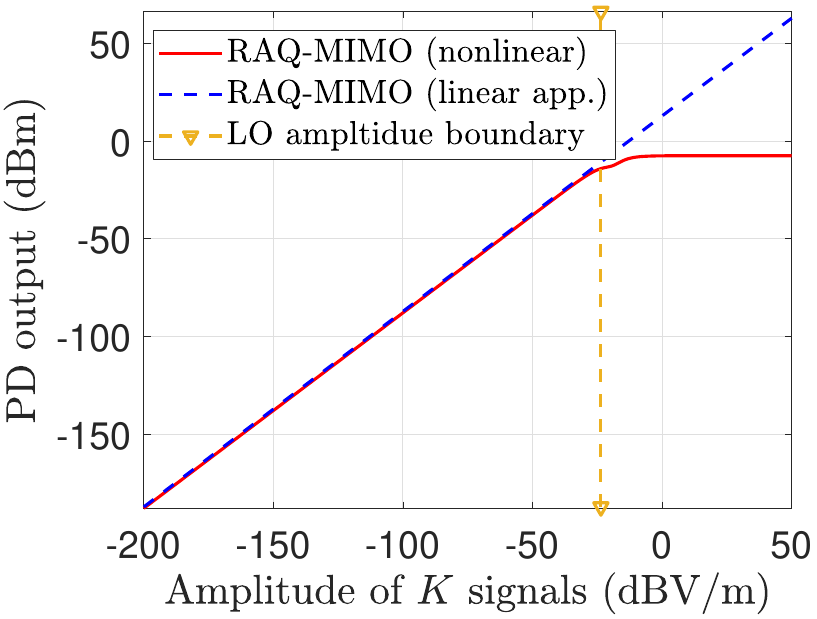}} \\ \vspace{0.6em}
		\subfloat[]{
			\;\; \includegraphics[width=0.34\textwidth]{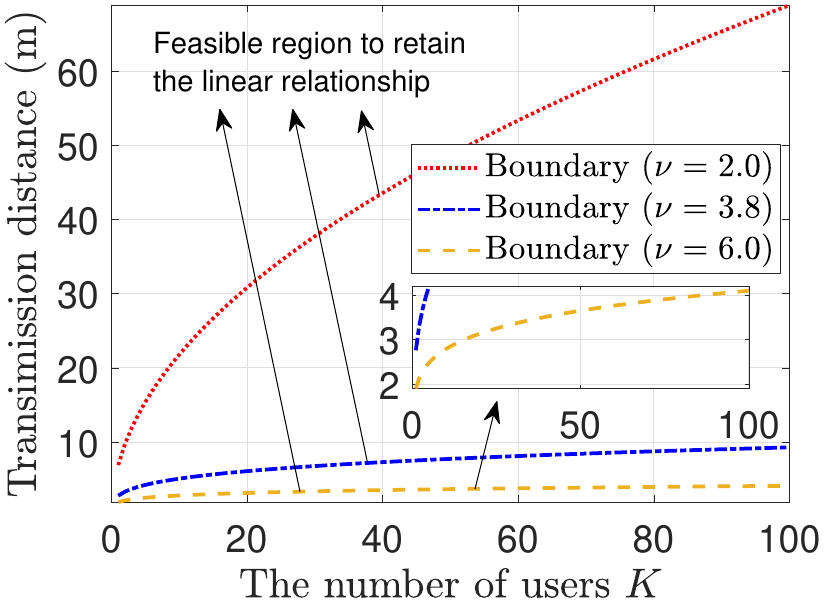}} 
		\caption{(a) Accuracy of the proposed linear signal model and (b) feasible region to retain the linear relationship.}
		\vspace{-0.6em}
		\label{fig:ModelVerification}
	\end{figure}

	\vspace{-0.5em}
	\subsection{Model Verification and Feasible Region}
	The nonlinear output of the PD of RAQ-MIMO is characterized by the equality of \eqref{eq:PhotodetectorOutputBCOD}, where the corresponding linear model is presented by the approximation of \eqref{eq:PhotodetectorOutputBCOD}. 
	To validate the effectiveness of the approximation, we characterize their accuracy in Fig. \ref{fig:ModelVerification}(a). Observe from this figure that the proposed linear approximation model fits the exact nonlinear model well in a large dynamic range. The accuracy significantly degrades as the received power of all user signals exceed the LO's amplitude of $-23.6$ dBV/m. We note that this phenomenon is in line with the experimental results of \cite{jing2020atomic}. 
	Furthermore, we explore the feasible region of exploiting this linear approximation model for multi-user RAQ-MIMO systems in Fig. \ref{fig:ModelVerification}(b). It is observed that the feasible region is above the boundary curves. For example, when the number of users supported in RAQ-MIMO systems is $K = 20$, the transmission distance of all users from the RAQ-MIMO receiver should be higher than $3$, $6$, and $30$ meters for the propagation environment having a path exponent of $2.0$, $3.8$, $6.0$, respectively, in order to guarantee the model's linearity. Since this condition is readily met in wireless communications, it reflects that our linear model offers a satisfactory accuracy for a wide range of communication scenarios. 

	\begin{figure*}[htbp!]
		\begin{minipage}[t]{0.64\linewidth}
			\centering
			\subfloat[]{
				\includegraphics[width=0.46\textwidth]{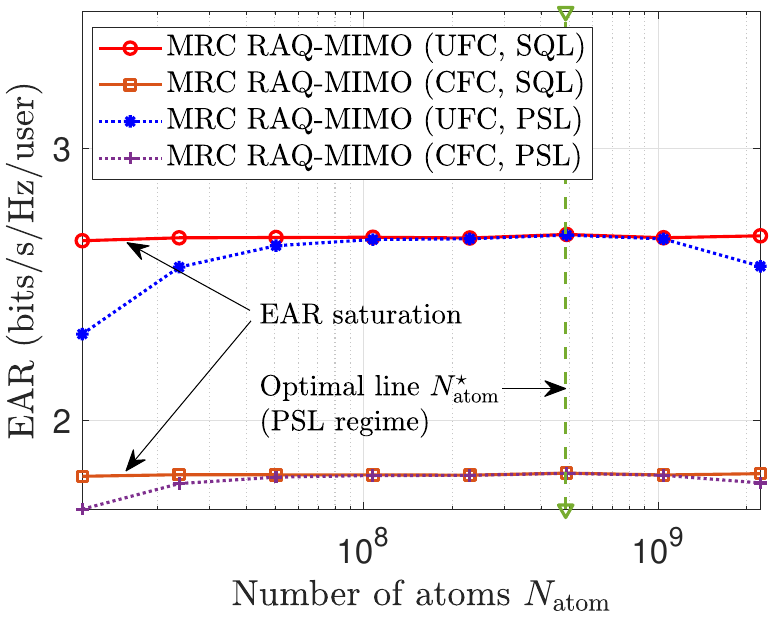}} \hspace{0.6em}
			\subfloat[]{
				\includegraphics[width=0.47\textwidth]{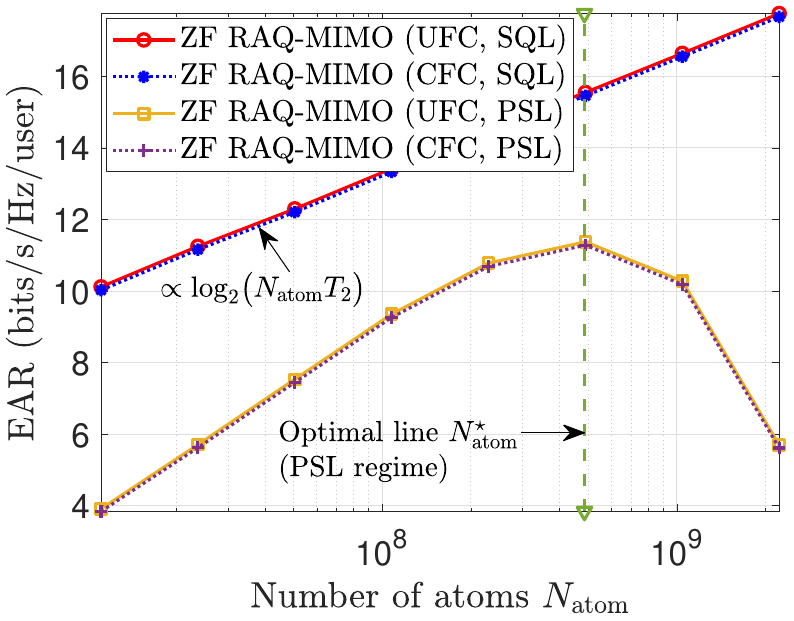}} \\ \vspace{0.6em}
			\subfloat[]{
				\includegraphics[width=0.51\textwidth]{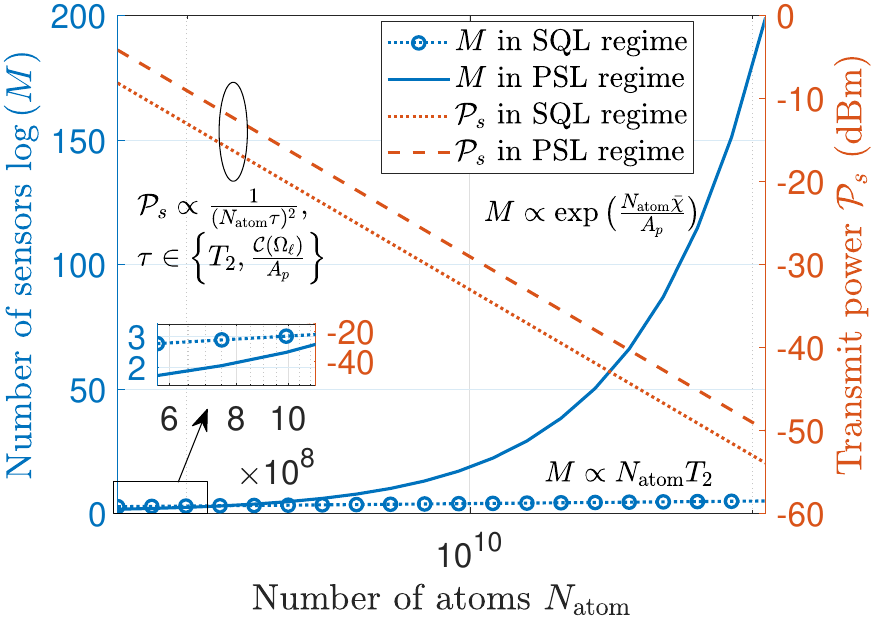}} 
			\subfloat[]{
				\includegraphics[width=0.46\textwidth]{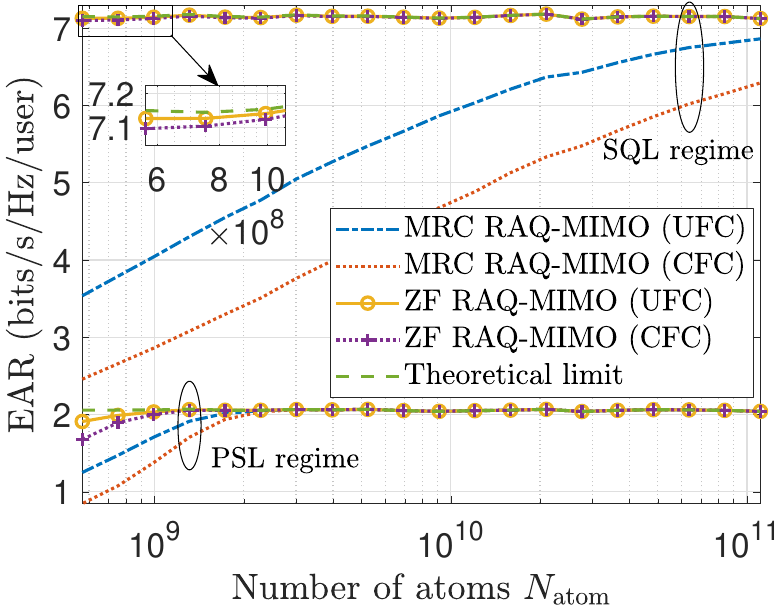}}
			\caption{The EAR scaling behavior of (a) the MRC RAQ-MIMO and (b) ZF RAQ-MIMO receivers in UFC/CFC scenarios. (c) The scaling behavior of ${\cal P}_{s}$ and $M$, as well as (d) the EAR scaling behavior when retaining ${\cal P}_{s} = {\cal E} / (N_{\text{atom}} \tau)^2$.}
			\label{fig:ScalingLaw}
			\vspace{-0.6em}
		\end{minipage} \;\;
		\begin{minipage}[t]{0.33\linewidth}
			\centering
			\subfloat[]{
				\includegraphics[width=0.94\textwidth]{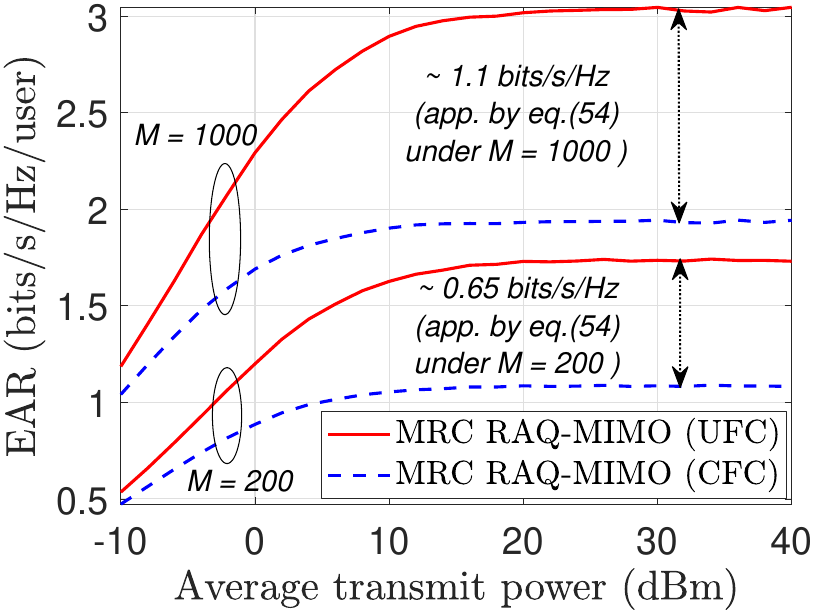}} \\ \vspace{0.6em}
			\subfloat[]{
				\includegraphics[width=0.93\textwidth]{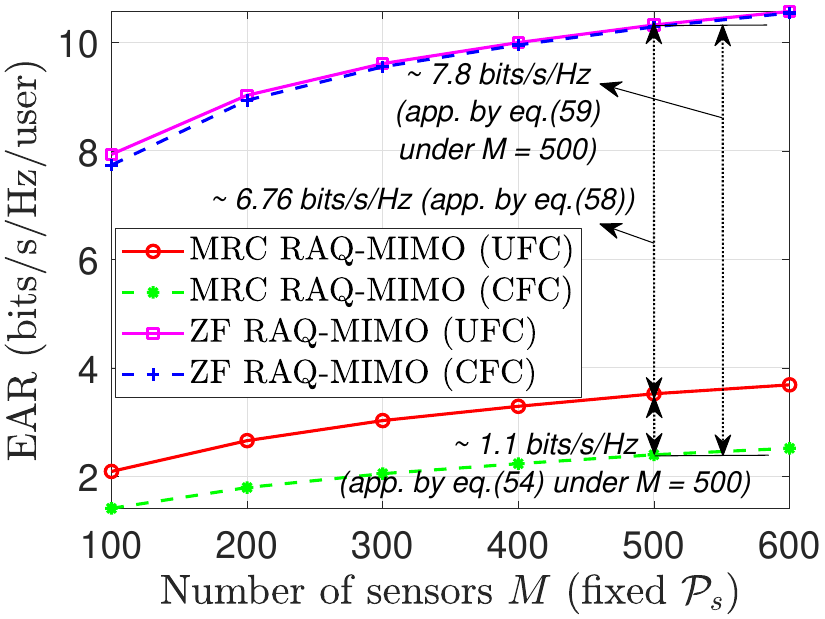}} 
			\caption{The EARs of MRC/ZF RAQ-MIMO receivers vs. (a) transmit power ${\cal P}_{s}$ and (b) number of sensors $M$ for fixed ${\cal P}_{s}$.}
			\label{fig:UFC_CFC_MRC_ZF_Comp}
			\vspace{-0.6em}
		\end{minipage}
	\end{figure*}

	\vspace{-0.5em}
	\subsection{Scaling Behavior of EAR and Power} 
	To understand the EAR scaling law, we characterize the EAR of both the MRC and ZF RAQ-MIMO receivers vs. the number of atoms $N_{\text{atom}}$ in Fig. \ref{fig:ScalingLaw}(a)(b), where $M=200$ and $K = 10$. We vary $N_{\text{atom}}$ logarithmically within the range $1.11 \times 10^7 \sim 2.225 \times 10^9$. When operating in the SQL regime, the MRC RAQ-MIMO curves of both the UFC and CFC scenarios become saturated as $N_{\text{atom}}$ increases, as seen in Fig. \ref{fig:ScalingLaw}(a). By contrast, the ZF RAQ-MIMO curves of both the UFC and CFC scenarios is proportional to $\log_{2}(N_{\text{atom}} T_2)$, as seen in Fig. \ref{fig:ScalingLaw}(b). The above numerical results are consistent with \textbf{Theorem} \ref{theorem2}. When operating in the PSL regime, the MRC RAQ-MIMO and ZF RAQ-MIMO curves of both the UFC and CFC scenarios increase to their maximum and then decrease, as indicated by the optimal line in Fig. \ref{fig:ScalingLaw}(a)(b). These numerical results are consistent with \textbf{Theorem} \ref{theorem3}. 
	
	Furthermore, we showcase the \textit{power scaling law} of the RAQ-MIMO receiver by presenting numerical results of both ${\cal P}_{s} = {\cal E} / (N_{\text{atom}} \tau)^2$ vs. $N_{\text{atom}}$ and $M$ vs. $N_{\text{atom}}$ in Fig. \ref{fig:ScalingLaw}(c), where we have ${\cal E} \triangleq 10 {\cal P}_{s}$. Upon configuring $C_{\text{sql}} = C_{\text{psl}} = 5$, we furthermore offer numerical results characterizing the MRC/ZF RAQ-MIMO EARs vs. $N_{\text{atom}}$ in Fig. \ref{fig:ScalingLaw}(d) under the configuration of Fig. \ref{fig:ScalingLaw}(c). We can observe from these figures that \textbf{(i)} the transmit power can be significantly reduced as $N_{\text{atom}}$ becomes large, allowing us to apply RAQ-MIMO for ultra-low power communications; \textbf{(ii)} The MRC/ZF RAQ-MIMO receivers operating in the SQL regime can achieve a significant EAR enhancement compared to its counterparts in the PSL regime, even though deploying $M \propto N_{\text{atom}} T_2$ sensors in the SQL regime compared to the exponential sensors deployment via $M \propto \exp \big( \frac{N_{\text{atom}} {\bar \chi}}{A_p} \big)$ in the PSL regime; \textbf{(iii)} As $N_{\text{atom}}$ becomes large, both the MRC and ZF RAQ-MIMO curves tend to converge to their corresponding theoretical limits in the SQL and PSL regime, respectively, as noted in \eqref{eq:PowerScalingLaw} of \textbf{Theorem} \ref{theorem4}.

	\vspace{-0.5em}
	\subsection{Comparisons of UFC to CFC and of MRC to ZF} 
	
	Firstly, we characterize the MRC RAQ-MIMO vs. the transmit power ${\cal P}_{s}$ relationship in Fig. \ref{fig:UFC_CFC_MRC_ZF_Comp}(a), where $K = 20$. As ${\cal P}_{s}$ increases, the MRC RAQ-MIMO curves of both the UFC and CFC scenarios become flat. The UFC scenario exhibits a higher EAR than the CFC case, yielding an increase of $\sim 0.65$ bits/s/Hz/user and $\sim 1.1$ bits/s/Hz/user under $M = 200$ and $M = 1000$, respectively. These improvements can be estimated by \eqref{eq:extraSE_MRC_Asmp}. Furthermore, we present the results of MRC/ZF RAQ-MIMO vs. $M$ in Fig. \ref{fig:UFC_CFC_MRC_ZF_Comp}(b) by fixing ${\cal P}_{s} = 23$ dBm and $K = 20$. We first observe that the ZF RAQ-MIMO of UFC and CFC scenarios gradually overlap as $M$ increases, which obeys \eqref{eq:extraSE_ZF_Asmp}. Furthermore, the ZF RAQ-MIMO of the UFC scenario significantly outperforms the MRC RAQ-MIMO of the UFC scenario, yielding an increase of $\sim 6.76$ bits/s/Hz/user under $M = 500$ that is theoretically guided by \eqref{eq:extraSE_UFC_Asmp}. The ZF RAQ-MIMO of the UFC scenario exhibits an extra EAR of $\sim 7.8$ bits/s/Hz/user over the MRC RAQ-MIMO of the CFC scenario when $M = 500$, which can be theoretically estimated by \eqref{eq:extraSE_CFC_Asmp}. 

	\begin{figure*}[t!]
		\centering
		\subfloat[]{
			\;\includegraphics[width=0.3\textwidth]{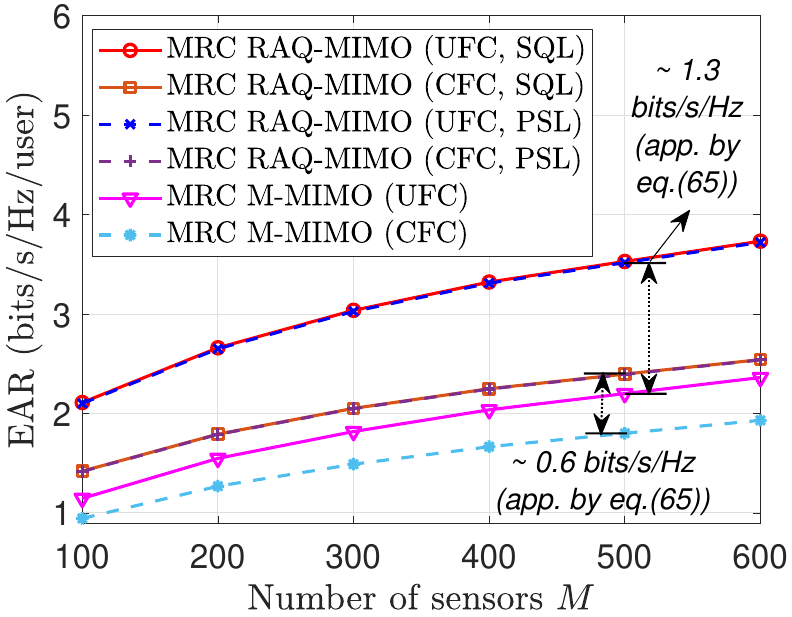}} \;
		\subfloat[]{
			\includegraphics[width=0.3\textwidth]{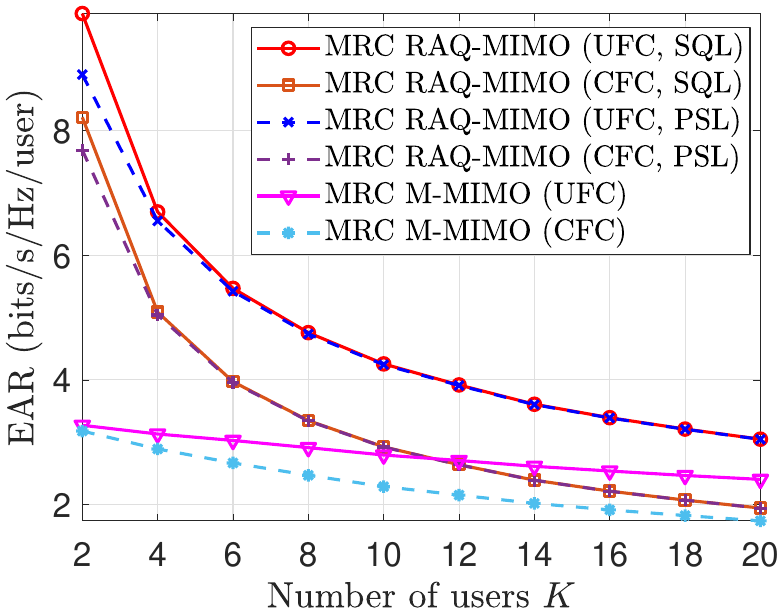}} \;
		\subfloat[]{
			\includegraphics[width=0.296\textwidth]{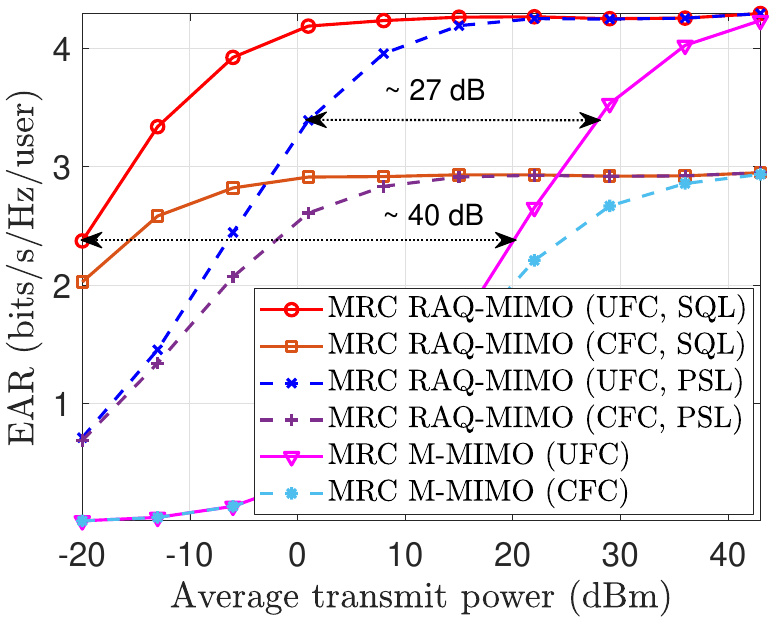}} \\ \vspace{0.6em}
		\subfloat[]{
			\includegraphics[width=0.296\textwidth]{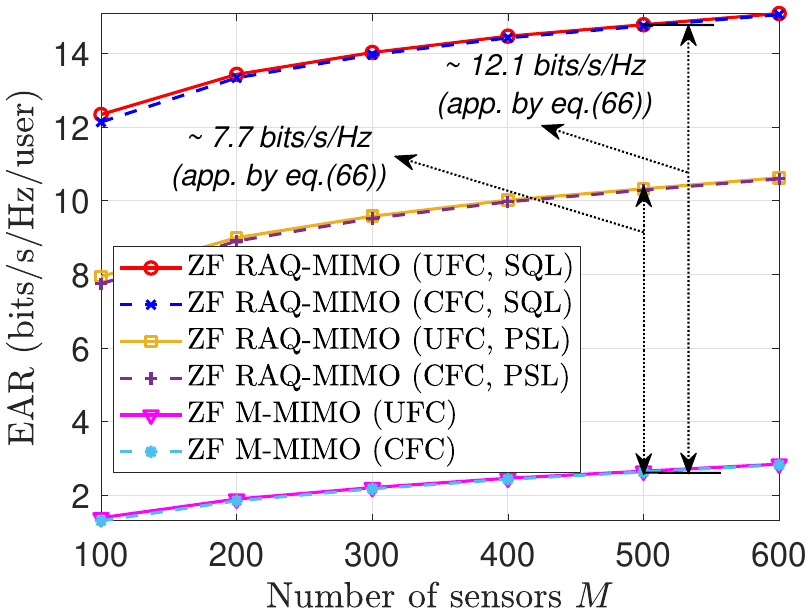}} \;
		\subfloat[]{
			\includegraphics[width=0.296\textwidth]{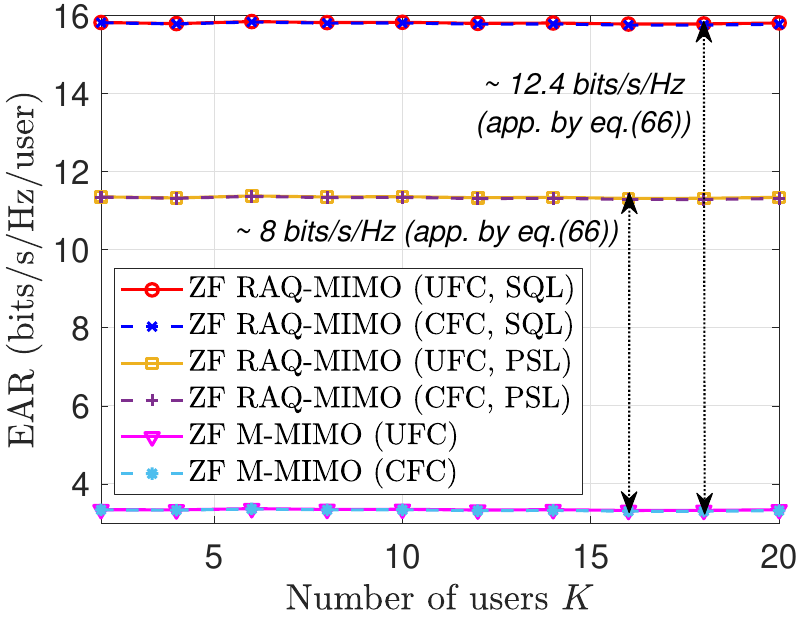}} \;
		\subfloat[]{
			\includegraphics[width=0.296\textwidth]{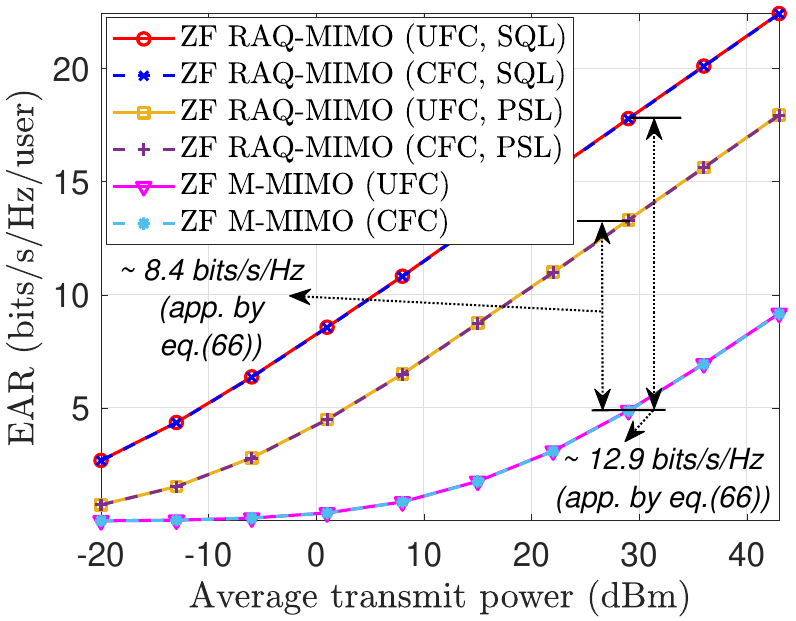}} 
		\caption{Comparison of RAQ-MIMO to M-MIMO receivers with respect to diverse parameters: (a)(d) EAR vs. number of sensors $M$; (b)(e) EAR vs. number of users $K$; (c)(f) EAR vs. transmit power ${\cal P}_{s}$.}
		\vspace{-0.6em}
		\label{fig:RAQ_MIM0_M_MIMO_Comp}
	\end{figure*}

	\vspace{-0.5em}
	\subsection{Comparison Between RAQ-MIMO and Classical M-MIMO} 
	In these comparisons, we present the RAQ-MIMO receiver in the \textbf{SQL} and \textbf{PSL} regimes. The MRC and ZF schemes are portrayed in Fig. \ref{fig:RAQ_MIM0_M_MIMO_Comp}(a)(b)(c) and Fig. \ref{fig:RAQ_MIM0_M_MIMO_Comp}(d)(e)(f), respectively. 
	
	As observed from Fig. \ref{fig:RAQ_MIM0_M_MIMO_Comp}(a), where we have ${\cal P}_{s} = 23$ dBm and $K = 10$, the MRC RAQ-MIMO receiver operating in the SQL and PSL regimes outperforms similarly to the MRC M-MIMO receiver. This is because the IUI cannot be sufficiently suppressed by the MRC scheme, hence the extreme sensitivity of RAQ-MIMO cannot be exploited. This trend becomes more apparent in Fig. \ref{fig:RAQ_MIM0_M_MIMO_Comp}(b), where ${\cal P}_{s} = 23$ dBm and $M = 1000$. The curves of both the MRC RAQ-MIMO and MRC M-MIMO receivers move closer to each other as $K$ increases. The advantage of MRC RAQ-MIMO becomes more obvious for low transmit power scenarios, as shown in Fig. \ref{fig:RAQ_MIM0_M_MIMO_Comp}(c). The MRC RAQ-MIMO receiver allows the transmit power of users to be $\sim 40$ dB and $\sim 27$ dB lower than that of the MRC M-MIMO receiver in the SQL and PSL regimes. By contrast, as portrayed in Fig. \ref{fig:RAQ_MIM0_M_MIMO_Comp}(d)(e)(f), the ZF RAQ-MIMO receiver always outperforms the ZF M-MIMO receiver in terms of different $M$, $K$, and ${\cal P}_{s}$ in the SQL and PSL regimes. The extra EARs achieved in the SQL and PSL regimes are on the order of $12$ bits/s/Hz/user and $8$ bits/s/Hz/user, respectively. 

	\begin{table}[t!]
		\footnotesize
		\renewcommand{\arraystretch}{1.2}
		\caption{\textsc{Comparison of RAQ-MIMO to M-MIMO receivers with respect to the transmit power.}}
		\label{tab:power_reduction}
		\centering
		\tabcolsep = 0.1cm
			\begin{tabular}{|c|c|c|c|c|c|}
				\rowcolor{cyan!10} 
				\hline \textbf{Schemes} & \multicolumn{4}{|c|}{ \textbf{Power (dBm)} } & \tabincell{c}{\textbf{Power reduction (dB)}} \\
				\hline Classical M-MIMO & 0 & 10 & 20 & 30 & 0 \\
				\hline RAQ-MIMO (PSL) & -26.5 & -16.5 & -6.5 & 3.5 & 26.5 \\
				\hline RAQ-MIMO (SQL) & -40 & -30 & -20 & -10 & 40 \\
				\hline
			\end{tabular}
	\end{table}
	
	We also present the reduction of users' transmit power and the increase of the transmission distance in TABLE \ref{tab:power_reduction} and Fig. \ref{fig:Power_Dist_Comp}, respectively. As seen from TABLE \ref{tab:power_reduction}, when achieving the same EAR, the RAQ-MIMO receiver supports a reduction of $\sim 40$ dB and $\sim 26.5$ dB of the users' transmit power in the SQL and PSL regimes, respectively, compared to M-MIMO receivers. When using the same transmit power for the RAQ-MIMO and M-MIMO systems, the former supports a farther transmission distance than the latter, as shown in Fig. \ref{fig:Power_Dist_Comp}. Specifically, we consider $2.0 \le \nu \le 6.0$, covering diverse propagation environments. When $\nu = 6.0$, the transmission distances supported by the RAQ-MIMO receiver are $\sim 4.6$-fold and $\sim 2.8$-fold farther than that of M-MIMO receivers in the SQL and PSL regimes, respectively.  The transmission distance of RAQ-MIMO receivers can be further improved in free-space propagation ($\nu = 2.0$), where the RAQ-MIMO receiver can realize $100$-fold and $21$-fold longer distances than M-MIMO receivers in the SQL and PSL regimes, respectively.

	\begin{figure}[!t]
		\centering
			\includegraphics[width=0.4\textwidth]{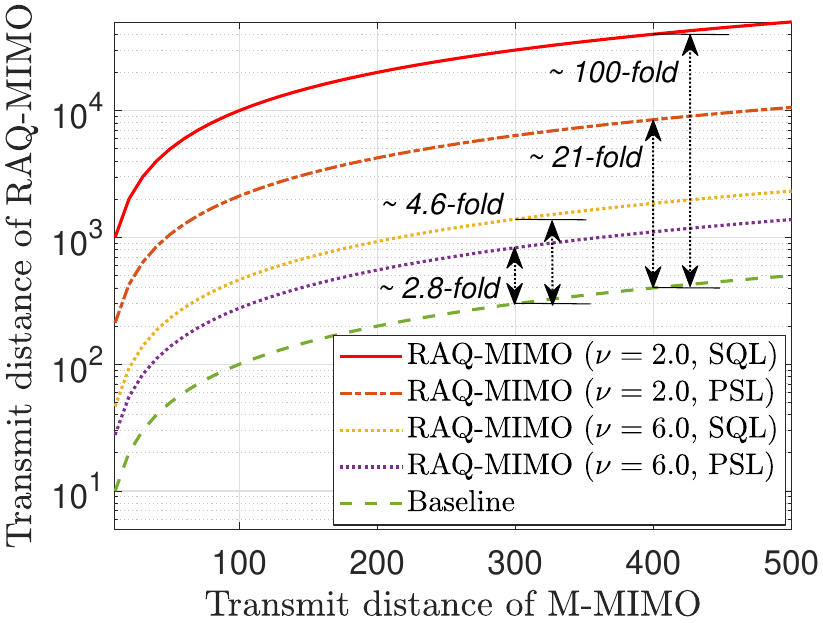}
		\caption{Comparison of RAQ-MIMO to M-MIMO receivers with respect to the transmission distance under the same EAR.}
		\vspace{-0.6em}
		\label{fig:Power_Dist_Comp}
	\end{figure}

	\section{Discussions}
	\label{sec:Discussions}

	
	\textbf{Practical imperfection}: 
	The assumption of perfect CSI and clock synchronization, as well as zero carrier frequency offset (CFO) in this article represents an idealized simplifying condition, which is adopted to isolate and highlight the intrinsic quantum-optical EAR limitations of the proposed RAQ-MIMO architecture, while avoiding any confusion with the uncertainties arising from these practical imperfections. In real-world applications, imperfect CSI, clock drift, and CFO remain unavoidable and inevitably degrades the performance of RAQ-MIMO receivers, such as the bit-error-rate and throughput. 
	But fortunately, the constructed RAQ-MIMO model in this article constitutes a quantum-to-classical transformation framework, encapsulating the quantum-optical parameters \eqref{eq:Gain} and \eqref{eq:Phase}, while offering a classical interface \eqref{eq:SignalModel_MatrixForm_SC} that is convenient for harnessing classical signal processing techniques for RAQ-MIMO systems. This formulation allows the application of a family of classical pilot-based channel estimation, clock synchronization, and CFO correction methods to RAQ-MIMO schemes \cite{coleri2002channel,vasudevan2015coherent,Vasudevan23}, and also facilitates the theoretical analysis on the impact of practical imperfections. These aspects constitute an important future research direction for facilitating RAQ-MIMO's practical applications.

	\textbf{Heterogeneous robustness}: 
	In practice, user distributions become more complicated, especially in large-scale heterogeneous deployment scenarios, which may result in significant impacts on the spatial correlation structure and, consequently, the overall system performance. 
	However, this article mainly considers narrowband RAQ-MIMO schemes, where their atomic quantum response neither distorts nor couples with the multipath channel. Hence the RAQ-MIMO front-end behaves as an approximately linear and channel-independent component, as modeled in \textcolor{red}{(21)} and \textcolor{red}{(24)}, respectively. Based on this, the channel variations due to user distribution and the RAQ-MIMO receiver can be studied separately. Therefore, the diverse user distributions do not affect the correctness of our analysis framework. This allows us to focus on new characteristics of RAQ-MIMO receivers, while isolating the influence of external wireless channels (diverse user distributions). 
	Additionally, the Weichselberger channel model adopted provides a general framework that can be systematically extended to incorporate more realistic heterogeneous user distributions \cite{wang2022uplink}. Such extensions will be explored in future work, particularly to evaluate the robustness of the RAQ-MIMO paradigm in large-scale, heterogeneous deployment scenarios.

	\textbf{Large-scale scalability}: 
	Supporting RAQ-MIMO to have a large number of sensors may be realized through several feasible approaches. Specifically, a single stabilized laser may be distributed to multiple vapor cells through beam-splitting and beam-shaping networks, such as diffractive optical elements, micro-lens arrays, or fiber splitters, ensuring phase-coherent and frequency-aligned probe beams. Additionally, spatial light modulators or multi-core fibers further facilitate flexible beam distribution across sensor arrays. Furthermore, advances in integrated photonics and microfabricated vapor-cell arrays allow low cost, size, weight, and power (C-SWaP) aware designs, as well as low-crosstalk implementations with shared optical paths. Since all RAQ-MIMO channels share a common optical carrier, imaging-based or array-integrated photodetection may be used for a unified optical readout, eliminating the inefficient one-to-one PD pairing.

	\textbf{Implementation concerns}: 
	This article is primarily focused on the theoretical modeling and performance analysis of RAQ-MIMO receivers without delving deeply into practical impairments and environmental variations that may arise in real-world deployments. Specifically, several related technical considerations are outlined below: \textbf{(i)} Atomic density and dynamic ambient temperature stabilization are essential for maintaining stable optical depth and ensuring balanced sensitivity across different RAQ-MIMO channels. \textbf{(ii)} Phase-coherence management is required to guarantee stable microwave reference sources and to enable active phase-locking schemes for synchronizing multi-channel responses. \textbf{(iii)} Scalable implementation of large numbers of simultaneous Rydberg sensors calls for compact optical architectures and integrated vapor-cell array designs. Our theoretical results can be viewed as a beneficial best-case reference for realistic applications when addressing the above-mentioned realistic imperfections.

	\section{Conclusions}
	\label{sec:Conclusions}
	
	In this article, we have conceived and studied the novel RAQ-MIMO scheme, where a flexible receiver array is employed for assisting classical multi-user uplink transmissions. We have also constructed the corresponding equivalent baseband signal model. The proposed scheme and signal model pave the way for future system design and signal processing. We have also studied the EARs of MRC/ZF RAQ-MIMO receivers in both UFC and CFC scenarios, where we have derived closed-form expressions of the asymptotic EAR. Based on these analytical results, we have unveiled the scaling law of the EAR and transmit power, respectively. We have also performed detailed comparisons for UFC and CFC scenarios, MRC RAQ-MIMO and ZF RAQ-MIMO receivers, and between RAQ-MIMO and M-MIMO receivers, respectively. Our simulation results have verified the accuracy of the proposed signal model and the superiority of the RAQ-MIMO receivers. 


		\appendices
		\section{Proof of \eqref{eq:PassbandRFplusLO} and \eqref{eq:Amp_PassbandRFplusLO}}
		\label{Appendix:RFsuperpositionProof}
		
		The power of $Z_{m} (t)$ can be obtained from its equivalent baseband signal $z_{m}(t) = {y_{m}}(t) + \sum_{k=1}^{K} {x_{m,k}}(t) {e^{\jmath 2\pi {f_\delta }t}}$ through ${\cal P}_{z,m} = z_{m}^{2}\left( t \right) = \left[ {y_{m}}(t) + \sum_{k=1}^{K} {x_{m,k}}(t) {e^{\jmath 2\pi {f_\delta }t}} \right]$ ${\left[ {y_{m}}(t) + \sum_{k=1}^{K} {x_{m,k}}(t) {e^{\jmath 2\pi {f_\delta }t}} \right]^*}$. It is further derived as   
		\begin{align}
			\nonumber
			{\cal P}_{z,m} 
			&= {\cal P}_{\ell,m} + \sum_{k_1=1}^{K} \sum_{k_2=1}^{K}  \sqrt{ {\cal P}_{x,m,k_1} {\cal P}_{x,m,k_2} } + \sum_{k=1}^{K} \sqrt{ {\cal P}_{\ell,m} } \\
			\nonumber
			&\;\; \times \sqrt{ {\cal P}_{x,m,k} } 
			\big[ {{e^{\jmath \left( {2\pi {f_\delta }t + {\theta_{\delta,m,k} }} \right)}} + {e^{ - \jmath \left( {2\pi {f_\delta }t + {\theta_{\delta,m,k} }} \right)}}} \big] \\
			\nonumber
			&= {\cal P}_{\ell,m} + \sum_{k_1=1}^{K} \sum_{k_2=1}^{K}  \sqrt{ {\cal P}_{x,m,k_1} {\cal P}_{x,m,k_2} } + 2 \sum_{k=1}^{K} \sqrt{ {\cal P}_{\ell,m} } \\
			\nonumber
			&\;\; \times \sqrt{ {\cal P}_{x,m,k} } \cos \left( {2\pi {f_\delta }t + {\theta_{\delta,m,k} }} \right). 
		\end{align}
		Therefore, we can derive the expression of the amplitude $U_{z,m}$ in the form of \eqref{eq:Amp_PassbandRFplusLO}. 
		
		We then write $z_{m}(t) = {y_{m}}(t) + \sum_{k=1}^{K} {x_{m,k}}(t) {e^{\jmath 2\pi {f_\delta }t}} = \sqrt{\mathcal{P}_{\ell,m}}{e^{\jmath {\theta_{\ell,m}}}} + \sum_{k=1}^{K} \sqrt{\mathcal{P}_{x,m,k}}{e^{\jmath {\theta_{x,m,k}}}}{e^{\jmath 2\pi {f_\delta }t}}$ in the form of the real part plus the imaginary part as follows
		\begin{align}
			\nonumber
			&\hspace{-0.5em} {z_m}\left( t \right) = \sqrt{\mathcal{P}_{\ell,m}} \cos {{\theta_{\ell,m}}} + \sum_{k=1}^{K} \sqrt{\mathcal{P}_{x,m,k}} \cos \left( {2\pi {f_\delta }t + {\theta_{x,m,k}}} \right) \\
			\nonumber
			&+ \jmath \left[ \sqrt{\mathcal{P}_{\ell,m}} \sin {{\theta_{\ell,m}}} + \sum_{k=1}^{K} \sqrt{\mathcal{P}_{x,m,k}} \sin \left( {2\pi {f_\delta }t + {\theta_{x,m,k}}} \right) \right].
		\end{align}
		We first denote the phase of $z_m(t)$ by $\theta_{z,m}$. 
		We then obtain $\tan(\theta_{z,m}) \approx \tan(\theta_{\ell,m})$, as seen in \eqref{eq:phase}. Therein, $(1)$ holds by using $A\sin(x) + B\sin(y) = (A+B)\sin(\frac{x+y}{2})\cos(\frac{x-y}{2}) + (A-B)\cos(\frac{x+y}{2})\sin(\frac{x-y}{2})$ and $A\cos(x) + B\cos(y) = (A+B)\cos(\frac{x+y}{2})\cos(\frac{x-y}{2}) - (A-B)\sin(\frac{x+y}{2})\sin(\frac{x-y}{2})$, respectively. The approximation $(2)$ holds relying on $\sqrt{\mathcal{P}_{\ell,m}} \gg \sum_{k=1}^{K} \sqrt{\mathcal{P}_{x,m,k}}$.
		\begin{figure*}
			\begin{align}
				\label{eq:phase}
				&\tan {\theta_{z,m}} 
				\overset{(1)}{=} \frac{ \left[ \begin{array}{l} \sum\limits_{k=1}^{K} \left( {\sqrt{\mathcal{P}_{x,m,k}} + \frac{1}{K} \sqrt{\mathcal{P}_{\ell,m}}} \right)\sin \left( {\frac{{2\pi {f_\delta }t + {\theta_{x,m,k}} + {\theta_{\ell,m}}}}{2}} \right)\cos \left( {\frac{{2\pi {f_\delta }t + {\theta_{x,m,k}} - {\theta_{\ell,m}}}}{2}} \right) \\
						+ \sum\limits_{k=1}^{K} \left( {\sqrt{\mathcal{P}_{x,m,k}} - \frac{1}{K} \sqrt{\mathcal{P}_{\ell,m}}} \right)\cos \left( {\frac{{2\pi {f_\delta }t + {\theta_{x,m,k}} + {\theta_{\ell,m}}}}{2}} \right)\sin \left( {\frac{{2\pi {f_\delta }t + {\theta_{x,m,k}} - {\theta_{\ell,m}}}}{2}} \right) \end{array} \right] }{ \left[ \begin{array}{l} \sum\limits_{k=1}^{K} \left( {\sqrt{\mathcal{P}_{x,m,k}} + \frac{1}{K} \sqrt{\mathcal{P}_{\ell,m}}} \right)\cos \left( {\frac{{2\pi {f_\delta }t + {\theta_{x,m,k}} + {\theta_{\ell,m}}}}{2}} \right)\cos \left( {\frac{{2\pi {f_\delta }t + {\theta_{x,m,k}} - {\theta_{\ell,m}}}}{2}} \right) \\
						- \sum\limits_{k=1}^{K} \left( {\sqrt{\mathcal{P}_{x,m,k}} - \frac{1}{K} \sqrt{\mathcal{P}_{\ell,m}}} \right)\sin \left( {\frac{{2\pi {f_\delta }t + {\theta_{x,m,k}} + {\theta_{\ell,m}}}}{2}} \right)\sin \left( {\frac{{2\pi {f_\delta }t + {\theta_{x,m,k}} - {\theta_{\ell,m}}}}{2}} \right) \end{array} \right] } \\
				\nonumber
				&\overset{(2)}{\approx} \frac{ \sum\limits_{k=1}^{K} \left[ \sin \left( {\frac{{2\pi {f_\delta }t + {\theta_{x,m,k}} + {\theta_{\ell,m}}}}{2}} \right)\cos \left( {\frac{{2\pi {f_\delta }t + {\theta_{x,m,k}} - {\theta_{\ell,m}}}}{2}} \right) - \cos \left( {\frac{{2\pi {f_\delta }t + {\theta_{x,m,k}} + {\theta_{\ell,m}}}}{2}} \right)\sin \left( {\frac{{2\pi {f_\delta }t + {\theta_{x,m,k}} - {\theta_{\ell,m}}}}{2}} \right) \right] }{ \sum\limits_{k=1}^{K} \left[ \cos \left( {\frac{{2\pi {f_\delta }t + {\theta_{x,m,k}} + {\theta_{\ell,m}}}}{2}} \right)\cos \left( {\frac{{2\pi {f_\delta }t + {\theta_{x,m,k}} - {\theta_{\ell,m}}}}{2}} \right) + \sin \left( {\frac{{2\pi {f_\delta }t + {\theta_{x,m,k}} + {\theta_{\ell,m}}}}{2}} \right)\sin \left( {\frac{{2\pi {f_\delta }t + {\theta_{x,m,k}} - {\theta_{\ell,m}}}}{2}} \right) \right] } = \tan {\theta_{\ell,m}}. 
			\end{align}
			\hrulefill
		\end{figure*}
		Upon applying the Taylor series expansion to $U_{z,m}$ in \eqref{eq:Amp_PassbandRFplusLO}, we obtain 
		\begin{align}
			\nonumber
			{U_{z,m}} \approx U_{\ell,m} + \sum\limits_{k=1}^{K} U_{x,m,k} \cos( 2 \pi f_{\delta} t + \theta_{\delta,m,k} ).
		\end{align}
		The proofs of \eqref{eq:PassbandRFplusLO} and \eqref{eq:Amp_PassbandRFplusLO} are completed.

		\section{Proof of \eqref{eq:AchievableRate_MRC_Asymptotic_CFC}}
		\label{Appendix:SINRratio_CFC}
		To prove \eqref{eq:AchievableRate_MRC_Asymptotic_CFC}, we have to first observe how $\frac{1}{M} {\rm{Tr}} \left( \bm{\varLambda}^{2} \right)$ varies as $M$ tends to infinity. Since $\frac{1}{M} {\rm{Tr}} \left( \bm{\varLambda}^{2} \right) = \frac{1}{M} {\rm{Tr}} \left( \bm{R}^{*} \bm{R} \right)$ and $\bm{R} = [r_{|m_1 - m_2|}] \in \mathbb{C}^{M \times M}$, $m_1, m_2 = 0, 1, \cdots, M-1$ is a symmetric real-valued Toeplitz matrix, we hence obtain the diagonal elements of $\bm{R}^{*} \bm{R}$ as follows 
		\begin{align}
			\nonumber
			{e_1} = {e_M} = \sum\limits_{m = 0}^{M - 1} {r_m^2} \;\; {\rm{and}} \;\;
			{e_k} = r_0^2 + \sum\limits_{n = 1}^{k - 1} {r_n^2}  + \sum\limits_{m = 1}^{M - k} {r_m^2},
		\end{align}
		for $2 \le k \le M - 1$. 
		We observe that ${e_k}$ is bounded as follows 
		\begin{align}
			\nonumber
			r_0^2 + \sum\limits_{m = 1}^{M - 1} {r_m^2} 
			\le {e_k} \le \left\{ 
			\begin{aligned}
				&r_0^2 + 2\sum\nolimits_{m = 1}^{ \lceil \frac{M}{2} \rceil - 1 } {r_m^2}, & {\text{odd}} \; M, \\
				&r_0^2 + 2\sum\nolimits_{m = 1}^{\frac{M}{2} - 1} {r_m^2} + r_{\frac{M}{2}}^2, & {\text{even}} \; M.
			\end{aligned} \right. 
		\end{align}
		Therefore, we have $\mathsf{LB} \le \frac{ {\rm{Tr}} \left( \bm{\varLambda}^{2} \right) }{M} = \frac{1}{M} \sum_{k = 1}^M {{e_k}} \le \mathsf{UB}$, where 
		\begin{align}
			\nonumber
			\mathsf{LB} &= \sum\limits_{m = 0}^{M - 1} J_0^2 (\varpi m) 
			\overset{(3)}{\approx} r_0^2 + \sum_{m = 1}^{M - 1} \left( \frac{1}{\pi \varpi m} \right) \cos^2 \left( \varpi m - \frac{\pi}{4} \right) \\
			\nonumber
			&= r_0^2 + \left( \frac{1}{\pi \varpi} \right) \sum_{m = 1}^{M - 1} \frac{1}{m}  + \left( \frac{1}{\pi \varpi} \right) \sum_{m = 1}^{M - 1} \frac{ \sin \left( 2 \varpi m \right) }{m} \\
			\label{eq:LB}
			&\overset{(4)}{\approx} r_0^2 + \frac{1}{\pi \varpi} \left[ \epsilon + \ln \left( M - 1 \right) \right] + \left( \frac{1}{2\varpi} - \frac{1}{\pi} \right). 
		\end{align}
		The equality $(3)$ of \eqref{eq:LB} is obtained by exploiting $J_0 (\varpi m) \approx \sqrt{ \frac{2}{\pi \varpi m} } \cos ( \varpi m - \frac{\pi}{4} )$ \cite[Ch. 5]{boyce2021elementary}. The equality $(4)$ of \eqref{eq:LB} is derived by harnessing the harmonic series $\lim_{M \rightarrow \infty}$ $\sum_{m = 1}^{M - 1} \frac{1}{m} = \ln(M-1) + \epsilon$ with $\epsilon \approx 0.577$ and the trigonometric series $\lim_{M \rightarrow \infty} \sum_{m = 1}^{M - 1} \frac{ \sin \left( m x \right) }{m} = \frac{\pi - x}{2}$ \cite[Ch. 1.44]{gradshteyn2014table}. 
		Upon obeying a similar derivation process as \eqref{eq:LB}, we obtain $\mathsf{UB}$ for odd $M$ and even $M$ as follows 
		\begin{align}
			\nonumber
			&\mathsf{UB}_{\rm{odd}} 
			\approx r_0^2 + \frac{2}{\pi \varpi}\left[ \epsilon + \ln \left( \left\lceil \frac{M}{2} \right\rceil - 1 \right) \right] + \left( \frac{1}{\varpi} - \frac{2}{\pi} \right), \\
			\nonumber
			&\mathsf{UB}_{\rm{even}}  
			\approx r_0^2 + \frac{2}{\pi \varpi} \left[ \epsilon + \ln \left( \frac{M}{2} - 1 \right) \right] + \left( \frac{1}{\varpi} - \frac{2}{\pi} \right) \\
			\label{eq:UB}
			&\qquad \qquad \quad + \frac{4}{\pi \varpi} \left[ \frac{{1 + \sin \left( {\varpi M} \right)}}{M} \right]. 
		\end{align}
		Upon defining $\zeta = \mathsf{UB}$, we prove $(a)$ of \eqref{eq:AchievableRate_MRC_Asymptotic_CFC}.


\bibliographystyle{IEEEtran}
\bibliography{IEEEabrv,references} 

\end{document}